\newcommand{\citep}{\cite}
\newtheorem{theorem}{Theorem}
\newtheorem{lemma}{Lemma}
\newtheorem{proposition}{Proposition}
\newtheorem{definition}{Definition}
\newcommand{\commentout}[1]{}
\newcommand{\prs}{\par}
\newenvironment{proof}{\noindent{\bf Proof}:}{\mbox{$\Box$}
\prs\vspace{3mm}\prs}
\newcounter{myenumeratecounter}
\newtheorem{examplehidden}{Example}
\newenvironment{example}{
\begin{examplehidden}
\em
}
{\end{examplehidden}}
\newcommand{\pval}{\ensuremath{{\tt pval}}}
\newcommand{\ecol}{\ensuremath{\textsc{\rm S}}}
\newcommand{\cA}{\ensuremath{\mathcal A}}
\newcommand{\cB}{\ensuremath{\mathcal B}}
\newcommand{\cE}{\ensuremath{\mathcal E}}
\newcommand{\cF}{\ensuremath{\mathcal F}}
\newcommand{\cH}{\ensuremath{\mathcal H}}
\newcommand{\cI}{\ensuremath{\mathcal I}}
\newcommand{\cP}{\ensuremath{\mathcal P}}
\newcommand{\cY}{\ensuremath{\mathcal Y}}
\newcommand{\reals}{{\mathbb R}}
\newcommand{\loss}{\ensuremath{\text{\sc loss}}}
\newcommand{\pnasfont}{\sffamily\small\bfseries}
\newcommand{\bloss}[3]{\ensuremath{L_{#1}(#2,#3)}}
\renewcommand{\loss}[2]{\ensuremath{L(#1,#2)}}
\newcommand{\snull}{\ensuremath{\underline{ 0}}}
\newcommand{\alt}{\ensuremath{\underline{ 1}}}
\newcommand{\Hnull}{\ensuremath{\cH(\snull)}}
\newcommand{\Halt}{\ensuremath{\cH(\alt)}}
\newcommand{\mean}{\ensuremath{\theta}}
\newcommand{\meanplus}{\ensuremath{\mean^{+}}}
\newcommand{\meanmin}{\ensuremath{\mean^{-}}}
\newcommand{\cd}{\ensuremath{\text{\sc cd}}}
\newcommand{\Ltheta}{\ensuremath{\theta_{\textsc{L}}}}
\newcommand{\Rtheta}{\ensuremath{\theta_{\textsc{R}}}}
\newcommand{\cs}{\ensuremath{\text{\sc cs}}}
\newcommand{\lr}{\ensuremath{\text{\sc lr}}}
\newcommand{\np}{\ensuremath{\text{\sc np}}}
\newcommand{\Snp}[1]{\ensuremath{S^{\textsc{np}(#1)}}}
\newcommand{\Slr}{\ensuremath{S^{\textsc{lr}}}}
\newcommand{\Rmean}{\ensuremath{\theta_{\textsc{R}}}}
\newcommand{\Lmean}{\ensuremath{\theta_{\textsc{L}}}}
\renewcommand{\pval}{\ensuremath{\textsc{p}}}
\begin{document}

\title{Beyond Neyman-Pearson: \\ e-values enable hypothesis testing with a data-driven alpha}
\author[12]{Peter Gr\"unwald}
\affil[1]{Centrum Wiskunde \& Informatica, Amsterdam, The Netherlands}
\affil[2]{Leiden University, Leiden, The Netherlands}
\date{\today}

% Please add a significance statement to explain the relevance of your work
%\significancestatement{Throughout the applied sciences, statistical evidence is still routinely expressed by the p-value, a concept that has come under much scrutiny in recent years. We propose to mention an e-value instead. We show that unlike p-values, e-values provide valid performance guarantees if the costs for making wrong decisions only become known post-hoc, after data were observed, and even if these costs change over time. In particular, e-values allow for practically meaningful guarantees {\em without a pre-specified significance level $\alpha$}, offering a take on statistical testing that radically alters the classical Neyman-Pearson paradigm. Similarly, e-based {\em confidence intervals\/} remain meaningful without pre-specified $\alpha$.  Practically useful (high `e-power') e-values exist for a still quite limited yet quickly expanding number of domains.}

% Please include corresponding author, author contribution and author declaration information
%\authorcontributions{Please provide details of author contributions here.}
%\authordeclaration{Please declare any competing interests here.}
%\equalauthors{\textsuperscript{1}A.O.(Author One) contributed equally to this work with A.T. (Author Two) (remove if not applicable).}

% At least three keywords are required at submission. Please provide three to five keywords, separated by the pipe symbol.

\maketitle
%\thispagestyle{firststyle}
%\ifthenelse{\boolean{shortarticle}}{\ifthenelse{\boolean{singlecolumn}}{\abscontentformatted}{\abscontent}}{}
% If your first paragraph (i.e. with the \dropcap) contains a list environment (quote, quotation, theorem, definition, enumerate, itemize...), the line after the list may have some extra indentation. If this is the case, add \parshape=0 to the end of the list environment.maxima
%have argued that mentioning $\pval$ really has no place in NP theory. Yet the 
%\dropcap{W}
\begin{abstract}
 A standard practice in statistical hypothesis testing is to mention the p-value alongside the accept/reject decision. We show the advantages of mentioning an e-value instead. With p-values, it is not clear how to use an extreme observation (e.g. $\pval \ll \alpha$) for getting better frequentist decisions. With e-values it is straightforward, since they provide  Type-I risk control in a generalized Neyman-Pearson setting with the decision task (a general loss function) determined post-hoc, after observation of the data --- thereby providing a handle on `roving $\alpha$'s'. When Type-II risks are taken into consideration, the only admissible decision rules in the post-hoc setting turn out to be e-value-based. Similarly, if the loss incurred when specifying a faulty confidence interval is not fixed in advance, standard confidence intervals and distributions may fail whereas e-confidence sets and e-posteriors still provide valid risk guarantees. Sufficiently powerful e-values have by now been developed for a range of classical testing problems. We discuss the main challenges for wider development and deployment. 
 %MAYBE The {\em quasi-conditional paradigm\/} that results from replacing p by e has implications for the foundations of statistics.
\end{abstract}

We 
perform a null hypothesis test with significance level $\alpha$ and we observe a p-value $\pval \ll \alpha$. Why aren't we allowed to say ``we have rejected the null at level $\pval$''? While a continuous source of bewilderment to the applied scientist, professional statisticians understand the reason: to get  a Type-I error probability guarantee of $\alpha$ --- a cornerstone of the Neyman-Pearson (NP) theory of testing --- we must set $\alpha$ in advance. But this immediately raises another question: why should the p-value  be mentioned at all in scientific papers, next to the reject/accept decision for the pre-specified $\alpha$ \citep{Berger03,Hubbard04}?
The prevailing attitude is to accept this standard practice, on the grounds that  it ``provides more information'' --- as explicitly stated by, for example, Lehmann  \cite{Lehmann93}, one of NP theory's main contributors. But this is  problematic: there is nothing in NP theory to tell us what the decision-theoretic consequences of `$\pval \ll \alpha$' could be, whereas at the same time,  the fundamental motivation behind NP theory {\em is\/} decision-theoretic: according to \cite{Neyman50B},  ``{\em {\rm [all of]} mathematical statistics deals with problems relating to performance characteristics of rules of inductive behavior {\rm [i.e. decision rules]} based on random experiments}''. 
%it states that statistical procedures should have as their output decisions, and %the rationale is one of {\em inductive behavior\/} CITE NEYMAN: if one %follows, study after study, the decisions provided, one will not make a mistake %too often, or incur, on average, a small loss. 
There is no simple way though to translate observation of a $\pval$ with $\pval \ll \alpha$ into better decisions: as is well-known and reviewed below (\eqref{eq:pvalb} and \eqref{eq:calibrate}), intuitive and common decision-theoretic interpretations of $\pval \ll \alpha$ are usually just wrong. We are therefore faced with a standard practice in NP testing that, according to (strict, behaviorist) NP theory, is not part of mathematical statistics! 
\commentout{
This conundrum is sometimes  by stating that $\pval$  is a meaningful measure of `evidence against the null', even if it has no decision-theoretic consequences. In light of the developments below, this seems a rather   is highly problematic --- as has been forcefully argued by many, p-values have properties that are at odds with any reasonable definition of `evidence'; see e.g. \citep{royall1997statistical} and the many references therein.}
\paragraph{E as an alternative for P} 
%Our first claim is as radical as it is simple: w
In our main result, Theorem~\ref{thm:mainsimple}, we show that this issue can be resolved {\em by mentioning e-values rather than p-values\/} next to the accept/reject decision. E-values \cite{GrunwaldHK19B,VovkW21,Shafer19,wasserman2020universal,ramdas2023savi} are a recently popularized alternative for p-values that are related to, but far more general than, likelihood ratios.
%; for simple (but only for simple) null hypotheses $\Hnull$ they reduce to %likelihood ratios between the null and a simple distribution $Q$ that may be %related to the alternative $\Halt$ in various ways. 
Importantly, as reviewed in Example~\ref{ex:nplrevar} below, for any NP test with the accept/reject-decision based on a p-value, the exact same test can be implemented by basing the decision on an e-value. Thus there is no a priori reason why one should accompany the decision of a NP test with a p-value rather than an e-value. But, in contrast to the p-value, the e-value has a clear decision-theoretic justification that remains valid if decision tasks are formulated {\em post-hoc}, i.e. after seeing, and in light of, the data. Concretely, after the result of a study has been published, and when new circumstances  prevail, one conceivably might contemplate different actions, with different associated losses, than originally planned. For example, a study about vaccine efficacy ({\sc ve}) in a pandemic may have been set up as a test between null hypothesis $\text{\sc ve} \leq 30\%$ and alternative $\text{\sc ve} \geq 50\%$ \cite{SchureG22}. The original plan was to vaccinate all people above 60 years of age if the null is rejected. But suppose the null  actually gets rejected with a very small p-value $\ll \alpha$, and at the same time the virus' reproduction rate may be much higher than anticipated. Based on {\em both\/} the observed data (summarized by $\pval$) {\em and\/} the changed circumstances, one might now contemplate a new action, vaccinate everyone over 40, with higher losses if the alternative is false and higher pay-offs if it is true. E-values can be used unproblematically for such a post-hoc formulated decision task; p-values cannot. A second example is simply the fact that scientific results are {\em published\/} and remain on record so as to be useful for future deployment:
a company contemplating to produce medication $X$ 
may find a publication about the efficacy of $X$ that is, 
say, 15 years old. Back then, in two independent studies the null (no efficacy) was rejected at the given $\alpha=0.05$, 
but producing $X$ would have been prohibitively expensive so this finding was not acted upon. But recently the company managed a technological breakthrough making production of $X$ much cheaper.
Had $\alpha$ been smaller than $0.01$, they would now decide to take $X$ into production. But now suppose that in both original studies, $\pval < 0.01$ yet $\alpha=0.05$. The upshot of Example~\ref{ex:unknownloss}, Proposition~\ref{prop:typeIsafety} and Theorem~\ref{thm:mainsimple} below is that, if one had observed $S^{-1} < 0.01$ for an e-variable $S$ then acting anyway, despite the changed circumstances is {\em Type-I risk safe}, in the precise sense of \eqref{eq:TypeIrisksafe} below; but doing this based on $\pval < 0.01$ is unsafe in the sense that no clear risk (performance) bounds can be given when engaging in such behavior. 

\paragraph{From Testing to Estimation with Confidence: the e-posterior}
The medication $X$ example is too simplistic: it only deals with rejecting the null of `no efficacy'. In reality one wants to take effect sizes into account as well when making decisions.  Section~\ref{sec:eposterior} shows that e-value methods extend to that setting as well.
%The $\pval \ll \alpha$ question has a counterpart in estimation with confidence intervals.  
Upon observing data from a  statistical model with parameter of interest $\theta$, the question now becomes how to properly interpret the statement ``$\theta \in \textsc{cs}_{\alpha}$'', where $\textsc{cs}_{\alpha}$ is a $(1-\alpha)$-confidence set, usually an interval. The correct, basic interpretation only says that, when repeatedly performing studies, the true parameter will lie in $\textsc{cs}_{\alpha}$ in a fraction of about $1-\alpha$ studies. But practitioners want more, and indeed, $\textsc{cs}$'s are often given an evidential interpretation --- one outputs not one but a system of confidence intervals, one for each of a series of coefficients such as $80\%, 90\%, 95\%, 99\%$, or even a full {\em confidence distribution\/} \cite{Cox58,SchwederH16} and this, it is said ``{\em summarizes what the data tell us about $\theta$, given the model}'' \citep[page 227]{CoxH74} or ``the information about the parameter'' \citep{Lehmann59}. As our second  contribution, we show there is  benefit in replacing standard $\cs$'s by e-$\cs$'s, and confidence distributions by {\em e-posteriors\/} \cite{Grunwald23}: again, these stand on firmer decision-theoretic ground.
\commentout{As we argue in ..., replacing such a such an evidential interpretation is problematic, and highly problematic though. Illustrations abound \citep{royall1997statistical} and include the famous setting  of \cite{Cox58} in which  optimal  (minimal width) confidence intervals may depend on an independent coin flip that is totally external to the experiment being performed. 
Interpretational problems concerning `evidence' are sometimes dismissed as vague, but as we show in Example~\ref{ex:prebexample} and~\ref{ex:precexample}, they translate into serious {\em practical\/} problems once we deal with post-hoc determined decision tasks as above.
%
%Just like in the `$\pval \ll \alpha$' scenario, outputting a system of \textsc{ci}s really has %no place in NP theory, since it is not clear what its decision-theoretic consequences could be. % HIER ZEG IK EIGENLIJJK DAT CONFIDENCE DISTRS GEEN PLEK HEBBEN IN NP MISSCHIEN AL %VOORUITVERWIJZEN

Our second main claim  is that these issues can be resolved by replacing standard \textsc{cs}'s  by special $\textsc{cs}$'s  based, once again, on e-values --- the recently popularized {\em anytime-valid\/} \textsc{cs}s \citep{darling1967confidence,howard2018uniformb}  being a special case. 
To see how, first note that standard \textsc{cs} systems as above can be conveniently represented by a single data-dependent {\em confidence distribution (\cd) \/} on $\Theta$ \citep{SchwederH16}; for some models this coincides with the Bayesian posterior in an {\em objective-Bayes\/} analysis \citep{Berger06}. 
Similarly systems of e-value based  \textsc{ci}s can be represented by  a single data-dependent function on $\Theta$ that we will call an {\em e-posterior\/} (without the word `distribution'  attached, since technically it isn't). In Example~\ref{ex:prebexample} and~\ref{ex:precexample} we show that using the
$\cd$   to guide decisions against standard loss functions can have bad consequences if the loss function is chosen in a post-hoc, data-dependent way: the loss one  expects to make, according to the confidence distribution, may be much smaller than the {\em actual\/} expected loss, which may even be infinite. In contrast, the loss one expects to make according to the e-posterior with the associated decision rule gives a correct upper-bound-in-expectation on  the actual expected loss --- no matter what the true parameter is. }
\commentout{
For example, suppose we analysed vaccine efficacy as above via estimation rather than testing, and it was found (as it indeed was for the Pfizer/BioNTech vaccine \citep{SchureG21}) that efficacy was a whopping $95\%$, much higher than anticipated, with a narrow confidence interval around it. Then one might start contemplating even more drastic actions such as vaccinating the whole population. Again, the space of contemplated actions and their associated losses is {\em dependent\/} on the observed data and as will be seen, e-posteriors deal with such post-hoc losses better than standard \textsc{ci}s do. 
}
\paragraph{BIND Assumption underlying p-values and standard CSs}
While so far we highlighted the problems with post-hoc determined loss functions, below  we show that decisions based on \textsc{p}'s (Section~\ref{sec:gnp}) and \textsc{cs}'s (Section~\ref{sec:eposterior}) may already become unsafe, in the Type-I risk sense of \eqref{eq:TypeIrisksafe}, as soon as the decision task involves a `Type-I' loss function that  can take on more than two values, %(so that its expectation is not a probability), 
even if this loss function {\em is\/} determined in advance. Essentially, we can only be sure that decisions based on \textsc{p}'s and \textsc{cs}'s are reliable if both (1) the loss function is binary-valued (B) and (2), it is determined in advance, or at least independently (IND) of the observed data. Thus, they really operate under a BIND (binary $+$ independence) assumption. E-values and -posteriors lead to decisions that retain Type-I risk safety if BIND is violated. 

\paragraph{Technical Contribution and Contents}
To obtain frequentist guarantees without BIND we first need to reformulate NP testing in terms of losses and risks rather than errors and error probabilities, an idea going  back to Wald's seminal 1939 paper introducing statistical decision theory \citep{Wald39}. But while Wald lets go off the Type-I/II error distinction as soon as he allows for more than two actions, we stick with Type-I and Type-II risks (replacing Type-I and Type-II error probabilities, respectively)  and show that the e-value is then the natural statistic to base decisions upon, and remains so if the decision task is determined post-hoc. Thus, our {\em GNP\/} ({\em Generalized Neyman-Pearson\/}) Theory follows a path opened up by Wald but apparently not pursued further thereafter. 
In Section~\ref{sec:gnp} we informally present this reformulation, show how $\pval$-based procedures get in trouble if BIND is violated, introduce e-values and explain how, when combined with a {\em maximally compatible decision rule}, they guarantee Type-I risk safety even without BIND. Section~\ref{sec:math} then formalizes the reasoning and presents our main result, Theorem~\ref{thm:mainsimple}. 
Among all Type-I risk safe decision rules, we aim only for those that have {\em admissible\/} Type-II risk behavior; 
%TODOREVISE misschien weg
we call a rule admissible if there exists no other decision rule that is never worse and sometimes strictly better. Theorem~\ref{thm:mainsimple}, which has the flavour of a  {\em complete class theorem\/} 
\citep{BergerB85,DuanmuR21}  shows that, under mild regularity conditions, 
the set of admissible decision rules are precisely those that are based on some e-variable $S$ via a maximally compatible decision rule. 
Section~\ref{sec:eposterior} extends our findings to confidence intervals and distributions (\cd's).  \cd's can be replaced by e-posteriors, a novel notion treated in much more detail in my recent paper \cite{Grunwald23}, which may be be viewed as a companion to this one, more oriented towards a Bayesian-inclined readership. 
%CHANGE FOLLOWING
%Finally, we briefly connect our findings to the {\em %evidential\/} (Fisherian rather than Neymanian) interpretation of %p-values in Section~\ref{sec:evidence}, and 
\paragraph{An Important Caveat} Systematic development of e-values 
has only started very recently (in 2019). While a lot of progress has been made, and by now useful ($\approx$ powerful) e-values are available for a number of practically important parametric and nonparametric testing and estimation problems, there is still an enormously wide range of problems for which p-values --- systematically developed since the 1930s --- exist yet e-values have not yet been developed. We briefly review initial success stories and current challenges in Section~\ref{sec:stateoftheart}, 
informing the final Section~\ref{sec:discussion} which indicates the way forward and  re-interprets our findings as establishing a {\em quasi-conditional\/} paradigm.  All longer mathematical derivations and proofs are delegated to the Supporting Information Appendix (SI). 
\section{Generalized Neyman-Pearson Theory }\label{sec:gnp}
\subsection{Losses instead of Errors}
\label{sec:losses}
 In the basic NP setting, we observe data $Y$ taking values in some set $\cY$, with both the null hypothesis $\Hnull$ and the alternative $\Halt$ being represented as collections of distributions for $Y$.
 %(the underscore notation will become useful when we extend the setting in Section~\ref{sec:eposterior}. 
NP \cite{NeymanP33} tell us to fix some $\alpha$ and then adopt the decision rule that, among all decision rules with Type-I error bounded by $\alpha$, minimizes the Type-II error. Following Wald \cite{Wald39}, we
re-interpret this procedure in terms of a nonnegative loss function $\loss{\cdot}{\cdot}$, with $\loss{\kappa}{a}$ denoting the loss made by action $a$ if $\kappa$ is the true state of nature. 
We have 
$\kappa \in \{\snull,\alt\}$ and $\cA = \{ 0,1\}$, 
$\kappa=\snull$ representing that the null is correct, $\kappa=\alt$ that the alternative is correct, $a = 0$ standing for `accept' and $a =1$ for `reject' the null. We invariably assume 
$\loss{\snull}{1} > \loss{\snull}{0} \geq 0, \loss{\alt}{0} > \loss{\alt}{1} \geq 0$. 
`Of course' (as Wald writes) we may want to set $\loss{\snull}{0} = \loss{\alt}{1} = 0$ and we will do this for now, but it is not required for the subsequent developments.
In this formulation, the usual $\alpha$-Type-I error guarantee is replaced by an $\ell$-{\em Type-I risk guarantee\/}. Formally, we fix an $\ell$ in advance of observing the data and we say that decision rule $\delta$ (i.e. a test), defined as a function from $\cY$ to $\cA$, is {\em Type-I risk safe\/} if 
\begin{align}\label{eq:TypeIrisksafe}
& \sup_{P_0 \in \Hnull} {\bf E}_{Y \sim P_0} [\loss{\snull}{\delta(Y)}] \leq \ell, \end{align}
where, for $j=0,1$, $P_j \in \cH(\underline{j})$,  ${\bf E}_{Y \sim P_j} [\loss{\snull}{\delta(Y)}]$ is called the {\em risk of $P_j$}, i.e. the expected loss under $P_j$. 
Following NP again, with again `error probability' replaced by `risk', we now postulate that among all Type-I risk safe decision rules $\delta$,  we ideally want to pick one that has small {\em worst-case Type-II risk\/}
\begin{align}\label{eq:typeIIrisk}
\sup_{P_1 \in \Halt} {\bf E}_{Y \sim P_1} [\loss{\alt}{\delta(Y)}].
\end{align}
%If such a Type-I safe $\delta$ minimizing (\eqref{eq:typeIIrisk}) exists, we call it the {\em %generalized Neyman-Pearson %(GNP)\/} decision rule.
\eqref{eq:TypeIrisksafe} expresses that, whatever we decide, we want to make sure that our risk (expected loss) under the null is no larger than $\ell$.
In a standard level-$\alpha$ test, one rejects the null if $\pval(y)$, the p-value corresponding to data $y$, satisfies $\pval(y) \leq \alpha$.
%for some significance level $\alpha$ that must be set in advance. 
A corresponding decision rule in terms of loss functions is to reject the null whenever the observed $\pval(y)$ satisfies 
\begin{align}\label{eq:pvalpre}
    \pval(y) \cdot \loss{\snull}{1} \leq \ell.
\end{align}
We get exactly the same behavior as for the standard level $\alpha$-test if we set  $\loss{\snull}{1} = \ell/\alpha$. For example, for  $\alpha= 0.05$ we can set $\ell=1$ and then $\loss{\snull}{1} := 20$; then, just like in NP testing \eqref{eq:pvalpre} tells us to pick a $\delta^{\circ}$ which rejects the null if $\pval \leq 0.05$. If $\pval$ is defined so that $\delta^{\circ}$  is UMP (uniformly most powerful), then combined with any loss function $\loss{\alt}{0} >0$, $\delta^{\circ}$ will also minimize the worst-case Type-II risk \eqref{eq:typeIIrisk} among all $\delta$ that satisfy Type-I error probability $\leq \alpha$: up until now we have merely reformulated standard NP theory.
\paragraph{Actions of Varying Intensity} But now suppose we have {\em more than two\/} actions available. For example, consider four alternative actions: accept the null (retain the status quo), take mild action (e.g. vaccinate all people over 60), take more drastic action (vaccinate everyone over 40) and extreme action (vaccinate the whole population). %Our first contribution is to 
We consider this question, too,  in terms of Type-I and Type-II risk and confidence --- thereby taking a different direction than standard decision theory.
%MAYBE and in particular \cite{Wald39}, who switches to non-Neyman-Pearsonian decision rules as soon as  more than two actions are in play. 
For example, our action space could now be $\cA_b = \{0,1,2, 3 \}$  with loss function  $\bloss{b}{\snull}{0} =0, \bloss{b}{\snull}{1}=20 \ell$, $\bloss{b}{\snull}{2} = 100\ell$, $\bloss{b}{\snull}{3} = 500 \ell$ and $\bloss{b}{\alt}{3} < \bloss{b}{\alt}{2} <  \bloss{b}{\alt}{1} <  \bloss{b}{\alt}{0} = \ell$. More generally, as long as  Type-I loss is increasing in $a$ and Type-II loss is decreasing, such an extension of the NP setting makes intuitive sense. 

In terms of p-values, the straightforward extension of \eqref{eq:pvalpre} to this multi-action case would be to play action $a$ where $a$ is the largest value such that \begin{equation}\label{eq:pvalb}
    \pval(y) \cdot \bloss{b}{\snull}{a} \leq \ell.
\end{equation}
But, assuming our p-value is strict so that it has a uniform distribution under the null,  this gives a Type-I risk of 
\begin{multline}\label{eq:losstoolarge}
{\bf E}_{Y \sim P_0}[\bloss{b}{\snull}{\delta(Y)}] = 
\left(\frac{1}{20} - \frac{1}{100}\right) \cdot 20 \ell +  \left(\frac{1}{100}
- \frac{1}{500} \right)
\cdot 100 \ell 
+ \frac{1}{500}  \cdot 500 \ell 
= 2.6 \ell,
\end{multline}
violating the guarantee we aimed to impose and showing that a naive p-value based procedure does not work. The problem gets exacerbated if we allow for more than four actions: in the SI we show that the expected loss of the naive procedure \eqref{eq:pvalb} may go to $\infty$ as we add additional actions with $\bloss{b}{\snull}{a}$ increasing and $\bloss{b}{\alt}{a}$ decreasing in $a$. There we also show that an obvious `fix', namely modifying  \eqref{eq:pvalb} to make sure that for each action $a$, $\bloss{b}{\snull}{a}$ gets multiplied by exactly the probability that action $a$ is taken, does not solve this issue.

\paragraph{Post-Hoc Loss Functions}
Allowing more than two actions is really just a warm-up to a further extension which arguably better models what often happens in, for example, medical practice: 
the post-hoc determination or modification of a decision task, after seeing the data and dependent on the data, such as in the vaccine efficacy example in the introduction.
That is, there is really an underlying class (whose definition may be unknowable) of loss functions $\bloss{b}{\cdot}{\cdot}$ with associated action spaces $\cA_b$,
\commentout{

re are {\em researchers}, who perform scientific studies;  there are {\em policy developers}, who look at the available evidence from the study and at external circumstances and formulate some possible actions, and then finally there is a team of  {\em decision-makers\/}  that have to make a decision, i.e. perform an action (in practice, two or even all three  groups may sometimes coincide). 
For example, a research group may find from a clinical trial that the efficacy of a newly developed vaccine is at least $50\%$, at a significance level $\alpha= 0.01$. 
Then, a country's {\em health council\/} may suggest possible actions (such as vaccination of a  part of the population) afterwards, possibly in complex circumstances that were not anticipated during the trial (e.g. the disease is spreading faster than was hoped for; or public support for lock-downs is declining; or a new variant of the virus is emerging). This is posed to the} 
and the 
decision-maker (DM) is posed a particular decision task $L_b(\cdot,\cdot)$ where $b$, indexing the loss actually used, is really the outcome of a random variable $B=b$, whose distribution may depend on the data in all kinds of ways. The actual $B=b$ that is presented is thus random and only fixed {\em after\/} the study result has become available; i.e. `post-hoc'.  Crucially, the process determining the actual value of $B$ is typically murky; nobody knows exactly what loss function would have been considered in what alternative circumstances; DM only knows the loss function finally arrived at. 

Again, with p-values, we might be tempted to pick the largest action $a$  such that \eqref{eq:pvalb} holds,  
where now $b$ is really the (observed, known) outcome of  random variable $B$ whose definition is itself unknown. 
%Both researchers and policy makers will have in general no clear idea how $L_B$ %would be like, had the data been different from what it actually was. 
Now, even if for each $b$, $L_b$ allows for only two actions, so that the problem superficially resembles the standard NP setting, using \eqref{eq:pvalb} can have disastrous consequences in the post-hoc setting, as the following example shows.
\begin{example}\label{ex:unknownloss} Suppose there are three loss functions $L_b$, for $b \in \cB = \{1,2,3\}$, with corresponding actions 
$\cA_b = \{0,b\}$. We set 
$\bloss{1}{\snull}{1} = 20\ell, \bloss{2}{\snull}{2} = 100\ell, \bloss{3}{\snull}{3} =500\ell $, $\bloss{b}{\snull}{0} = 0, \bloss{b}{\alt}{0} := \ell$ for all $b \in \cB$, and $\bloss{b}{\alt}{b}$ strictly decreasing in $b$.
This is like the previous example, but rather than always being able to choose one among four actions, the very set of choices  that is presented to  DM via setting $B=b$ might depend on the data $Y$ or on external situations.
One cannot rule out that this is done in an unfavourable manner --- if the data suggest strong evidence then the policy developers (e.g. a pandemic outbreak management team) might only suggest actions with drastic consequences. Suppose, for example, that if $\pval> 0.02$, the DMs are presented loss $L_1$; if $0.001 < \pval \leq 0.02$ they are presented loss $L_2$; and if $\pval \leq 0.001$ they are presented loss $L_3$. Using \eqref{eq:pvalb}, we then get (assuming again uniform $\pval$) a Type-I risk of 
\begin{multline}
{\bf E}_{Y \sim P_0}[\loss{\snull}{\delta(y)}] = 
(0.05- 0.02) \cdot 20 \ell +  (0.02-0.001) \cdot 100 \ell + 0.001 \cdot 500 \ell=
3 \cdot \ell. \nonumber
\end{multline}
As in \eqref{eq:losstoolarge} the resulting decision rule \eqref{eq:pvalb} is not Type-I risk safe, and again, the Type-I risk can even go to infinity with the number of potential actions. 
\end{example}
\subsection{E-Values to the Rescue}
\label{sec:erescue} 
Reporting evidence as e-values (as defined by \eqref{eq:basic}) rather than p-values solves both the multiple action and post-hoc-loss issue identified above. 
An e-{\em value\/} is the value of a special type of statistic called an e-{\em variable}. An e-variable is any {\em nonnegative\/}
random variable $S=S(Y)$ that can be written as a function of the observed $Y$ and that satisfies the inequality:
%\vspace{-0.5em}
\begin{equation}\label{eq:basic}
\text{for all $P \in \Hnull$:}\ \ {\bf E}_{P}[S] \leq 1.
%\vspace{-0.2em}
\end{equation}
The e-variable's simplest application is in defining tests: the $S$-based hypothesis test at level $\alpha$ is defined to reject the null iff $S \geq 1/\alpha$. Since for any e-variable $S$, all $P \in \Hnull$, by Markov's inequality, $P(S \geq 1/\alpha) \leq \alpha)$, with such a test we get a Type-I error guarantee of $\alpha$, with the advantage that (as shown by \cite{GrunwaldHK19B,VovkW21}) unlike with p-values, the Type-I error guarantee remains valid under {\em optional continuation}, i.e. deciding based on a study result whether new studies should be undertaken and if so,  multiplying the corresponding e-values.  The term `e-variable' was coined in 2019 \cite{GrunwaldHK19B,VovkW21} but their history is older, as described by \cite{GrunwaldHK19B,ramdas2023savi}. 

We may now simply pick any e-variable $S$ we like and replace decision rule \eqref{eq:pvalb} by the following {\em 
maximally compatible\/} alternative rule:
%{\em maximal e-value based decision rule\/}:
upon observing data $Y=y$ and loss function indexed by $B=b$ with accompanying maximum imposed risk bound $\ell$, select the {\em largest\/} $a$ for which
\begin{align}\label{eq:evaldr}
 %&    \text{\ } 
 S^{-1}(y) \cdot \bloss{b}{\snull}{a} \leq \ell, %\nonumber \\ &  
 \text{\ i.e. \ }
    \bloss{b}{\snull}{a} \leq S(y) \cdot \ell,
    \end{align}
where
%MAYBE here and in the sequel we write (with minimal abuse of notation) $S(y)$ for the value that $S$ takes upon observation $Y=y$  and 
we adopt the (in our setting harmless) convention that, for  $u= 0$ and $v \geq 0$,  $u^{-1} v := 0$ if $v=0$ and $u^{-1} v = \infty$ if $v> 0$ (in Section~\ref{sec:discussion} we discuss where $\ell$ comes from).
Theorem~\ref{thm:mainsimple} below gives conditions under which \eqref{eq:evaldr} has a 
unique solution. 
%ly defined.
%; if not, we can pick any $a$ satisfying (\eqref{eq:evaldr})
For the original  NP setting of two actions, \eqref{eq:evaldr} is simply the p-value based rule \eqref{eq:pvalb} with $\pval$ replaced by $1/S$, illustrating that 
{\em large\/}
e-values correspond to evidence against the null. But in contrast to the p-value based rule, 
%this one keeps being Type-I risk safe irrespective of the number of actions: as we show in %Theorem~\ref{thm:secondgnp} below: 
with the e-based rule, no matter what e-variable $S$ we take (as long as it is itself chosen before data are observed), no matter how many actions $\cA$ contains, no matter the process determining the loss $B$, we have the Type-I risk guarantee \eqref{eq:TypeIrisksafe} (Theorem~\ref{thm:mainsimple} below): replacing $\pval$ by $1/S$
resolves the BIND problem.
%, at least for Type-I risk; it is also beneficial for Type-II risk as explored in Section~\ref{sec:TypeII}.
Of course, this raises the question whether p-values cannot be used safely for Type-I risk after all, in a manner different from \eqref{eq:pvalb}. The only such method we know of 
%MAYBE (and, in light of Theorem~\ref{thm:mainsimple} below, we suspect the only method per se), 
is to first convert a p-value into an e-value and then use 
\eqref{eq:evaldr} after all. As discussed in the SI, the e-values resulting from such a conversion are usually suboptimal, so we prefer to design and use e-values directly.  

%Moreover, we show that the Type-I risk safe decision rules are exactly those that satisfy %\eqref{eq:pvalb} with the p-value replaced by an inverse e-value. 

\begin{example}{\pnasfont [The NP and LR E-Variables]}\label{ex:nplrevar}
As with p-values, many different e-variables can be defined for the same $\Hnull$. As discussed by \cite{Shafer19}, an extreme choice is to start with a fixed level $\alpha$ and p-value $\pval$ and to set $\Snp{\alpha} := (1/\alpha)$ if $\pval \leq \alpha$ and $\Snp{\alpha}=0$ otherwise. Clearly 
${\bf E}_{Y \sim P_0}[ \Snp{\alpha}] \leq \alpha (1/\alpha) = 1$ so $\Snp{\alpha}$ is an e-variable. 
In the case of a classical, 2-action NP problem as defined underneath \eqref{eq:pvalpre}, the test \eqref{eq:evaldr} based on e-variable $S= \Snp{\alpha}$ will lead to $a=1$ (reject the null) exactly iff the classical NP test based on $\pval$ does. This shows that any $\pval$-based NP test can also be arrived at using \eqref{eq:evaldr} with a special e-value: nothing is lost by replacing p-values with e-values. Still, in case there are more than 2 actions and/or post-hoc decisions, while  preserving the $\ell$-Type-I risk guarantee, decisions based on  $\Snp{\alpha}$ may not be very wise in the Type-II risk sense. For example, with the loss function used in \eqref{eq:losstoolarge} and $\alpha = 0.05$, we get that even for very small underlying $\pval$ (i.e. extreme data), we will still choose action $1$ whereas it seems more reasonable to select more extreme actions, minimizing Type-II loss, as the evidence against the null gets stronger. 
%Similarly,  if we use  $\Snp{0.01}$ we will never play action $1$, whereas a %reasonable decision rule definitely might play action $1$ for some $y$. .
%for given
%e-variable $S$ and $0 \leq \alpha \leq 1$, we define
%the {\em threshold test corresponding to $E$ with significance level
%	$\alpha$}, as the test that rejects $\Hnull$ iff $S \geq 1/\alpha$. 
In case $\Hnull = \{P_0\}$ and $\Halt =\{P_1\}$ are simple, this can be achieved by taking $S$ to be a {\em likelihood ratio\/}: assuming $P_j$ has density $p_j$,
%relative to some $\mu$,
\begin{equation}
    \label{eq:lr}
    \Slr := \frac{p_1(Y)}{p_0(Y)} 
\end{equation}
which is immediately seen to be an e-variable: 
\begin{equation}\label{eq:eproof}
{\bf E}_{P_0}[\Slr]= \int p_0(y) \frac{p_1(y)}{p_0(y)} d\mu =\int p_1(y) d\mu = 1,
\end{equation}
i.e. to satisfy \eqref{eq:basic}. We can compare $\Snp{\alpha}$ and $\Slr$ if $\pval$ underlying $\Snp{\alpha}$ is itself a monotonic function of the likelihood ratio $\Slr$, as it will be for the standard optimal power NP test. In the decision task above \eqref{eq:pvalb}, when used in \eqref{eq:evaldr},  $\Snp{\alpha}$ can, for each $\alpha$, select at most 2 actions whereas    $\Slr$  leads to selection of action $0,1,2$ or $3$ depending on the amount of evidence, at the price of imposing a larger threshold before any particular action is selected compared to the $S^{\alpha}$ that is optimal for that action (e.g. $S^{0.05}$ is optimal for action $1$ in this sense). We will see more sophisticated  e-variables in Example~\ref{ex:prebexample} and refer to numerous further examples of useful e-variables in Section~\ref{sec:stateoftheart}.
\end{example}
\section{Mathematical Formalization and Results }\label{sec:math}
\subsection{Type-I Risk Safety and Compatibility}\label{sec:TypeI} 
Let $\Hnull$, the null hypothesis, be a set of probability distributions for random $Y$ taking values in a {\em outcome  space $\cY$}.
%Henceforth we  tacitly assume that the set $\cY$, as well as all sets mentioned in the definition and discussion below are equipped with an appropriate $\sigma$-algebra; (details are given the SI Appendix). 
\begin{definition}\label{def:gnptest}
A {\em GNP (Generalized Neyman-Pearson) testing problem\/}  relative to $\cH(\snull)$ is  a tuple $(\cB,\{(\cA_b, L_b(\snull,\cdot):  \cA_b \rightarrow \reals^+_0): b \in  \cB \})$ where  for all $b \in \cB$,  
%$\cA_b$ is a subset of $[0,\infty]$ and the 
we call $\bloss{b}{\snull}{\cdot}$ the {\em Type-I loss indexed by $b$ with action space $\cA_b$}.
%, , is strictly increasing in $a\in \cA_b$.
%with $L_b(\snull,0)=0$. 
\end{definition}
%All examples in this section are simple, and our main result, Theorem~\ref{thm:mainsimple} is stated below for simple testing problems. 
%In practice we will often set $L_b(\snull,0) :=0$ for all $b\in \cB$, but this is not a requirement. 
In Section~\ref{sec:eposterior} we extend the definition to uncertainty quantification beyond testing.
%general (`nonsimple') problems, and in the SI we state and prove a generalization of Theorem~\ref{thm:mainsimple} for such general testing problems.  
Relative to any given GNP testing problem,  we further define a {\em decision rule\/} to be any collection of functions $\{\delta_b: b \in \cB\}$, where $\delta_b(y)$  denotes the $a \in \cA_b$ picked when loss function indexed by $B=b$ (i.e. $L_b$) is presented and $Y=y$ is observed.
Let $\delta$ be any decision rule and let $S=S(Y)$ be any e-variable. We call $\delta$ {\em compatible with  $S$} if %we have 
\begin{equation}\label{eq:compatible}
\bloss{b}{\snull}{\delta_b(y)} \leq S(y) \text{\ for all $y \in \cY$, $b \in \cB$}.
\end{equation}
We now prepare the definition of Type-I risk safety for GNP decision problems. First, we note that in general, the threshold $\ell$ a DM would like to impose on the risk via \eqref{eq:evaldr} when confronted with loss function $L_b$ may be an arbitrary positive real. 
However, using this maximal rule \eqref{eq:evaldr}, for every observed $Y=y$ and $B=b$, the exact same decision will be taken if we normalize all losses, using $L'_b$ with $L'_b(\snull,a) = L_b(\snull,a)/\ell$ instead of $L_b$ and $\ell'=1$ instead of $\ell$. Hence, without loss of generality, from now on we simplify the treatment by taking $\ell =1$ (in the SI  we discuss in more detail why this is not harmful).
With this in mind, consider a concrete setting in which the actual loss function $L_B$ with index $B$ presented to  DM is determined in a data-dependent manner (perhaps by some policy makers, perhaps completely implicitly). Since we do not know the definition of $B$, i.e. how the choice is made, we want to ensure that the analogue of Eq.~\ref{eq:TypeIrisksafe} holds, in the worst case, over all possible choices. Thus, as a first attempt, we may extend Eq.~\ref{eq:TypeIrisksafe},
by defining $\delta$ to be Type-I risk safe if
\begin{equation}\label{eq:TypeIRiskSafeSimplified}
\sup_{P_0 \in \cH(\snull)} {\bf E}_{P_0} \left[ \sup_{b \in \cB} {L_{b}(\snull,\delta_b(Y))}\right] \leq 1.\end{equation}
As discussed in the SI, the expectation might be undefined for pathological choices of the set $\cB$ and the 
functions $\{L_b: b \in \cB\}$ --- after all, we have not restricted the choice of $\cB$ and $L_b$ at all. We can simply avoid this issue by slightly modifying the definition: 
we define $\delta$ to be {\em Type-I risk safe\/} if there exists a function $U: \cY \rightarrow \reals^+_0$ such that for all $P_0 \in \cH(\snull)$, ${\bf E}_{P_0}[U(Y)]$ is well-defined, and  
for all $y \in \cY$,
\begin{align}\label{eq:TypeIRiskSafeSimplifiedB}
% \text{for all $y \in \cY$:\ } 
\hspace*{-0.22 cm} \sup_{b \in \cB} {L_{b}(\snull,\delta_b(y))}
\leq U(y) \ \text{;} 
%\nonumber \\ & 
\sup_{P_0 \in \cH(\snull)} {\bf E}_{P_0} \left[ U(Y) \right] \leq 1.\end{align}
\paragraph{E-Variable Compatibility $\Leftrightarrow$ Type-I Risk Safety}
In NP Theory, Type-I error guarantees come first --- we look for an optimal decision rule among all rules that have the desired Type-I error guarantee. Analogously, here we first restrict our search for `good' decision rules to those that are Type-I risk safe for the given decision problem. How to find these? Realizing that the second equation in \eqref{eq:TypeIRiskSafeSimplifiedB} expresses that $U$ is an e-variable, and the first equation says that $\delta$ is compatible with this e-variable, we see that the Type-I risk safe decision rules are exactly those that are compatible with an e-variable, thereby explaining the importance of e-variables to generalized NP testing. Formally, we have just proved the following trivial consequence of our definitions: 
\begin{proposition}
\label{prop:typeIsafety}
Fix an arbitrary  GNP testing problem.  For every $\delta$ defined relative to this problem: \begin{enumerate}
    \item  For every e-variable $S$ for $\Hnull$:  if  $\delta$ is compatible with $S$, then $\delta$ is Type-I risk safe. 
\item Suppose that $\delta$  is 
%$\snull$-regular and 
Type-I risk safe. Let $S=U$ be as in \eqref{eq:TypeIRiskSafeSimplifiedB} (in standard cases we can simply take $S(y) = \sup_{b \in \cB}  \bloss{b}{\snull}{\delta_b(y)}$). Then $S$ is an e-variable for $\Hnull$, and $\delta$ is compatible with  $S$. 
\end{enumerate}  
\end{proposition}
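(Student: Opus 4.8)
The plan is to recognize that this proposition is purely definitional: both implications are obtained by taking the obvious witness and unwinding the two displayed conditions in \eqref{eq:TypeIRiskSafeSimplifiedB} against the defining inequality \eqref{eq:basic} of an e-variable and the compatibility inequality \eqref{eq:compatible}. The entire argument amounts to identifying $U$ with $S$ and reading the definitions off each other; no genuine construction or estimate is needed.

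For part 1, I would set $U := S$. Since $S$ is an e-variable it is nonnegative, so $U$ maps $\cY$ into $\reals^+_0$, and \eqref{eq:basic} gives ${\bf E}_{P_0}[U(Y)] = {\bf E}_{P_0}[S] \leq 1$ for every $P_0 \in \cH(\snull)$. This simultaneously supplies the "well-defined" requirement and the second inequality of \eqref{eq:TypeIRiskSafeSimplifiedB}. For the first inequality, compatibility \eqref{eq:compatible} states $\bloss{b}{\snull}{\delta_b(y)} \leq S(y)$ for every $b \in \cB$ and every $y$; taking the supremum over $b$ on the left yields $\sup_{b \in \cB} L_b(\snull,\delta_b(y)) \leq S(y) = U(y)$ for all $y$. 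Hence $U = S$ witnesses Type-I risk safety.

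For part 2, I would run the same chain in reverse. Type-I risk safety hands me a witness $U$ satisfying \eqref{eq:TypeIRiskSafeSimplifiedB}; I set $S := U$. Nonnegativity is immediate from $U : \cY \to \reals^+_0$, and the second inequality of \eqref{eq:TypeIRiskSafeSimplifiedB} reads precisely ${\bf E}_{P_0}[S] \leq 1$ for all $P_0 \in \cH(\snull)$, i.e. \eqref{eq:basic} holds and $S$ is an e-variable. Compatibility then falls out of the first inequality of \eqref{eq:TypeIRiskSafeSimplifiedB}: for each fixed $b$ we have $\bloss{b}{\snull}{\delta_b(y)} \leq \sup_{b' \in \cB} L_{b'}(\snull,\delta_{b'}(y)) \leq U(y) = S(y)$, which is exactly \eqref{eq:compatible}.

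The only point requiring any care, and the main reason Type-I risk safety was defined through an auxiliary $U$ rather than directly via ${\bf E}_{P_0}[\sup_{b} L_b(\snull,\delta_b(Y))]$, is the parenthetical claim that one may "simply take $S(y) = \sup_{b \in \cB} \bloss{b}{\snull}{\delta_b(y)}$". I would note that this particular choice is a legitimate witness only when the map $y \mapsto \sup_{b \in \cB} L_b(\snull,\delta_b(y))$ is measurable and has a well-defined expectation under every $P_0 \in \cH(\snull)$; this is what "standard cases" is meant to guarantee, and in that case the first inequality of \eqref{eq:TypeIRiskSafeSimplifiedB} shows the sup-based $U$ is in fact the pointwise-smallest admissible witness. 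For pathological $\cB$ one must fall back on the abstract $U$ granted by \eqref{eq:TypeIRiskSafeSimplifiedB}. This measurability/integrability caveat is the sole (minor) obstacle; the core equivalence is otherwise an immediate rewriting of the definitions, exactly as the surrounding text anticipates when it calls the result a trivial consequence.
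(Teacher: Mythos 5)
Your proposal is correct and follows exactly the paper's own (essentially one-line) argument: the paper observes that the second inequality of \eqref{eq:TypeIRiskSafeSimplifiedB} is the e-variable condition \eqref{eq:basic} for $U$ and the first is compatibility \eqref{eq:compatible} with $U$, so the two notions coincide upon identifying $U$ with $S$. Your added remark on measurability of $\sup_{b}L_b(\snull,\delta_b(\cdot))$ matches the paper's stated reason for defining Type-I risk safety via the auxiliary witness $U$ rather than directly through the supremum.
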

%\begin{proof} Immediate from the definitions.
%\end{proof}
\subsection{Admissibility}\label{sec:TypeII}
We now turn to Type-II losses. The reader may have wondered why the specification of Type-II loss functions $L_b(\alt,a): \cA_b \rightarrow \reals$ as in \eqref{eq:typeIIrisk} was not made part of Definition~\ref{def:gnptest}. The following crucial observation implies that this is superfluous, thereby greatly satisfying the treatment: suppose there were two actions $a,a' \in \cA_b$ such that $L_b(\snull,a' ) > L_b(\snull,a)$ and   $L_b(\alt,a' ) > L_b(\alt,a)$. Then any rational DM would always prefer $a$ over $a'$, and hence never want to play $a'$. We can thus take $a'$ out of the set $\cA_b$ without affecting the set of decisions that a DM might ever want to consider. Assuming that $\cA_b$ has been pre-processed like this, we automatically obtain that the larger the Type-I loss of an action, the smaller the Type-II loss, allowing us to refrain from specifying $L_b(\alt,\cdot)$: we may thus call a decision rule $\delta^{\circ}$ {\em Type-II strictly better\/} than $\delta$ if for all $b \in \cB$, all $P \in \cH(\snull)$, 
we have 
\begin{equation}\label{eq:TypeIIbettera}
P(L_b(\snull,\delta^{\circ}_b(Y)) <  L_b(\snull,\delta_b(Y))) = 0
\end{equation}
whereas there exist $b\in \cB, P \in \cH(\snull)$ such that
\begin{equation}\label{eq:TypeIIbetterb}
P(L_b(\snull,\delta^{\circ}_b(Y)) >  L_b(\snull,\delta_b(Y))) > 0.
\end{equation}
If $\cY$ is uncountable,  $L_b$, $\delta^{\circ}$ and $\delta$ could again be picked in highly pathological ways, such that the probabilities above are undefined. This is fully resolved by the generalization of \eqref{eq:TypeIIbettera} and \eqref{eq:TypeIIbetterb} given in the SI.
%that reduces to the previous one whenever the two probabilities above are well-defined, and that applies in full generality. 

Clearly, if both $\delta^{\circ}$ and $\delta$ are Type-I risk safe and $\delta^{\circ}$ is Type-II strictly better than $\delta$, we would always prefer playing $\delta^{\circ}$ over $\delta$. We may say that $\delta$ is {\em inadmissible}.
Formally, for any decision rule $\delta$ we say that it is {\em admissible\/}  if it is Type-I risk safe and no other Type-I risk safe decision rule is Type-II strictly better. 
%Note that, if $\cB=\{\np \}$ and $\cA_{\np} = \{0,1\}$ and $L_{\np}(\snull,0)= 0$, $L_{\np}(\snull,1) = 1/\alpha$, i.e. we are in the standard Neyman-Pearson setting, and if additionally a UMP  test exists, then the only admissible decision rule is equal (with $P_0$-probability 1) to the UMP test, which would also be advocated within standard NP. Thus, our admissibility requirement may be viewed an extension of the goal of achieving maximal power  in standard NP theory. 
\paragraph{Main Result}
This admissibility notion is reminiscent of standard admissibility notions in classical statistical decision theory, and the theorem below is in the spirit of a {\em complete class theorem\/} \citep{BergerB85,DuanmuR21} expressing that in searching for reasonable (i.e., admissible) decision rules in GNP problems we may restrict ourselves to those based on e-variables via  {\em maximally compatible decision rules}. Formally, 
we call a decision rule $\delta$  {\em maximally compatible\/} with e-variable $S$ relative to a given GNP testing problem, if it is compatible with $S$ and there exists no decision rule $\delta^{\circ}$ such that $\delta^{\circ}$ is also compatible with $S$ yet $\delta^{\circ}$ is Type-II strictly better than $\delta$. 
We will relate this to the earlier informal definition of maximum compatibility (\eqref{eq:evaldr}) further below. 

To state the theorem, we need one more concept: we call a GNP testing problem  
{\em rich\/} relative to e-variable $S=S(Y)$ if for every $s$ in the co-domain of $S$, there exist $b \in \cB$ and $a \in \cA_b$ such that $L_b(\snull,a)=s$. An example of a simple GNP testing problem that is rich relative to any e-variable at all is obtained whenever $\cB= \{\textsc{sq}\} \cup \cB'$, for arbitrary $\cB'$, where  $\cA_{\textsc{sq}} = \reals^+_0$ and $L_{\textsc{sq}}(\snull,a) = a^2$ (the squared error loss --- richness follows since it can take on any value in $\reals^+_0$). An example of a GNP testing problem that is rich relative to e-variable $S^{\textsc{np}(\alpha)}$ of Example~\ref{ex:nplrevar}   is given by $\cB= \{\textsc{np}\} \cup \cB'$, for arbitrary $\cB'$, where  $\cA_{\textsc{np}} = \{0,1\}$ and $L_{\np}(\snull,0)= 0, L_{\np}(\snull,1) = 1/\alpha$ (if $\cB' = \emptyset$, this is the classical NP setting of Section~\ref{sec:gnp} again): choose $B= 
\np$, $a=0$ if $S^{\np(\alpha)} =0$, and choose $B=\np$, $a=1$ if $S^{\np(\alpha)} = 1/\alpha$. 
\begin{theorem}\label{thm:mainsimple}
Consider a GNP testing problem.  
Then:
\begin{enumerate}  
\item
%\begin{enumerate} 
%\item 
If $\delta$ is an admissible
%$\snull$-regular 
decision rule, then there exists an e-variable $S$ such that $\delta$ is a maximally compatible decision rule for $S$.
%Hence, all admissible decision rules are maximally compatible  relative to some  e-variable $S$.
\item As a partial converse, suppose that $\delta$ is a maximally compatible decision rule for some e-variable $S$. If 
(a) all $P \in \cH(\snull)$ are mutually absolutely continuous (see below) and (b) $S$ is sharp relative to the given testing problem,  i.e. ${\bf E}_{P_0}[S] = 1$ for some $P_0 \in \Hnull$, and (c) 
the GNP testing problem is rich relative to $S$, then $\delta$ is admissible. 
\end{enumerate}
\end{theorem}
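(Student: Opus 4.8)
The proof separates into the two claims. Part 1 is an immediate corollary of Proposition~\ref{prop:typeIsafety}, and Part 2 is the substantive direction, which I would prove by contradiction via an ``expected value exceeds $1$'' argument.

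For Part 1, suppose $\delta$ is admissible, hence in particular Type-I risk safe. By Proposition~\ref{prop:typeIsafety}(2), taking $S(y) := \sup_{b \in \cB} \bloss{b}{\snull}{\delta_b(y)}$ (or the more general witness $U$ of \eqref{eq:TypeIRiskSafeSimplifiedB}), $S$ is an e-variable for $\Hnull$ and $\delta$ is compatible with $S$. I claim $\delta$ is in fact maximally compatible with this $S$. Indeed, if some $\delta^{\circ}$ were compatible with $S$ and Type-II strictly better than $\delta$, then by Proposition~\ref{prop:typeIsafety}(1) that $\delta^{\circ}$ would itself be Type-I risk safe, and so would contradict the admissibility of $\delta$. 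Hence no such $\delta^{\circ}$ exists, which is precisely maximal compatibility; note this works for \emph{any} e-variable with which $\delta$ happens to be compatible.

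For Part 2, assume $\delta$ is maximally compatible with $S$, that (a)--(c) hold, and, for contradiction, that some Type-I risk safe $\delta^{\circ}$ is Type-II strictly better than $\delta$. The first step is a pointwise characterization of maximal compatibility: for each $b$ and $P_0$-almost every $y$, $\delta_b(y)$ must attain $\max\{\bloss{b}{\snull}{a} : a \in \cA_b,\ \bloss{b}{\snull}{a} \le S(y)\}$. If it fell short on a positive-measure set, one could redirect $\delta$ there to an action with strictly larger Type-I (hence, after the pre-processing, strictly smaller Type-II) loss still bounded by $S(y)$, yielding a compatible rule that is Type-II strictly better and contradicting maximal compatibility. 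Combining this with richness (c): for each value $s=S(y)$ there exist $b^{\ast},a^{\ast}$ with $\bloss{b^{\ast}}{\snull}{a^{\ast}}=s$, so the pointwise optimum forces $\bloss{b^{\ast}}{\snull}{\delta_{b^{\ast}}(y)}=S(y)$, whence $\sup_{b} \bloss{b}{\snull}{\delta_b(y)} = S(y)$ for $P_0$-a.e.\ $y$.

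Now let $S'(y) := \sup_b \bloss{b}{\snull}{\delta^{\circ}_b(y)}$ be the e-variable witnessing Type-I risk safety of $\delta^{\circ}$ (Proposition~\ref{prop:typeIsafety}(2)), so ${\bf E}_{P_0}[S'] \le 1$. Clause \eqref{eq:TypeIIbettera} gives $\bloss{b}{\snull}{\delta^{\circ}_b} \ge \bloss{b}{\snull}{\delta_b}$ $P_0$-a.s.\ for each $b$; taking suprema and using the previous step yields $S'(y) \ge S(y)$ for $P_0$-a.e.\ $y$. Clause \eqref{eq:TypeIIbetterb} gives some $b_1$ and $P_1 \in \Hnull$ with $P_1(\bloss{b_1}{\snull}{\delta^{\circ}_{b_1}(Y)} > \bloss{b_1}{\snull}{\delta_{b_1}(Y)}) > 0$; on that event, since $\delta_{b_1}(y)$ is the largest compatible action, any strictly larger Type-I loss must exceed $S(y)$, so $S'(y) > S(y)$ there, and by mutual absolute continuity (a) this event has positive $P_0$-measure. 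Therefore, using sharpness (b), ${\bf E}_{P_0}[S'] > {\bf E}_{P_0}[S] = 1$, contradicting ${\bf E}_{P_0}[S'] \le 1$. Hence no improving $\delta^{\circ}$ exists and $\delta$ is admissible. I expect the main obstacle to be measure-theoretic rather than conceptual: upgrading the per-$b$ almost-everywhere domination to an almost-everywhere domination of the suprema $S' \ge S$, and constructing the improving rule in the pointwise-optimality step, both require measurable selections and are routine for countable $\cB$ but need the null-set-robust reformulations of \eqref{eq:TypeIIbettera}--\eqref{eq:TypeIIbetterb} promised for the SI when $\cY$ and $\cB$ are uncountable. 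Once the exchange of suprema with the ``almost everywhere'' quantifier is justified, the final arithmetic contradiction is immediate.
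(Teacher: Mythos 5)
Your proposal is correct and takes essentially the same route as the paper: Part 1 is exactly the paper's argument via Proposition~\ref{prop:typeIsafety}, and Part 2 rests on the same mechanism --- richness plus pointwise maximal compatibility force $\sup_{b}\bloss{b}{\snull}{\delta_b(Y)}$ to equal $S$ almost surely, and sharpness together with mutual absolute continuity then rule out any e-variable dominating $S$ with strict inequality on a positive-measure set. The only difference is packaging: the paper first pins down the improving rule on the richness-witnessing index $B$ and then invokes an ``equalizing maximal compatibility'' lemma (Lemma~\ref{lem:equalizersimple} and its generalization) to propagate compatibility to all $b$, whereas you run the expectation comparison directly on the suprema; the measurability and quantifier-exchange caveats you flag for uncountable $\cB$ and $\cY$ are precisely what the paper's $\leq_{\snull}$ machinery in the SI is designed to handle.
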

%A sufficient condition for the maximally compatible $\delta^*$ to exist uniquely, is that for all $b \in \cB$, we have that  $\cA_b$ is either finite or a closed interval (where we allow the right endpoint to be extended real $\infty$) and, in the latter case, $L_b(\snull,a)$ is continuously increasing in $a \in \cA_b$. 
Part 1 shows that we can restrict our search for admissible decision rules to the ones that are maximally compatible for some e-variable $S$. Part 2  is in essence a converse, showing that, under some  regularity conditions, maximally compatible decision rules must be admissible. 
%Since maximally compatible decision rules exist in wide generality (for example, for all simple GNP testing problems discussed here), this also makes clear that admissible decision rules do exist in wide generality. 
All three conditions required are weak: (a) Two distributions $P, P'$ are mutually absolutely continuous if `they agree on what is practically impossible', i.e. for each  event $\cE$, we have $P(\cE) = 0$ iff $P'(\cE)=0$. Most standard parametric families are absolutely continuous or can be made such by excluding the boundary of the parameter space. (b) Sharpness of $S$ expresses that $S$ cannot be uniformly improved --- a mild requirement satisfied by all e-variables considered in this paper (and also in most other papers on e-variables \cite{ramdas2023savi,GrunwaldHK19B}). 
(c) Richness relative to the $S$ considered holds in all examples encountered in this paper (see Example~\ref{ex:admissible} below for further illustration). More importantly perhaps, for any sharp e-variable $S$ which we might want to base our decisions on, we can trivially {\em enlarge\/} any given
GNP testing problem by adding one particular loss function so that 
the extended GNP decision problem will automatically be rich relative to $S$, and Part 2 of Theorem~\ref{thm:mainsimple} can then  be applied. 
In the SI we explain how this enlargement works and why it is a reasonable operation.

The theorem thus expressing that maximally compatible $\delta$ tend to coincide with admissible $\delta$, we would still like to be assured that such maximally compatible $\delta$ exist in wide generality. 
To briefly illustrate that this is the case, at the same time connecting the formal notion to the earlier informal definition based on \eqref{eq:evaldr}, let us consider what we will call {\em simple\/} GNP testing problems. All GNP testing problems encountered in this and the previous section are simple. They are defined as those GNP testing problems for which, (i) for all $b \in \cB$, $\cA_b$ is a finite union of closed intervals in $\reals^+_0 \cup \{\infty \}$ (in particular this includes the case that $\cA_b$ is finite); (ii) the Type-I loss $L_b(\snull,a)$ is monotonically and, on each interval, continuously increasing in $a$, and (iii) all $P \in \cP$ are mutually absolutely continuous.  The following is  easily checked: for arbitrary e-variable $S$, such simple GNP decision problems must have a maximally compatible $\delta^*$ relative to $S$ that generalizes  \eqref{eq:evaldr}, with our simplification $\ell=1$: $\delta^*$ is the rule which selects,
%for all $y \in \cY$ and $b \in \cB$,
when presented $Y=y,B=b$, 
\begin{align}
    \label{eq:maxcompatible}
  \delta^*_b(y) := \text{largest $a \in \cA_b$ with
  $ \bloss{b}{\snull}{a} \leq  S(y)$}.
\end{align}
Moreover, this maximally compatible $\delta^*$ is essentially unique, i.e. if $\delta^*, \delta$ are both maximally compatible, then for all $P \in \cH(\snull)$, we have $P(\delta^* \neq \delta') = 0$.
\begin{example}\label{ex:admissible}
Consider a simple vs. simple testing problem with  $\Hnull = \{P_0\}, \Halt = \{P_1\}$. Let $\pval(Y)$ be a strict p-value, i.e. $P_0(\pval \leq\alpha) = \alpha$ for $\alpha\in [0,1]$, that is monotonically and continuously decreasing in the likelihood ratio $\Slr(Y)$; use of  such a p-value is standard in NP testing with continuous-valued outcome spaces.
Consider the following variation of Example~\ref{ex:unknownloss}: $\cB \subset \reals^+_0$ with 
for $b \in \cB$, $\cA_{b} = \{0,1\}$ and $L_{b}(\snull,0) = 0, L_{b}(\snull,1)= b$.
Take arbitrary but fixed $0< \alpha < 1$. Then the  maximally compatible decision rule $\delta^*$ as in \eqref{eq:maxcompatible} relative to e-variable $\Snp{\alpha}$ is sharp. When presented with loss function $L_b$, this $\delta^*$ always plays $0$ if $b >  1/\alpha$. If $b \leq 1/\alpha$, it plays $1$ if
$b \leq \Snp{\alpha}$ (i.e. if $\Snp{\alpha} = 1/\alpha$, i.e. if $\pval \leq \alpha$) and $0$ otherwise (i.e. if $\Snp{\alpha} = 0$, i.e. if $\pval > \alpha$). By Part 2 of  Theorem~\ref{thm:mainsimple}, this $\delta^*$ is admissible if $\cB$ contains $b= 1/\alpha$, which ensures richness relative to $S^{\np(\alpha)}$. 
%, for then we get richness relative to $S^{\np(1/\alpha)}$). 
%The conditions are easily verified by noting that the e-variable is trivially sharp, and seen to be rich by taking the $b$ corresponding to $y$ in that definition equal to $1/\alpha$ for all $y$. 

In contrast, consider the  $\delta^*$ as in \eqref{eq:maxcompatible} based on the likelihood ratio e-variable $\Slr$, which is also sharp. When  presented $L_b$, this decision rule plays $1$ if $b \leq  \Slr$ and $0$ otherwise. If we set $\cB= \reals^+_0$, we have richness relative to $\Slr$ so by Theorem~\ref{thm:mainsimple}, this $\delta^*$ is admissible as well. In this case though, admissibility of $\delta^*$ may fail if we take $\cB$ a strict subset of $\reals^+_0$. 
%In particular, in the original NP setup with $\cB=\{\np \}$ and $\cA_{\np} = \{0,1\}$ and $L_{\np}(\snull,0)= 0, L_{\np}(\snull,1) = 1/\alpha$, this LR-based $\delta^*$ is not admissible any more --- with this restricted $\cB$, the only admissible decision rule is the standard NP test.   
\end{example}

\renewcommand{\lr}{\ensuremath{\textsc{lr}}}

\section{Robust Confidence via the  E-Posterior}
\label{sec:eposterior}

Now let us consider a statistical model $\cP$ partitioned according to a parameter of interest $\theta\in \Theta$, with $\phi: \cP \rightarrow \Theta$ indicating the parameter corresponding to each $P$; for example, $\theta = \phi(P)$ might be the mean of $P$, or, if $\cP = \{P_{\theta}: \theta \in \Theta\}$ is a parametric model, $\phi$ might simply denote the parameterization function, $\phi(P_{\theta}) = \theta$.
%so that for some function $\phi$, $\$ $\{P_{\theta}: \theta \in \Theta \}$. 
Any collection of  p-values $\{ \pval_{\theta}: \theta \in \Theta \}$, with $\pval_{\theta}$ a p-value for the null $\cH(\theta) := \{P \in \cP: \phi(P) = \theta \}$ can be used to build a valid $(1-\alpha)$ confidence set, by setting $\cs_{\alpha}(Y) = \{\theta: \pval_{\theta}(Y) > \alpha \}$ to be the set of $\theta$'s that would not have been rejected at the given level $\alpha$. For simplicity, we restrict attention to scalar $\Theta \subseteq \reals$; then the $\cs_{\alpha}$ will usually be intervals, and indeed this p-value based construction is a standard way to construct such intervals. 
Analogously \cite{Xu22,ramdas2023savi}, any {\em e-collection}, i.e. a collection of e-variables $\{S_{\theta}:\theta \in \Theta \}$ such that $S_{\theta}$ is an e-variable for the `null' 
$\cH(\theta)$ (by this we mean that $S_{\theta}$ must satisfy \eqref{eq:basic}, i.e. ${\bf E}_{P}[S_{\theta}] \leq 1$, for all $P \in \cH(\theta)$)
%$\{P \in \cP: \phi(P) = \theta \}$ 
can be used to build an equally valid, usually larger, {\em e-based\/} $(1-\alpha)$-confidence set (again, for scalar $\theta$ this usually becomes an interval), one for each $\alpha$, by setting 
$\cs^*_{\alpha}(Y) = \{\theta: S_{\theta}(Y) < 1/\alpha \}$ as the set of $\theta$'s that would not have been rejected at level $\alpha$ with an e-value based test. Below we first give a simple example. We then, in Section~\ref{sec:eposterior}.\ref{sec:retrace} retrace the steps of Section~\ref{sec:gnp} and Section~\ref{sec:math}, re-interpreting confidence sets in terms of actions with associated losses and risks. 
%, with, as before,  small $\alpha$ corresponding to a large Type-I loss. 
Section~\ref{sec:eposterior}.\ref{sec:obayes} and \ref{sec:eposterior}.\ref{sec:eposteriorsucceeds} 
show  that, once again, if  losses are determined post-hoc (BIND is violated), then standard confidence intervals loose their validity whereas e-based confidence intervals remain Type-I risk safe. Relatedly, without BIND, decisions based on {\em confidence distributions\/} can be unsafe, but those based on the e-posterior --- a means of summarizing e-\cs's for all $\alpha$'s at once --- remain Type-I risk safe. 
\begin{example}\label{ex:prebexample}
Consider the normal location family: data are $Y = X^n$ where,
under $P_{\theta}$,
$X^n =(X_1, \ldots, X_n)$   are i.i.d. $\sim N(\theta,1)$. 
%Instead of the standard  two-sided $(1-\alpha)$-confidence interval we may 
We consider e-based confidence intervals based on various e-collections. 
\cite{Grunwald23} gives various suitable collections, but for simplicity we here stick to a single, simple choice, taken from Example 8 of \cite{Grunwald23}, that, like the standard CI, is 
symmetric around the MLE $\hat\theta(X^n) = n^{-1} \sum X_i$. 
Fix {\em anticipated\/} sample size $n^*$ and 
%{\em anticipated\/} 
confidence level $0 < \alpha^*< 1$.
For each $\mean$ we define $\meanmin < \mean$ and $\meanplus > \mean$ to satisfy
\begin{equation} \label{eq:nstarfun}
  \frac{1}{2} 
n^* (\mean - \meanplus)^2 = \frac{1}{2}
n^* (\mean - \meanmin)^2 =
\log \frac{2}{\alpha^*}.
\end{equation}
Now define e-variables $S^-_{\theta}(y) = p_{\meanmin}(y)/p_{\mean}(y)$, $S^+_{\theta}(y) = p_{\meanplus}(y)/p_{\mean}(y)$ and $S_{\mean}(y) = (1/2) ( S^-_{\theta}(y)+ S^+_{\theta}(y))$. These choices can be motivated based on the fact that $S^-_{\theta}$ and $S^+_{\theta}$
are also uniformly most powerful Bayes factors \cite{johnson2013uniformlyB,Grunwald23} and hence reasonable e-variables for 1-sided $\cs$s. We continue with $S_{\theta}$ for two-sided $\cs$s.
As is proved analogously to \eqref{eq:eproof}, $S_{\theta}$ remains an e-variable even if neither the actual sample size $n$ nor the to-be-used significance level $0 < \alpha < 1$  are equal to the hoped-for  $n^*$ and $\alpha^*$; more on this below. In the SI we show that 
a sufficient condition for $S_{\mean}(Y) \geq \alpha^{-1}$, i.e. for $\theta \not \in \cs^*_{\alpha}(Y)$ is that 
%$ \max \{\Sl_\mean(Y), \Sr_\mean(Y)\}  \geq 2 \alpha^{-1}$, i.e. if 
\begin{align}
   %
%&     
\left| \mean - \hat\mean\right| \geq 
      \sqrt{\frac{2}{n} \cdot  \log \frac{2}{\alpha}}  \cdot g(c), 
%    = \frac{1}{2} \cdot \left( \frac{- \log (\alpha/2) }{n}\right) \cdot \left(c + %c^{-1} + 2\right),
 \ \text{with} \nonumber 
 %\\ 
 \end{align}\begin{align} \label{eq:normalrejectb}
%& 
c = \frac{n^*}{n^{\ }} \cdot \frac{\log (2/\alpha^{\ })}{ \log (2/\alpha^{*})}\ , \ 
g(c) =  \frac{1}{2} \left({c^{1/2} + c^{-1/2}}\right).
\end{align}
As explained in the SI, for fixed $\alpha$,  \eqref{eq:normalrejectb} is tight for all but the smallest $n$. 
Thus, the e-based confidence interval $\cs^*_{\alpha}(Y)$ has width $|\theta - \hat\theta | \asymp 1/\sqrt{n}$, of the same order as the region for the standard Neyman-Pearson test, with a factor $g(c)$ depending on how well aligned $n, n^*, \alpha$ and $\alpha^*$ are: $g(c)$ and hence  the width is minimized, if $c=1$ (and then $g(c)=1$) which is the case if $n=n^*$ and $\alpha=\alpha^*$.  At $\alpha = .05$, we get in this optimal case that  $S_{\theta}(Y) \geq \alpha^{-1}$ if $|\theta - \hat\theta| \geq \sqrt{ (2 \log 40)/n} \approx 2.72/\sqrt{n}$, 
making the e-based  CI wider than the standard CI by a constant factor of $\approx 2.72/1.96 \approx 1.4$ (see Figure~\ref{fig:eposterior}).  
\end{example}
\paragraph{E-Processes} Why do we not simply set $n^*$ in the definition of  $S_{\mean}$ actual to the actual $n$? The reason is that we allow the actual $n$ to be unknown in advance, and even to be  random (i.e. a stopping time with unknown definition): it is easily seen that, if $Y = X^{\tau}$ with $X^{\tau} = (X_1, \ldots, X_{\tau})$ for a stopping time $\tau$ (whose definition may be unknown to DM), then $S_{\mean}(X^{\tau})$ is still an e-variable. Formally, $S_{\mean}(X^1), S_{\mean}(X^2), \ldots$ constitutes an {\em e-process\/} in the sense of \cite{ramdas2020admissibleB,ramdas2023savi}. Thus, we can use $S_{\mean}$ without knowing the definition of  $\tau$, and in particular, $\tau$ may be unequal to $n^*$ --- such as an extension of e-variables to be used with arbitrary stopping times is often, but not always possible \cite{GrunwaldHK19B}; whenever it is, it provides an additional bonus over use of standard p-variables in testing, which require the stopping time to be set in advance. 
As stated, assuming that we base the $S_{\theta}$ on the correct $n$ and $\alpha$, this e-based confidence interval is about $1.4$ times as wide as the standard one; the inevitable (yet, I feel, worthwhile!) price to pay for the added flexibility and robustness: in contrast to the standard one, we can use the e-based interval for unknown $n$ (or $\tau$) as well, and we can also use it to get valid confidence intervals for $B$ if BIND is violated, 
as we proceed to show. 
\subsection{Reformulating Coverage in terms of Type-I Risk}
\label{sec:retrace}
%Let $\cY$ and $\cP$ be as before. 
We now generalize the definition of GNP testing problem so that (besides much else) it also allows for estimation with confidence intervals. \begin{definition}\label{def:gnpgeneral}
Fix a set of distributions $\cP$ for $Y$, a set $\Theta$ and  a  function $\phi: \cP \rightarrow \Theta$  mapping $P \in \cP$ to property $\phi(P) \in \Theta$ as above.  A {\em GNP (Generalized Neyman-Pearson) decision problem\/} relative to $\cP$, $\Theta$ and $\phi$  is a tuple $(\cB,\{(\cA_b,L_b:  \Theta \times \cA_b \rightarrow \reals^+_0): b \in \cB\})$.
%with for all $\theta \in \Theta$, each $b \in \cB$, $L_b(\theta,\cdot)$ representing a  {\em Type-I\/} loss function. 
\end{definition}
A GNP decision problem is really a set of GNP testing problems, one for each $\theta \in \Theta$: we recover Definition~\ref{def:gnptest} by taking a singleton  $\Theta = \{\snull\}, \cP = \Hnull$ and $\phi(P) = \snull$  for all $P \in \Hnull$.
For general $\theta \in \Theta$, the {\em $\theta$-testing problem corresponding to the GNP decision problem\/} is the testing problem  $(\cB,\{(\cA_b,L_b(\theta, \cdot): \cA_b \rightarrow \reals^+_0): b \in \cB\})$ with null hypothesis $\cH(\theta)= \{P: \phi(P) = \theta \}$ and with $L_{b}(\theta,\cdot)$ in the role of $L_{b}(\snull,\cdot)$. All definitions for GNP testing problems are now easily extended to GNP decision problems by requiring them to hold for the corresponding $\theta$-testing problem, for all $\theta \in \Theta$.  
In particular, we say that decision rule $\delta$ is compatible with e-collection  $ \{S_{\theta} : \theta \in \Theta\}$ if  we have for all $y \in \cY$, $b \in \cB$ that
\begin{align}\label{eq:compatibleB}
%& 
\forall \theta \in \Theta:    L_b(\theta,\delta_b(y)) \leq S_{\theta}(y).
\end{align}
The definition of Type-I risk safety is extended analogously from \eqref{eq:TypeIRiskSafeSimplifiedB}: 
$\delta$ is Type-I risk safe iff there exists an e-collection $\ecol= \{S_{\theta}: \theta \in \Theta \}$ such that $\delta$ is compatible with $\ecol$. 
If the expectation below is well-defined (which it will be in the confidence interval setting), Type-I risk safety is then clearly equivalent to the corresponding generalization of \eqref{eq:TypeIRiskSafeSimplified}:
\begin{equation}\label{eq:TypeIrisksafeB}
\sup_{\theta \in \Theta} \sup_{P \in \cH(\theta)}  {\bf E}_{P} \left[ {\sup_{b \in \cB} L_{b}(\theta,\delta_b(Y))}\right]  \leq 1.  
\end{equation}
Admissibility is extended analogously: 
we call a decision rule $\delta^{\circ}$ {\em Type-II strictly better\/} than  $\delta$ if 
for all $\theta \in \Theta$, the corresponding $\theta$-testing problem 
satisfies \eqref{eq:TypeIIbettera} with $\snull$ replaced by $\theta$, whereas there exist $\theta \in \Theta$, $b\in \cB, P \in \cH(\snull)$ such that the corresponding $\theta$-testing problem 
satisfies \eqref{eq:TypeIIbetterb} with $\snull$ replaced by $\theta$. 
%we have 
%\begin{align}\label{eq:TypeIIbetterc}
%\forall b\in \cB, \theta \in \Theta, P \in \cH(\theta):\  & P(L_b(\theta,\delta^{\circ}_b(Y)) <  L_b(\theta,\delta_b(Y))) = 0 \nonumber \\
%\exists b\in \cB, \theta \in \Theta, P \in \cH(\theta):\  & P(L_b(\theta,\delta^{\circ}_b(Y)) >  L_b(\theta,\delta_b(Y))) > 0,
%\end{align}
%where again we refer to the SI Appendix for a generalized definition that works even if the above probabilities are undefined. 
The definition of admissibility and maximum compatibility are now based on this extended notion of Type-II strictly-betterness and otherwise unchanged; we further extend the notions of  sharpness and richness to this generalized setting and provide a generalization of Theorem~\ref{thm:mainsimple} to full  GNP decision problems in the SI Appendix.  

\paragraph{Confidence Intervals as Actions}
We now instantiate the above to estimation of confidence intervals.
%(where the restriction to intervals rather than general sets is purely for simplicity).  
Given a probability model $\cP$ and parameter of interest $\theta\in \Theta \subset \reals$ with $\theta = \phi(P)$ as above, take the GNP decision problem with  this $\Theta$ and $\phi$, and with  $\cB = [1,\infty)$, $\cA_b= \{[\theta_L,\theta_R]: \theta_L, \theta_R \in \Theta,\theta_L \leq \theta_R\}$,
\begin{align}  \label{eq:typeIIconfidence}
   %& 
   L_b(\theta,[\theta_L,\theta_R]) = b \cdot  {\bf 1}_{\theta \not \in [\theta_L,\theta_R]}.
\end{align}
Thus, we incur a Type-I loss, if the sampling distribution $\theta$ is not in the interval $[\theta_L,\theta_R]$ we specified, and $b$ determines how bad such a mistake is --- this may again be data-dependent: we assume once again that we are presented $B=b$ via a random and potentially unknowable process,
and we want to obtain the Type-I risk  guarantee \eqref{eq:TypeIrisksafeB}, which instantiates to
\begin{equation}
    \label{eq:above}
\sup_{\theta \in \Theta} \sup_{P\in \cH(\theta)} 
 {\bf E}_{P}[\sup_{b \in \cB} b \cdot  {\bf 1}_{\theta \not \in \delta_b(Y)}] \leq 1,
\end{equation}
where 
$\delta_b(Y) =[\theta_L(Y,b),\theta_R(Y,b)]$. Among all decision rules (i.e. confidence intervals) $\delta$ satisfying \eqref{eq:above}, we want to find the narrowest ones. Our definition of Type-II strictly better above `automatically' accounts for this: the extended definition of Type-II betterness implies that $[\theta_L,\theta_R]$ is Type-II strictly better than $[\theta'_L,\theta'_R]$ iff $[\theta_L,\theta_R]$ is a proper subset of $[\theta'_L,\theta'_R]$. 

If we may assume that BIND  holds we can take the supremum over $\cB$ in \eqref{eq:above}  out of the expectation, i.e. 
$b {\bf E}_{P}[{\bf 1}_{\theta \not \in \delta_b(Y)}] \leq 1$ must hold for all fixed $\theta, P\in \cH(\theta)$ and $b$. 
We may then think of $b$ as set in advance: if we set $b= 1/\alpha$ for some $0 < \alpha \leq 1$, then the requirement says that $\delta_b(Y)$ is a standard $(1-\alpha)$-confidence interval. Thus, under BIND, standard confidence intervals $\delta_b$  coincide with Type-I risk safe confidence intervals as defined above: just as for p-value based tests in Section~\ref{sec:gnp}, under BIND the new setting is simply  an equivalent reformulation of 
the existing theory of confidence sets.
%with $B$-risks replacing $(1-\alpha)$-confidence sets  $\cs_{\alpha}$ with  $\alpha = 1/B$. 
Yet, again, if BIND is violated, then standard confidence intervals are not Type-I safe any more,
%(as is easily seen since the $\alpha$ for which $\cs_{\alpha}$ is output will then depend on the data via $\alpha = 1/B$), 
whereas e-based confidence intervals still are. 
 
\begin{example}{\bf [Ex.~\ref{ex:prebexample}, Continued]}\label{ex:normalcont}
Suppose you observe $Y=y$, $B=b$.  
Let us use the e-confidence intervals as defined relative to a particular anticipated $n^*$ and $\alpha^*$. Using \eqref{eq:normalrejectb} and substituting $1/b$ for $\alpha$ (so that now $c = (n^*/n) \cdot (\log (2 b)/(\log (2/\alpha^*))$), gives that 
$
\forall \theta \in \Theta: L_b(\theta,\delta_b(y)) \leq S_{\mean}
$ with $\delta_b(y) = [\theta_L,\theta_R]$ (i.e. compatibility (\eqref{eq:compatibleB}) holds  and hence Type-I risk safety \eqref{eq:above}
holds as well) as soon as   $\Lmean \leq  
\hat\mean -  A$ and
$\Rmean \geq \hat\mean + A$ where 
\begin{align}
\label{eq:aardrijkskunde}
%& \nonumber \\
%& 
A = \sqrt{\frac{2}{n}} \cdot \sqrt{
\log (2 b)} \cdot g(c).
\end{align}
We may choose $\delta_B(Y) = [\hat\theta-A,\hat\theta+A]$ to satisfy this with equality to make the interval as narrow as possible, making our interval admissible. 
We are then 
\commentout{
This formula is valid for $b  > 1/2$ and useful for $b \geq 1$. For, if $b < 1$ then the desired risk bound is obtained trivially by any interval, including the empty one: the maximum accepted risk is $1$ and the maximum loss, $b$, would then be smaller.}
%\end{example}
%If we choose $\Lmean$ and $\Rmean$ as in \eqref{eq:aardrijkskunde}, we are 
guaranteed Type-I safety, \eqref{eq:above}, irrespective of the definition of $B$. In contrast, it is not clear how to construct Type-I safe CI's for data-dependent $B$ 
without e-values.
%using their standard construction.
%based on p-values or conventional tests.
We might be tempted to do this based on {\em confidence distributions\/} (\cd's) \cite{Cox58,SchwederH16} that summarize confidence intervals for each $\alpha$ into a posterior-like quantity, or {\em objective Bayes posteriors} \cite{BergerB85}, but as we now show, this can have %disastrous 
bad results.
\end{example}
\subsection{$\cd$'s and O'Bayes Posteriors are not valid Post-Hoc}\label{sec:obayes}
Consider the normal location family again. With the standard (uniform, improper) `objective Bayes' prior for this family and data $Y =x^n$, the posterior $W^{\circ} \mid Y= x^n$ has a normal density $w^{\circ}(\theta \mid x^n)$ with mean and median equal to the MLE $\hat\theta(x^n)$
%$= n^{-1} \sum_{i=1}^n x_i$
and variance $1/n$ \citep{BergerB85}. In this case the objective Bayes posterior also coincides with the {\em fiducial\/} \cite{Hannig16} and the {\em confidence\/} distribution (\cd) \citep{SchwederH16} based on $x^n$, 
\commentout{
For 1-dimensional parametric families $\cP = \{P_{\theta}: \theta \in \Theta \}$, such \cd's are defined  as data-dependent distributions $W \mid Y= x^n$ on $\Theta$ (i.e., like `posteriors') with the property that for any $\alpha_1, \alpha_2 \geq 0$ with $0 \leq \alpha_1 + \alpha_2 \leq 1$, the interval $[\theta_L(Y),\theta_R(Y)]$ with $W(\theta \leq \theta_L(Y) \mid Y= x^n) = \alpha_1, W(\theta \geq \theta_R(Y) \mid Y= x^n) = \alpha_2$ is an $1-\alpha_1-\alpha_2$ confidence interval, thereby `summarizing' one- and two-sided confidence intervals for each $\alpha$ into a single distribution. For example,}
% These have the property (for \cd's, by design, for objective Bayes posteriors, a coincidence specific to the normal location family) of exact coverage:
and has an exact coverage property: 
if we let $[\theta_L,\theta_R]$ represent the standard $(1-\alpha)$-Bayesian credible interval based on $w^{\circ}(\theta \mid x^n)$, i.e. taken symmetrically around the MLE $\hat\theta(x^n)$ then this coincides exactly with the standard $(1-\alpha)$-confidence interval, e.g. for $\alpha = 0.05$, we have $\theta_L = \hat\theta(x^n) - 1.96/\sqrt{n}$, $\theta_R = \hat\theta(x^n) + 1.96/\sqrt{n}$.  

The question is now how to base inferences on the objective Bayes or \cd's within our current GNP decision problem, i.e. if the goal is to come up with an interval as narrow as possible that contains the true $\theta$, where making a mistake is weighted by some $B$ that is determined post-hoc. 
Upon observing $Y=x^n$ and $B=b$, and based on the $\cd$ $w^{\circ}(\theta \mid Y)$, one would presumably pick the  smallest interval symmetric around $\hat\theta$ for which the Bayes posterior satisfies the required risk bound, i.e. $\delta'_b(Y) = [\hat\theta - A,\hat\theta +A]$ where $A$, depending on $b$, is the smallest number such that 
\begin{align}\label{eq:prebelow}
& {\bf E}_{\bar\theta \sim W^{\circ} \mid Y=x^n}[L_b(\bar\theta,\delta'_b(x^n))]  \leq 1, \\
    \label{eq:below}
\text{\ i.e.\ }  & {\bf E}_{\bar\theta \sim W^{\circ} \mid Y=x^n}[b \cdot {\bf 1}_{\bar\theta \not \in 
[\hat\theta-A,\hat\theta+A]} ] \leq 1,
\end{align}
with $b$  the observed value taken by $B$; and for this smallest $A$, \eqref{eq:below} holds with equality.  Since $W^{\circ}( \bar\theta \not \in 
[\hat\theta-A,\hat\theta+A]  \mid Y= x^n) = 1/b$, this $\delta'_b(Y)$ is equal to the standard $(1-\alpha)$-confidence interval for $\alpha = 1/b$.
The intuitive appeal for choosing this $\delta$ is clear: \eqref{eq:below} expresses that as a DM one can expect the loss given the data to be bounded by $1$; one simply wants to pick the smallest, most informative interval for which this holds true. Yet the {\em real\/} expectation of the loss  may very well be different from \eqref{eq:below} --- assuming that $B$ is a fixed function of $Y$, 
it is given by 
\begin{equation}
    \label{eq:therealmccoy}
{\bf E}_{Y \sim P_{\theta^*} } \left[B(Y) \cdot {\bf 1}_{\theta^* \not \in 
\delta'_{B(Y)}(Y)} \right],
\end{equation}
with $\theta^*$ indexing the true sampling distribution. This quantity may be much larger than $1$ if $B$ is dependent on $Y$. We provide a simple yet extreme example (inspired by the less extreme Example 8 of \cite{grunwald2018safe}) with $n=1$, $Y=X_1$ (equivalently, think of $Y$ as the Z-score corresponding to a larger sample): fix any $\epsilon > 0$. If, whenever $Y \geq \epsilon$, we set $B := 1/ (2 F_0(-Y + \epsilon^2/Y))$ where $F_0$ is the CDF of a standard normal, 
then, as demonstrated in the SI, under $\theta^*= 0$, \eqref{eq:therealmccoy} evaluates to $\infty$, irrespective of the definition of $B(Y)$ for $Y < \epsilon$. In particular we may set $B =1$ for such $Y$, corresponding to the decision problem being `called off', because the required bound  \eqref{eq:below} is then  achieved trivially by issuing the empty interval. 

\paragraph{Repercussions for Neyman's Inductive Behavior} This discrepancy between what one {\em believes\/} will happen according to a posterior (risk bounded by 1) and what actually will happen (potentially infinite risk) has repercussions for Neyman's interpretation of statistics as long-run performance guarantees of inductive behavior. To illustrate, imagine a DM who is confronted with such a decision problem many times (each time $j$ the underlying $\theta_{(j)}$ with  $Y_{(j)} \sim P_{\theta_{(j)}}$ and the sample size $n_{(j)}$ and the importance function $B_{(j)}$ may be different). Then, based on \eqref{eq:below} she might think to have, by the law of large numbers, the guarantee that, almost surely,
\begin{equation}
    \label{eq:inductivebehaviourci}
 \lim \sup_{m \rightarrow \infty}    \frac{1}{m} \sum_{j=1}^m  B_{(j)} \cdot {\bf 1}_{\theta \not \in 
[\theta_L(Y_{(j)}),\theta_R(Y_{(j)})]} \leq 1. 
\end{equation}
Unfortunately however, this statement is likely false if in reality there is a  dependence between $B_{(j)}$ and $Y_{(j)}$ 
%MAYBE (the problem, may, for example, become a lot more relevant, with higher $B$, once one knows that $\hat\theta$ (and hence presumably $\theta$) lies in a particular region of interest). 
In the SI  we show that, based on the example above, the average in \eqref{eq:inductivebehaviourci} may in fact a.s. converge to infinity, even though the individual $B_{(j)}$'s look pretty innocuous.
A first reaction may be to require the DM to address this problem by modeling the dependency between $B_{(j)}$ and $Y_{(j)}$. But the precise relation may be unknowable, and then it is not clear how to do this. To avoid the issue one may output e-based CIs  or, equivalently but perhaps more illuminatingly, CIs based on the {\em e-posterior\/} that we now introduce. 
\subsection{The E-Posterior remains valid Post-Hoc}
\label{sec:eposteriorsucceeds}
Let $\ecol = \{S_{\theta}: \theta \in \Theta \}$ be an e-collection. Just like it is  tempting to interpret a `system' of confidence intervals, one for each $\alpha$, i.e. a $\cd$, as a type of `posterior', one can also view the $S_{\theta}$-reciprocal $\bar{P}(\theta \mid Y) := S^{-1}_{\theta}(Y)$ as a type of `posterior representation of uncertainty'  for parameter $\theta$. This idea has been conceived of independently by \cite{waudby2023estimating} and  \cite{Grunwald23},  who called $\bar{P}(\theta \mid Y)$ the {\em e-posterior}. 
The crucial difference between  e-posteriors and  $\cd$s is that the former enable valid inferences under specific post-hoc, data-dependent assessments of Type-I risk, whereas standard \cd's can only be validly used as in \eqref{eq:prebelow} if BIND holds. We thus recommend e-posteriors, like Cox \cite{Cox58} 
did for \cd's, as a summary of estimation uncertainty --- but a summary that is significantly more robust than that provided by \cd's. 

Using the e-posterior we can  re-express compatibility, \eqref{eq:compatibleB}, as 
\begin{align}\label{eq:postposterior}
 \sup_{\theta \in \Theta} \bar{P}(\theta \mid y) \cdot  L_b(\theta,\delta_b(y)) \leq \ell,    
\end{align}
with conventions about $0\cdot \infty$ as underneath \eqref{eq:evaldr} and  $\ell = 1$. We already know that $\delta$ satisfying \eqref{eq:postposterior} are Type-I risk safe irrespective of how $B$ is defined. The rewrite suggests an analogy to the Bayes posterior risk assessment, \eqref{eq:prebelow}: if we replace objective Bayes/\cd-posterior {\em expectation\/} by e-posterior {\em maximum}, we get Type-I risk safety without the BIND assumption.
\commentout{Formally, let $Y, \cY$ be as before and let  $\ecol = \{S_{\theta}: \theta \in \Theta \}$ be a collection such that for each $\theta \in \Theta$, $S_{\theta}=S_\theta(Y)$ is an e-variable relative to null hypothesis $\cH(\theta)$.
The {\em e-posterior\/} (we leave out the word `distribution' since technically it isn't; rather it is a posterior quantification of uncertainty) corresponding to $\ecol$ is defined simply by setting, for all $y\in \cY$,  $\bar{P}(\theta \mid y) := S^{-1}_{\theta}(y)$, with conventions about division by $0$ as underneath \eqref{eq:evaldr}. 
We have already seen how to express compatibility, and hence Type-I risk safety, in terms of e-posteriors in \eqref{eq:compatibleB}. 
This suggests a daring generalization: what if we do not want to impose a strict Type-I risk bound $\leq 1$ as in \eqref{eq:above}, but instead want to assess the loss of any given decision rule  $\{\delta_b: b \in \cB \}$ with $\delta_b: \cY \rightarrow \cA_b$? \eqref{eq:compatibleB} suggests that, for given $Y=y$ and $B=b$, it is meaningful to asses our expected loss as 
$$
\sup_{\theta \in \Theta} \bar{P}(\theta \mid y) \cdot L_b(\theta,\delta_b(y)).
$$}

\cite{Grunwald23} shows that, for general bounds $\ell$ and with  $L_b$ replaced by general loss functions, without Type-I/II-dichotomies, assessment \eqref{eq:postposterior} is meaningful and provides a non-Bayesian alternative for Bayes-posterior expected loss assessment. 
%It works for completely general loss functions without the Type-I/II-dichotomy and suggests a specific e-posterior-based decision rule as an alternative for the Bayes decision rule. 
In that paper, I also list a variety of e-posteriors, including an extension of the one of Example~\ref{ex:prebexample} to general exponential families, and  point out deeper relations between e-posteriors and Bayesian posteriors. 
\commentout{MAYBE, and introducing others which give wider e-confidence intervals at each fixed $n^*$ but work considerably better if the actual $n$ is very different from the anticipated $n$ (one can get widths of order $C \cdot \sqrt{(\log n)/n}$ uniformly in $n$ for a small constant $C$, and with the  techniques of  `stitching' \citep{howard2018uniformb} or `switching'  \citep{PasG18} even of order $\sqrt{(\log\log n) / n}$, yet this comes at the cost of a  significantly worse multiplicative constant). }  
In the present paper, we  merely present the e-posterior as a graphical tool which summarizes the e-based confidence intervals as given by \eqref{eq:compatibleB}
%(they can equally be thought of as being based on e-variables as in \eqref{eq:compatibleB} or on an e-posterior as in \eqref{eq:postposterior} 
and helps to visualize how they relate to standard confidence intervals: see Figure~\ref{fig:eposterior}. 
\begin{figure}
\includegraphics[width=7.4cm]{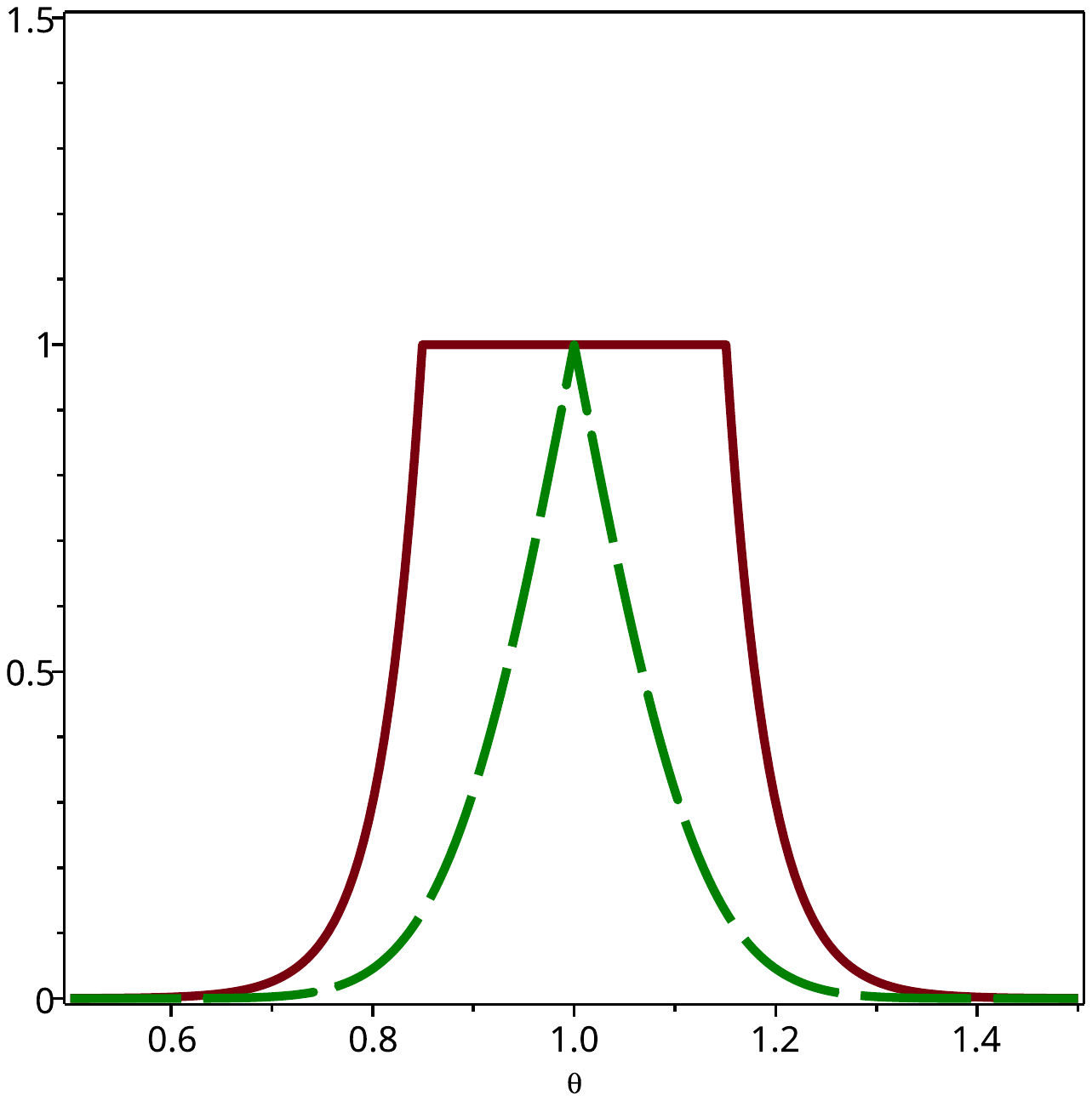} \hspace*{0 cm}
 \includegraphics[width=7.4cm]{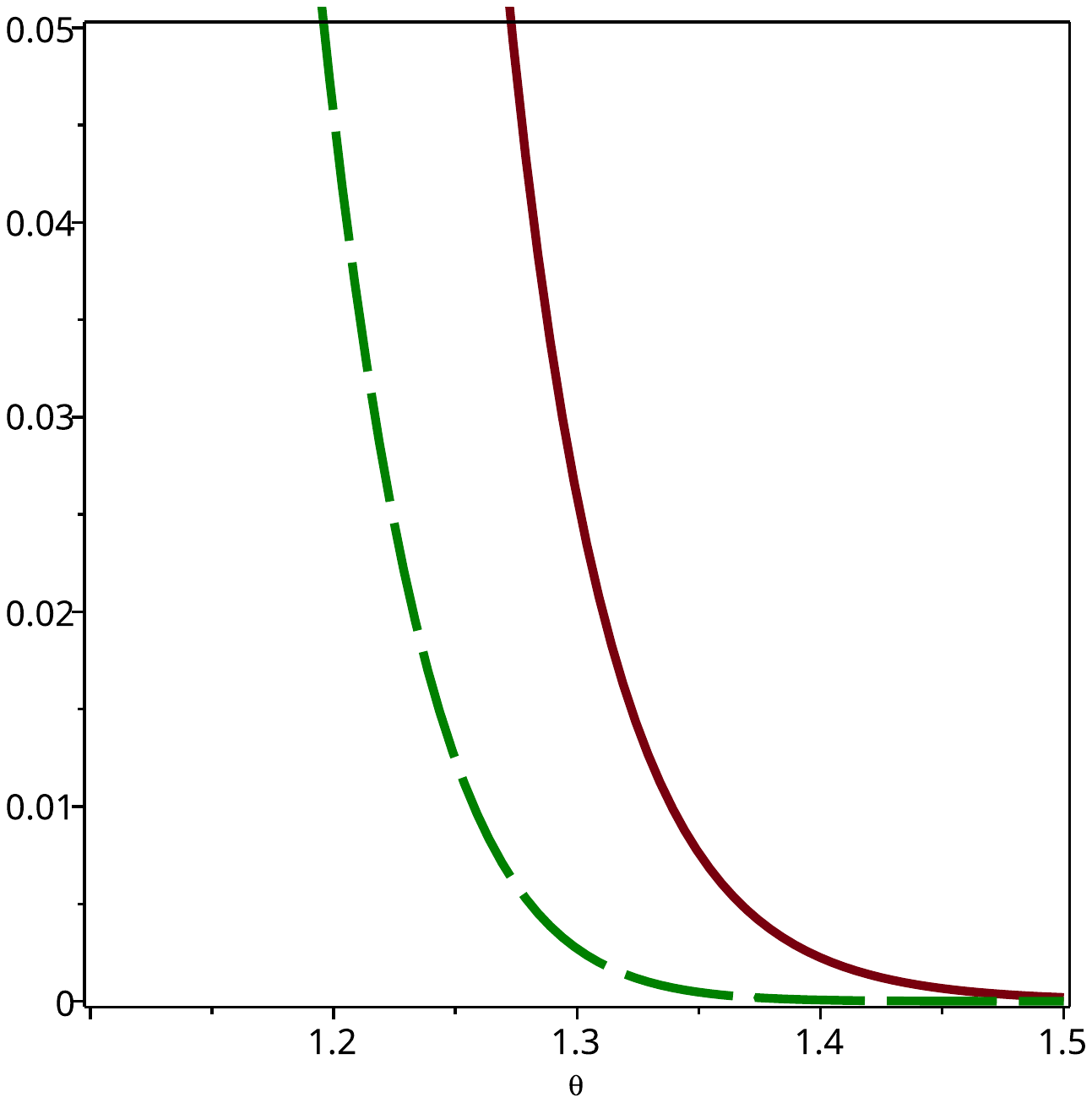} \vspace*{-3 cm}\ 
\caption{\label{fig:eposterior} The solid line depicts the e-posterior corresponding to the e-collection of Example~\ref{ex:prebexample} capped at 1, i.e. $\min\{1, \bar{P}(\theta \mid y)\}$, for data $y=x^n$ with $n=100$ and MLE $\hat\theta(y) =1.00$.  The dashed green line depicts twice the tail area of the objective Bayes posterior (\cd) $W^{\circ} \mid Y = x^n$ of \eqref{eq:below}, given by $f(\theta) := 2W^{\circ}(\bar\theta \geq \theta \mid y) = 2 \int_{\theta}^{\infty} w^{\circ}(\bar\theta \mid y) d \theta$.
The standard two-sided $(1-\alpha)$-confidence interval is given by $[\theta_L,\theta_R]$ where $\theta_L < \hat\theta = 1.00$ is the leftmost $\theta$ at which the dashed green curve takes value $\alpha$, and $\theta_R$ is the rightmost such $\theta$. The $(1-\alpha)$-e-confidence interval based on $S_{\theta}$ as defined above \eqref{eq:normalrejectb}, and with boundaries approximately equal to, \eqref{eq:normalrejectb}, is given by the $[\theta_L,\theta_R]$ at which the solid line takes value $\alpha$. The right picture zooms in on the right  for $\alpha \leq 0.05$. As expected, the dashed line hits $0.05$ for $\theta_R = \hat\theta(x^n) + 1.96/\sqrt{n} = 1.196$, the solid (e-based) line at $\theta_R = \hat\theta(x^n) + 2.72/\sqrt{n} = 1.272$.}
\end{figure}

\section{State of the Art}
\label{sec:stateoftheart}
The modern development of e-values and e-processes  started only  in 2019 when first versions of the four ground-breaking papers 
\cite{GrunwaldHK19B,VovkW21,Shafer19,wasserman2020universal}
appeared on arxiv.
Since then, development has been remarkably fast, often centering around GRO {\em (growth-rate optimal)\/} e-variables and -processes (see \cite{GrunwaldHK19B,ramdas2023savi} who also provide more historical context). Growth rate (also called {\em e-power\/} \cite{zhang2023exact}) is a natural analogue of power in the e-process setting, in which sample sizes are not fixed in advance, %providing a measure of how fast $S_{\theta}(X^n)$  tends to grow,
related to the minimum sample size needed to reject the null and \textsc{ci} width. 
%Indeed growth rate has also been called {\em e-power\/} \cite{zhang2023exact}: optimizing for power in the design of e-variables directly is different and would invariably lead to $\Snp{\alpha}$, the NP e-variable, which behaves dreadfully if sample size or loss function are not fixed in advance.
As a first step, GRO e-methods were successfully developed for basic workhorses of statistics such as the z-test (the one appearing in Example~\ref{ex:prebexample} has GRO status), the t-test \cite{GrunwaldHK19B}, the test of two proportions \cite{TurnerG22B} and the logrank test \cite{TerschurePLG21},
 which has been successfully deployed in a `live' meta-analysis of ongoing clinical trials
 %concurrently conducted by several medical research teams across the world 
%--- probably the first meta-analysis in the world in which Type-I error probabilities could be maintained while data came in sequentially and additional trials were added while the meta-analysis was already started 
\cite{terschure2022bcg}. 
%These tests and their corresponding \textsc{CI}'s are implemented in the CRAN package {\tt safestats} \cite{ly2022safestats}. 
The t-test setting has been extended to general tests and \textsc{CI}'s with group invariances and linear and nonparametric Gaussian process regresssion   \cite{perez2022estatistics,lindon22anytime,neiswanger2021uncertainty}.
%\cite{perez2022estatistics}, including standard linear regression \cite{lindon22anytime} and the highly nonparametric Gaussian process regression \cite{neiswanger2021uncertainty}. 
The test of two proportions has been extended  to $k$-sample tests of general exponential families \cite{HaoGLLA24},
CMH tests  \cite{TurnerG23} and to conditional independence (of $X$ and $Y$ given $Z$) testing under a {\em model-X assumption\/} 
(combineable with arbitrary models for $Y\mid X,Z$) 
\cite{GrunwaldHL22}.
E-variables that 
are not GRO 
%have not directly been designed as GRO 
yet still have good power-like properties have been quite successful in multiple testing applications \cite{wang2022false,wang2022values} as well as several other nonparametric problems  \cite{Podkopaev22skit,waudby2023estimating,henzi2021valid}
 --- the recent overview \cite{ramdas2023savi} provides a comprehensive list.  
In all applications mentioned, the qualitative behavior is similar to that of %the simplistic 
Example~\ref{ex:prebexample}, with \textsc{ci} widths of order $O(f(n)/\sqrt{n})$ per parameter of interest, $f(n)$ varying from constant to $O(\sqrt{\log n})$, depending on the specific
application. 

\paragraph{(Current) Limitations and Challenges} These initial successes notwithstanding, the development of e-values is, of course, still in its infancy, competing with almost a century of p-value development. As such, many challenges remain. To appreciate these, we first note that the aforementioned GRO-type approaches can in principle be made competitive, in terms of sample sizes needed to draw a conclusion, with classical ones that rely on BIND --- see below; sometimes they even significantly beat such classical methods (e.g. \cite{TurnerG23,waudby2023estimating}). Also, \cite{GrunwaldHK19B}
shows that GRO e-values exist and can be calculated for very general testing problems.
\commentout{
main ones, we need to say a little about e-value design. Currently, broadly speaking, two dominant methods for constructing e-values for particular settings have emerged: {\em universal inference\/} and {\em GRO}. Both are surprisingly generally applicable --- the first essentially in arbitrary parametric and nonparametric testing settings, the second under some very weak regularity conditions on the model(s) involved. However, they each have their own weakness: the first leads to procedures that have less power than their classical counterparts, and, worse, the loss of power increases with the dimensionality of the problem CITE. The second (which is indeed also known as `optimizing for e-power' CITE) behaves better power-wise, but it is in general}
Yet in general, this calculation is not efficient.
%, involving the minimization of a KL divergence over the infinite-dimensional set of all prior distributions on the null hypothesis. 
In some cases (such as \cite{perez2022estatistics,TurnerG23,GrunwaldHL22} mentioned above), they admit an analytical and hence efficiently calculable expression, but for others,  they do not. 
These hard cases include regression (i.e. $Y=f_{\theta}(X,Z) + \textsc{noise}$) that involves a nonlinearity, such as GLMs and  Cox proportional hazards, whenever the variable $X$ to be tested (e.g. treatment vs. control) does {\em not\/} satisfy the model-X assumption (conditional distribution of $X$ given $Z$ known). While model-X  is automatically satisfied in clinical trials, 
%and then all such e-values are readily calculated,  
there are of course many important cases in which it is not. {\em Universal inference\/} \cite{wasserman2020universal} provides an alternative generic e-design method that does lead to efficiently calculable e-values in such cases, but in regression problems it is not competitive in terms of power with classical methods for medium- to high-dimensional models \cite{tse2022note} --- its strength has rather been to provide  e-values for complex $\cH(\snull)$ that have simply eluded classical testing \cite{dunn2021universal}. 
%Also the generic method of {\em calibrating\/} p-values into e-values does not provide a solution, as we explain in the SI, Section D. 

\paragraph{Challenges -- II} With GRO-type methods one can obtain comparable performance in terms of power
%(or more precisely, sample size needed to reject a null under the alternative) as with 
as compared to
classical approaches. In many (not all) settings though, one needs to engage in optional stopping to achieve this. For a broad class of e-values, this is no problem (\cite{GrunwaldHK19B} provides a detailed analysis): all coverage and Type-I risk guarantees are retained under such optional stopping. Still, it points towards a second challenge for e-methods, sociological/psychological rather than statistical: it requires researchers to think differently, and this is, of course, always difficult to accomplish. In this respect, the tech industry is at the forefront: anytime-valid methods based on e-processes have been adopted by several major tech companies \cite{lindon22anytime}.
\commentout{As a case in point, the present paper can be read as implying that a small loss in power (broader confidence intervals) should be quite acceptable if it leads to validity under a much broader class of loss functions.  Still, when I present this work to other researchers, the first reaction is almost always `but what about power?'}

\section{Discussion, Future Work, Conclusion}
\label{sec:discussion}
We provide a few concluding remarks. First we 
%bring up a potential weakness of the e-approach: the lack of a universally applicable definition of optimality. Second, we 
analyze  in what sense we solved the `roving $\alpha$' issue that motivated this work. Second, we discuss related work. Third, we suggest a `road ahead' for e-methods. 
\paragraph{Roving $\alpha$ Revisited: the Quasi-Conditional Paradigm} 
Assume we have a prior on  $\Hnull$ and $\Halt$ and priors $W_0$ and $W_1$ on the distributions inside these hypotheses. We can then use Bayes' theorem to calculate the Bayes posterior $P(\Hnull \mid Y)$ based on data $Y$. Suppose we reject the null if $P(\Hnull \mid Y) \leq \alpha$. We may  then define, for all $y$ for which this holds, i.e. for which we reject the null,  the 
{\em conditional Type-I error probability\/}  $\hat\alpha$ to be simply equal to this posterior probability, $\hat\alpha:= \hat\alpha_{|y} := P(H_0 \mid Y=y)$. This implies that, for any fixed $\alpha_0\leq \alpha$, for any long sequence of studies, with probability tending to one, 
\begin{align}\label{eq:calibrate}
\parbox{20em}{``among all studies with $\hat\alpha \leq \alpha_0$, we make a \\ Type-I error  at most  a fraction $\alpha_0$ of the time''.}
\end{align}
Such a fully conditional statement, with post-hoc determined $\hat\alpha_{|Y}$,  is  only correct if the priors can be fully trusted, i.e. if one accepts a fully subjective Bayesian stance. It would definitely be incorrect if we set $\hat\alpha_{|Y}$ either to a p-value or the reciprocal of an e-value based on $Y$. 
Still, as we have seen, if we instead use e-values to perform a data-dependent action, which is allowed to get more extreme (higher Type-I loss) as our evidence against the null increases (higher e-value) according to the maximally compatible rule (which in simple cases is given by \eqref{eq:evaldr}), then we {\em do\/} get an `unconditionally' valid bound on Type-I risk. Thus, using e-values, setting a roving $\alpha$ to be equal to $\hat\alpha:= \ell/S(y)$ for the observed $y$ is still incorrect if we interpret it as expressing \eqref{eq:calibrate}; but it is correct if we interpret it as setting a `roving bound' of $\ell/\hat\alpha$ on the Type-I loss $L_B(\snull,a)$ we dare to make: 
if we make sure to pick $a$ so that $L_B(\snull,a) \leq \ell/\hat\alpha$, then we have compatibility and hence Type-I risk safety, \eqref{eq:TypeIRiskSafeSimplified}.
%(we provide further guidance on choosing $\ell$ in the SI Appendix).
Note that $B$ is allowed to be  any function of,  hence `conditional on'  data; but its performance  is evaluated `unconditionally', i.e. by means of \eqref{eq:TypeIRiskSafeSimplified} which is an unconditional expectation. This {\em quasi-conditional stance}, explained further in \cite{Grunwald23}, provides a middle ground between fully Bayesian and traditional Neyman-Pearson-Wald type methods and analysis. 
%It is related to, but quite different from, {\em conditional frequentist\/} approaches \citep{Kiefer77,BergerBW94}. 

\paragraph{Where does the Type-I risk bound $\ell$ come from?}
Whereas $B$ may arbitrarily depend on data  $Y$, the upper bound $\ell$ in \eqref{eq:TypeIrisksafe} has to be set independently of $Y$ after all. 
It may still vary from decision problem to decision problem though (in the SI Appendix we explain what this means in terms of Neyman's inductive behavior paradigm and we explain that setting $\ell=1$, as was done for mathematical convenience in Section~\ref{sec:math}, is unproblematic). In many practical testing problems, we might expect that for all $b \in \cB$, $\cA_b$ contains a special action $0$, which stands for `do nothing' (keep status quo),  which would then have the same Type-II loss under all $b \in \cB$, i.e. there is an $\ell'$ such that for all $b \in \cA_b$, $L_b(\alt,0)= \ell'$. We might then simply set $\ell = \ell'$, making sure that we can expect our result  (with all costs and benefits incorporated), whatever action we take, to be no worse than ``the cost of doing nothing when  we really should have done something''. 
%We note that the results in tis paper work for arbitrary  $\ell$;  was only done for convenience (see the SI Appendix). 
\paragraph{Related Work: Inferential Models}
Like we do in Example~\ref{ex:prebexample}, 
Martin, Liu and collaborators  
\citep{martin2015inferentialB,Martin21}  point out discrepancies between what one would expect to be a valid confidence set  according to a fiducial, \cd\  or Bayesian posterior and what are actually valid confidence sets according to the unknown, true distribution. They provide {\em inferential models (IMs) \/} as a safer alternative. Unlike e-posteriors, the specific IMs proposed by \citep{martin2015inferentialB} still work under the BIND assumption and thus will not provide reliable inferences if BIND does not hold. But it may very well be that some other  IMs (IMs constitute a family of methods, not a single method)  essentially behave like e-posteriors.
\commentout{, as well as placing both IMs and the present work in the context of {\em safe probability\/} \cite{grunwald2018safe}, a method for assessing precisely what decision tasks an inference method can be safely used for and what tasks it cannot.}

\paragraph{The Road Ahead}
Future work 
%on the decision-theoretic interpretation of e-methods provided here 
will include a further investigation of the `quasi-conditional' idea launched above, as well as of the precise relation to Martin's IMs and other related uses of e-variables  such as \cite{BatesJSS22} who, like us, employ e-values with a Type-I/II-error distinction with more than 2 actions.
%Further research into defining {\em optimal} e-variables is also defined: in the previous section we mentioned the standard GRO optimality, but this notion is based on assuming a context in which optional stopping and continuation (post-hoc adjustment of sample size) is the norm. Here, we emphasized the suitability of e-values for post-hoc adjustment of decision problems, which may prompt other, complementary possibilities for defining optimality (recall that our Theorem~\ref{thm:mainsimple} said that Type-II admissibility requires the use of e-values; but it said nothing about whether any of the many admissible e-values (such as $\Snp{\alpha}$ and $\Slr$) would be more suitable than any other).  

Another unresolved fundamental issue is this: most practitioners still interpret p-values in a Fisherian way, as a notion of {\em evidence\/} against $\cH(\snull)$. Although this interpretation has always been controversial, it is to some extent, and with caveats (such as `single isolated small p-value does not give substantial evidence' \cite{mayo2018statistical} or `only work with special, {\em evidential\/} p-values \cite{Greenland22}'), adopted by highly accomplished statisticians, including the late Sir David Cox \cite{Cox18,mayo2006frequentist}. Even Neyman \cite{Neyman76} has written `my own preferred substitute for `do not reject $H$' is `no evidence against $H$ is found'. In light of the present results, one may ask if, perhaps, {\em e-values are more suitable than p-values\/} as such a measure. We preliminarily conjecture they are, and motivate this in the SI  --- although a proper analysis of such a claim warrants  a separate paper, which we hope to provide in the future.

%; otherwise the work seems quite different though. Second, 
Perhaps more important for practice than all of this though, in light of Section~\ref{sec:stateoftheart} above,  is the further development of practically useful e-variables for standard settings (such as GLMs) in which they are not yet available, as well as more accompanying software such as  \cite{ly2022safestats}. 
%GRO e-variables being computationally inefficient, new techniques will have to be developed. While challenging, there are certainly promising avenues. For example, in the nonparametric setting of \cite{waudby2023estimating}, GRO e-values could also not be directly computed, but the pragmatic empirical-likelihood-based GRAPA method the authors provide still leads to surprisingly powerful and computationally valid e-variables. It may very well be that a variation of GRAPA can be employed here as well. 
%Besides 
%
%this,
%developing e-variables for such standard statistical problems, 
%further development of easy-to-use software such as  \cite{ly2022safestats} is crucial for further adoption of %e-methods by the statistical community. 

\paragraph{Conclusion: A different kind of Robustness}
%Both when dealing with post-hoc loss functions and with optional continuation, aspects of the %decision task are {\em unknown\/}  or {\em unknowable\/} at the time of data-analysis. 
Standard $\textsc{p}$ and $\textsc{cs}$-based decision rely on  BIND, an assumption that will often be false or unverifiable at the time study results are published. In this paper we showed that e-values provide valid error and risk guarantees without making such assumptions, and are therefore {\em robust\/} tools for inference. But whereas `robustness' usually refers to robust inference in the presence of outliers, or model structure or noise process misspecification, this is a different, much less studied form of robustness: robustness in terms of the actual decision task that the study results will be used to solve. 

\DeclareRobustCommand{\VOORVOEGSEL}[3]{#3} % set up for citation
%two column-version
\paragraph{Acknowledgements} The author would like to thank an anonymous referee and Dr. W.~Koolen, who both independently alerted him to the fact that, without essential loss of generality, one may assume Type-II loss to decrease whenever Type-I loss increases. 

\bibliography{master,peter,local}
%\end{document}

\newpage

\appendix
\section*{\huge Supporting Information Appendix}
\section{Supporting Information for Section~\ref{sec:gnp} and Section~\ref{sec:math}.\ref{sec:TypeI}}
\paragraph{Unbounded expected loss based on  \eqref{eq:pvalb} and an `improvement' of \eqref{eq:pvalb}}
This issue is best illustrated by (but certainly not limited to)  a discrete-valued p-value $\pval$ that can take values $1,1/2,1/4,1/8,\ldots, 1/2^k$ for some $k > 0$ and that is piece-wise strict, i.e. it  satisfies $P_0(\pval \leq \alpha) = \alpha$ for $\alpha \in \{1,1/2, \ldots, 1/2^k \}$.  Consider a GNP decision task as in Section~\ref{sec:TypeI} with loss function satisfying $L_b(\snull,a) = 2 a$, for $a \in \cA_b =  \{0,1 \ell , 2 \ell, 4 \ell, \ldots,2^k \ell \}$. 
Based on \eqref{eq:pvalb}, upon observing $\pval = 2^{-c}$, one would take action $2^c$. The resulting expected loss, analogously to \eqref{eq:losstoolarge}, is given by $\sum_{c=1}^k 2 \cdot 2^{-c} 2^{c} = 2 k$ which goes to $\infty$ as we make $k$ larger --- showing that the expected loss can be unbounded if we base decisions on \eqref{eq:pvalb}. Now, instead of using \eqref{eq:pvalb} it may seem more reasonable to pick the largest $a$ such that 
\begin{equation}\label{eq:pvalc}
\textsc{q}(y) \cdot L_b(\snull,a) \leq \ell,
\end{equation}
where $\textsc{q}(y) = \pval(y)/2$: with this modification, for each $a\in \cA_b$, we end up multiplying $L_b(\snull,a)$ in \eqref{eq:pvalc} with exactly the probability that $a$ will be selected (rather than, as in \eqref{eq:pvalb}, with a larger probability). For example, $a= 2^{c}$ will be selected if $\textsc{q}(y) = 2^{-c}$; this happens iff $\pval(y) = 2^{-c+1}$, which happens with exactly probability $2^{-c}$, so with probability $\textsc{q}(y)$).  
Yet still, using \eqref{eq:pvalc} leads to unbounded expected loss: in the above sample the expected loss is now $k$ rather than $2k$, still growing linearly in $k$. 

\paragraph{Standard conversions of p-values into e-values are sub-optimal}
\cite{shafer2011test,ShaferV19,VovkW21} have studied functions that convert arbitrary p-values to e-variables.
These {\em calibrators\/} are strictly decreasing functions $f$, such that  $S(Y) :=f(\pval(Y))$ is an e-variable whenever $\pval$ is a p-value. Calibrators invariably have the property that as $\pval \downarrow 0$, $f(\pval)$ grows towards $\infty$ more slowly than $1/\pval(Y)$ (note that, for any e-variable $S$, $1/S$ is a (conservative) p-value but the converse does not hold).  For example, \cite{ShaferV19},
$f(\pval) = 1/\sqrt{\pval} -1$ is a calibrator.
Given that such calibrators exist, one might wonder if in this paper we are really merely advocating to change the {\em scale\/} at which evidence against the null is expressed: isn't it sufficient to take a p-value to express evidence, convert it to an e-value, and then use the maximally compatible decision rule \eqref{eq:evaldr}? If so, this would undercut our arguments for using e vs. p. The answer is that, while calibrating p-values and then using \eqref{eq:evaldr} does give us Type-I risk safety, it is still not advisable because e-variables arising from calibrated p-values are typically far from optimal. Intuitively, a `good' e-variable, relative to a given alternative, is one that tends to be large (provide much evidence) if the alternative is true.
This can be formalized in terms of power or, as stated in the main text, GRO. For example, suppose we aim to test null $\theta \leq 0$ against alternative $\theta \geq \delta$. If we take the 1-sided e-variables $S^+_{\theta}$ for the normal location family as defined underneath \eqref{eq:nstarfun} with $\alpha^*=\alpha$ and $n^*=n$ specified correctly, then to get power $0.8$ we need a factor of $\approx 1.75$ more data points than if we use the standard UMP NP test (this follows from the derivation in \cite[Appendix B.6]{GrunwaldHK19B}; as explained there, the factor can be significantly reduced by optional stopping). If, instead, we use the p-value corresponding to this UMP test directly and turn it into an e-value by the above calibrator, we need a factor of $\approx 3.0$ more data \cite[Section 7]{GrunwaldHK19B}. 
The reason for this discrepancy is that calibrators work for arbitrary p-values and are  thus `blind' to the underlying sampling model (in this case, normal location). In order to get high power it is invariably (much) better to use e-values designed for the underlying model directly --- so, importantly, the distinction between e and p is not just a matter of scale and designing `good' e-values (with good GRO properties) is a nontrivial taks --- we cannot just take any given p-value with good power-properties and calibrate.   

\paragraph{Why normalizing $\ell$ to 1 in \eqref{eq:TypeIRiskSafeSimplified} and \eqref{eq:TypeIRiskSafeSimplifiedB} is not harmful}
This is further discussed, in a wider context, in the Supporting Information for Section~\ref{sec:discussion} further below. 
\section{Supporting Information for Section~\ref{sec:math}.\ref{sec:TypeII} until and including Section~\ref{sec:eposterior}.\ref{sec:retrace}}\label{app:proofs}
In this section we state and prove Theorem~\ref{thm:maingeneral}, a general result which has Theorem~\ref{thm:mainsimple} of Section~\ref{sec:math} as a special case, extending it to the case of general GNP decision problems (e.g. confidence intervals) as defined in Definition~\ref{def:gnpgeneral}, Section~\ref{sec:eposterior}. 
However, in the first subsection below, we first state and prove a simpler form of the theorem which works for {\em simple\/} GNP decision problems, as defined in Section~\ref{sec:math}, simplified even further by requiring countable $\cY$. This allows us to strip away all issues about `almost surely, measurability' etc. and focus on the key idea, which lies in the fundamental Lemma~\ref{lem:equalizersimple}. 
The proof of the general Theorem~\ref{thm:maingeneral} is based on essentially the same key insight but requires substantial additional notations and quantifications. We prepare these in 
Section~\ref{app:proofs}.\ref{app:definitions}
below, where we also explain why the probabilities in \eqref{eq:TypeIIbettera}, \eqref{eq:TypeIIbetterb} (used to define Type-II strictly-better-than and admissibility) as well as the expectation \eqref{eq:TypeIRiskSafeSimplified} may be undefined 
in pathological cases, and we extend the definitions of Type-II-strictly better and admissibility to ensure that these are always well-defined. 
We then 
state and prove Lemma~\ref{lem:equalizer}, the general version of Lemma~\ref{lem:equalizersimple} in Section~\ref{app:proofs}.\ref{app:lemma}, and state and prove the general theorem in Section~\ref{app:proofs}.\ref{app:finalresult}. 
Before we do all this though, we explain, as promised underneath Theorem~\ref{thm:mainsimple} in the main text, how we can make any GNP testing problem rich by adding a single additional loss function and why this makes the condition of richness a reasonable one. This is illustrated by Example~\ref{ex:add} which indicates why a condition like richness is necessary and thereby gives a (very) high-level intuition to the proof.
\paragraph{Enforcing Richness relative to $S$: why richness is a weak condition, and why it is needed}
For any sharp e-variable $S$ which we might want to base our decisions on, we can trivially {\em enlarge\/} any given
GNP testing problem 
$(\cB',\{(\cA_b, L_b(\snull,\cdot):  \cA_b \rightarrow \reals^+_0): b \in  \cB' \})$
by setting $\cB := \cB' \cup \{ \textsc{id}(S)   \}$, adding a loss function indexed by $\textsc{id}(S)$ with action space $\cA_{\textsc{id}(S)}$ set equal to the co-domain of $S$ and, for $s \in \cA_{\textsc{id}(S)}$, we set  
$L_{\textsc{id}(S)}(\snull,s):= s$. Then the extended GNP decision problem will automatically be rich relative $S$, and Part 2 of Theorem~\ref{thm:mainsimple} can be applied. In reality, DM usually is not aware of the full details of the problem anyway, being only presented one particular loss function $L_b$, an element of a set  $\{L_b: b \in \cB'\}$ that is unknown: DM will only know the definition of the particular function $L_b$ that she is presented with. 
Thus, assuming the set already contains this additional, special loss $L_{\textsc{id}(S)}$ does not really impose any additional condition on the DM and only serves to make the analysis more robust, it thus seems a reasonable assumption. It gives an imagined adversary who chooses $b=B(Y)$ as function of $Y$ more power, and as illustrated by the example below, without something like this added power, the theorem simply cannot hold. As such is analogous to (but not the same as) allowing an adversary to randomize between actions, as required for the minimax theorem in game theory. To take the analogy to the minimax theorem even further, we note that, just like in that theorem, one side of the proof (Part 1) is in essence trivial whereas the other side (Part 2) requires a sophisticated argument.  
\begin{example}
\label{ex:add}    
Consider a GNP testing problem and e-variable $S$ defined as follows:
\begin{enumerate}
    \item $\cY = \{0,10,20\}$; $\cH(\snull) = \{P_0\}$ with $P_0(Y=10) = 1/20$ and $P_0(Y=20)=1/40$. 
    \item $\cB = \{b_1\}$, $\cA_{b_1} = \{0,9, 19, 21\}$,  and for all $a \in \cA_{b_1}$, we set  $L_{b_1}(\snull,a) = a$.
    \item $S(y) := y$.
\end{enumerate}
We note that $S$ is a sharp e-variable, but the GNP testing problem is {\em not\/} rich relative to $S$. 
And indeed, the conclusion of Part 2 of Theorem~\ref{thm:mainsimple} is violated here: 
$$
\delta_{b_1}(0)=0 ; \delta_{b_1}(10) =9 ; \delta_{b_1}(20) =19 
$$
is seen to be a decision rule that is  maximally compatible with $S$, but it is not admissible: the decision rule 
$$
\delta'_{b_1}(0)=0 ; \delta'_{b_1}(10) =9 ; \delta'_{b_1}(20) = 21 
$$
is Type-II strictly better than $\delta$ yet still Type-I risk safe, since ${\bf E}_{P_0}[L_{b_1}(\snull,\delta'_{b_1}(Y))] = 9/20+ 21/40 = 39/40 < 1$. So we have a decision rule that is maximally compatible relative to a sharp e-variable yet not admissible --- this shows some additional condition such as richness is necessary. To get an initial idea why `richness' does the trick, 
let's enlarge $\cB$ as above to make the resulting GNP testing problem rich relative to $S$. That is, we add loss function indexed by $b_2 := \textsc{id}(S)$, so that $\cA_{b_2} = \{0,10, 20\}$, and for all $a \in \cA_{b_2}$, we have  $L_{b_2}(\snull,a) = a$. Decision rule $\delta$ above was maximally compatible in the original problem, and the only way to extend it to the enlarged problem while keeping it maximally compatible is to set 
$\delta_{b_2}(y) = y$ for all $y \in \cY$. But now, in this enlarged problem, $\delta$ has become admissible! 
Rather than proving this in full generality we will just show that the decision rule $\delta'$ above that witnessed $\delta$'s inadmissibility in the original problem will not witness it any more in the enlarged problem. To see why, note that to witness inadmissibility of $\delta$, we must have that $\delta'$ is Type-II strictly better than $\delta$ and at the same time Type-I risk safe. The only way to extend $\delta'$ to the enlarged problem while keeping it strictly better than $\delta$ is to set it such that $\delta'_{b_2}(y) \geq \delta_{b_2}(y)$ for all $y \in \cY$. But then it is not Type-I risk safe any more, so this extended $\delta'$ does not show $\delta$ to be inadmissible! To see why the extended $\delta'$ is not Type-I risk safe any more, note that, by adding loss $L_{b_2}$, we gave the imagined adversary more power: upon observing $y=10$, the adversary can now choose $B=b_2$, and upon observing $y=20$, she can choose $B=b_1$. Then ${\bf E}_{P_0}[L_{B}(\snull,\delta'_{B}(Y)) \geq  (1/20) \cdot 10 + (1/40) \cdot 21 = 21/40 > 1$, so $\delta'$ is not Type-I risk safe. Lemma~\ref{lem:equalizersimple} and its generalization Lemma~\ref{lem:equalizer} further below formalize this idea and are the key to proving Part 2 of Theorem~\ref{thm:mainsimple} in the main text and its generalization Theorem~\ref{thm:maingeneral} below. 
\end{example}

\subsection{Theorem~\ref{thm:mainsimple} for countable $\cY$ and $\cH_(\snull)$ with full support}\label{app:supersimple}
Throughout this subsection we assume that we deal with a  GNP testing problem that is simple in the sense of Section~\ref{sec:math}, with countable $\cY$ and $\cH(\snull)$ with full support. `Full support' simply means that for all $y \in \cY$, all $P \in \cH(\snull)$, we have $P(Y=y) > 0$.  Because the testing problem is simple, we can define maximum compatibility in terms of \eqref{eq:maxcompatible}. 

So, fix any simple GNP testing problem of this type. For any given random variables $U= u(Y), V=v(Y)$, we write $U \leq V$ as an abbreviation of: for all $y \in \cY$, $u(y) \leq v(y)$; similarly for $U < V$ and $U=V$. 
\paragraph{Key Concept and Lemma: Equalizing Maximal Compatibility}
For any function $B: \cY \rightarrow \cB$, we say that  decision rule $\delta^{\circ}$ is {\em equalizing-maximally compatible\/} relative to e-variable $S$ {\em when restricted to $B$\/} if
\begin{equation}\label{eq:equalizersimple}
\text{for all $y \in {\cal Y}$}:\ 
L_{B(y)}(\snull,\delta^{\circ}_{B(y)}(y)) =  S(y), \text{\ i.e.\ }
L_B(\snull,\delta^{\circ}_B) = S.
\end{equation}
The following lemma is the key insight needed to prove Theorem~\ref{thm:mainsupersimple} below and hence the special case of Theorem~\ref{thm:mainsimple} in the main text with a simple GNP testing problem and countable $\cY$. It will later be generalized to Lemma~\ref{lem:equalizer} which plays the same key role for the general result Theorem~\ref{thm:maingeneral}.
  \begin{lemma}\label{lem:equalizersimple}
Fix any simple GNP testing problem as above. Suppose that $S$ is a sharp e-variable and  $\delta^{\circ}$ is a Type-I risk safe decision rule such that
there exists a function  $B: \cY \rightarrow \cB$ so that
$\delta^{\circ}$ is equalizing-maximally compatible  relative to $S$ when restricted to $B$, as above.  Then $\delta^{\circ}$ is fully compatible with $S$, i.e. for  all $b \in \cB$, we have: $L_b(\snull,\delta^{\circ}_b) \leq  S$. 
\end{lemma}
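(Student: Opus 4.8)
The plan is to produce a single nonnegative witness function that dominates both $S$ and every Type-I loss, and then use the sharpness of $S$ together with full support to squeeze this witness down to $S$ at every outcome, which is precisely what full compatibility demands.

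First I would assemble the pointwise inequalities that the hypotheses already supply. By Type-I risk safety \eqref{eq:TypeIRiskSafeSimplifiedB} there is a function $U:\cY \rightarrow \reals^+_0$ with $\sup_{P_0 \in \Hnull}{\bf E}_{P_0}[U(Y)] \leq 1$ and $\sup_{b \in \cB}\bloss{b}{\snull}{\delta^{\circ}_b(y)} \leq U(y)$ for every $y$. On the other hand, equalizing-maximal compatibility \eqref{eq:equalizersimple} states that $S(y) = \bloss{B(y)}{\snull}{\delta^{\circ}_{B(y)}(y)}$, and this particular value is certainly no larger than the supremum over all $b$. Chaining these gives, for all $y \in \cY$,
$$ S(y) \;=\; \bloss{B(y)}{\snull}{\delta^{\circ}_{B(y)}(y)} \;\leq\; \sup_{b \in \cB}\bloss{b}{\snull}{\delta^{\circ}_b(y)} \;\leq\; U(y), $$
so that $S \leq U$ pointwise on $\cY$.

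Next I would bring in sharpness. Let $P_0 \in \Hnull$ be a distribution witnessing ${\bf E}_{P_0}[S] = 1$. Applying the risk bound to this same $P_0$ and using $S \leq U$,
$$ 1 \;=\; {\bf E}_{P_0}[S(Y)] \;\leq\; {\bf E}_{P_0}[U(Y)] \;\leq\; 1, $$
so all three are equal and in particular ${\bf E}_{P_0}[U(Y) - S(Y)] = 0$. Since $U - S \geq 0$ pointwise and $P_0$ has full support (every $y \in \cY$ carries strictly positive mass), the expectation is a sum of nonnegative terms each weighted by a positive probability; its vanishing forces $U(y) = S(y)$ for every $y \in \cY$. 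Combining this with the domination already recorded yields $\sup_{b \in \cB}\bloss{b}{\snull}{\delta^{\circ}_b(y)} \leq U(y) = S(y)$, i.e.\ $\bloss{b}{\snull}{\delta^{\circ}_b(y)} \leq S(y)$ for all $b \in \cB$ and all $y$, which is exactly full compatibility \eqref{eq:compatible}.

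The only genuinely load-bearing step is the squeeze in the third paragraph: it is sharpness of $S$ (so that its integral already exhausts the budget of $1$) that leaves no slack for $U$, and hence for $\sup_b L_b$, to exceed $S$ anywhere, while full support is what upgrades the resulting equality in expectation to equality at each individual outcome. Neither ingredient is technically deep here, but both are indispensable: without sharpness one could have $S < U$ on a set of positive probability, and without full support one could have $\sup_b L_b > S$ on a null set, in either case breaking full compatibility.
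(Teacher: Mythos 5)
Your proof is correct and follows essentially the same route as the paper's: both establish the pointwise domination $S \leq U$ (the paper names the witness an e-variable $S'$ via Proposition~\ref{prop:typeIsafety}, you work with $U$ from the definition directly, which is the same object) and then use sharpness together with full support on the countable $\cY$ to force pointwise equality, from which full compatibility follows. The only cosmetic difference is that you phrase the squeeze as ${\bf E}_{P_0}[U-S]=0$ with nonnegative integrand, whereas the paper runs the equivalent argument by contradiction.
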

To understand the lemma, 
suppose we are given some e-variable $S$ and some Type-I risk safe $\delta$. 
Then $\delta$ need not be compatible with $S$ (it must be compatible with {\em some\/} $S'$, but not necessarily {\em this\/} $S$). The lemma says that if $\delta$ is in some specific sense `partially' compatible with $S$ though, namely for a specific $B$, and $S$ is sharp, then it must be fully compatible with $S$ after all. Now, in the case where the GNP testing problem has been made rich relative to $S$ by adding the special loss function $\textsc{id}(S)$ above, we would typically apply this lemma with $B=\textsc{id}(S)$, i.e. $B$ is a constant, independent of $Y$; but the lemma works even if $B$ may vary with $Y$.  
The surprising thing here is that compatibility relative to $B$ (which may even be the constant $\textsc{id}(S)$) has repercussions for the behavior of $\delta^{\circ}_{b'}$ for {\em all\/} $b' \in \cB$.

The lemma immediately leads to the following corollary: 
\newtheorem{corollary}{Corollary}
\begin{corollary}
Fix any simple GNP testing problem as above. If a decision rule $\delta^*$ is maximally compatible relative to a sharp e-variable $S$ (i.e. \eqref{eq:maxcompatible} holds) and equalizing-maximally compatible relative to the same $S$ when restricted to some function
$B: \cY \rightarrow \cB$ 
%collection $\Lambda$ as above, 
then any  $\delta^{\circ}$ that is equalizing-maximally compatible relative to $S$ when restricted to $B$ and Type-I risk safe must also be fully compatible with $S$ and hence, since $\delta^*$ is maximally compatible, satisfy $\delta^{\circ}_b \leq \delta^*_b$ for all $b \in \cB$.
\end{corollary}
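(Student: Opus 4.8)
The plan is to read the corollary off Lemma~\ref{lem:equalizersimple} almost verbatim, supplying only the short passage from ``fully compatible with $S$'' to the pointwise domination $\delta^{\circ}_b \leq \delta^*_b$ via the definition \eqref{eq:maxcompatible} of $\delta^*$.

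First I would apply Lemma~\ref{lem:equalizersimple} to $\delta^{\circ}$ itself. The corollary supplies exactly the lemma's hypotheses for $\delta^{\circ}$: $S$ is a sharp e-variable, $\delta^{\circ}$ is Type-I risk safe, and $\delta^{\circ}$ is equalizing-maximally compatible relative to $S$ when restricted to the given $B$. The lemma then returns that $\delta^{\circ}$ is fully compatible with $S$, i.e. $L_b(\snull,\delta^{\circ}_b) \leq S$ for every $b \in \cB$. This already delivers the first assertion of the corollary. I note that the hypothesis that $\delta^*$ is equalizing-maximally compatible for this same $B$ is used only to make the function $B$ available to the lemma; the conclusion about $\delta^{\circ}$ rests entirely on $\delta^{\circ}$'s own equalizing property.

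Second I would deduce the domination. Fix $b \in \cB$ and $y \in \cY$. By full compatibility, $\delta^{\circ}_b(y)$ belongs to the set $\{a \in \cA_b : L_b(\snull,a) \leq S(y)\}$. Because the problem is simple, $L_b(\snull,\cdot)$ is monotonically and, on each interval, continuously increasing in $a$ on the finite union of closed intervals $\cA_b$, so this set attains a largest element, which by \eqref{eq:maxcompatible} is precisely $\delta^*_b(y)$. Hence $\delta^{\circ}_b(y) \leq \delta^*_b(y)$. Since $b$ and $y$ were arbitrary, $\delta^{\circ}_b \leq \delta^*_b$ for all $b \in \cB$, which is the second assertion, invoked ``since $\delta^*$ is maximally compatible''.

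I do not anticipate a genuine obstacle, since the substance has been isolated in Lemma~\ref{lem:equalizersimple} and the corollary is essentially its repackaging. The only point demanding a line of care is that \eqref{eq:maxcompatible} really does define $\delta^*_b(y)$ as the maximum of the compatible-action set, which requires the largest such action to exist; this is guaranteed by the simple-problem structure (monotone, piecewise-continuous Type-I loss on a finite union of closed intervals). Once that is in place, the inequality $\delta^{\circ}_b(y) \leq \delta^*_b(y)$ is immediate from the facts that $\delta^*_b(y)$ is the largest compatible action and $\delta^{\circ}_b(y)$ is a compatible action, so no monotonicity argument beyond well-definedness of the maximum is even needed for this final step.
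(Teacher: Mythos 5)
Your proposal is correct and matches the paper's intent exactly: the paper gives no separate proof, asserting the corollary is immediate from Lemma~\ref{lem:equalizersimple}, and your two steps (apply the lemma to $\delta^{\circ}$ to get full compatibility, then use that $\delta^*_b(y)$ is by \eqref{eq:maxcompatible} the largest action in the compatible set) are precisely the intended argument. Your side remarks --- that the equalizing hypothesis on $\delta^*$ serves only to supply the function $B$, and that the simple-problem structure guarantees the maximum in \eqref{eq:maxcompatible} exists --- are both accurate.
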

\begin{proof}{\bf \ [of Lemma~\ref{lem:equalizer}]}
By Proposition~\ref{prop:typeIsafety}, there must be some e-variable $S'$ such that $\delta^{\circ}$ is compatible with $S'$, i.e. 
\begin{equation}\label{eq:ercsimple}
\text{for all $b\in \cB$:}\   
L_b(\snull,\delta^{\circ}_b) \leq  S'.
\end{equation}
By equalizing-maximal compatibility relative to $S$ when restricted to $B$, transitivity, and weakening \eqref{eq:ercsimple}, we must therefore also have
\begin{equation*}%\label{eq:ercb1simple}
S =  L_B(\snull,\delta^{\circ}_B) \leq  S',
\end{equation*}
so that $S \leq  S'$.
Suppose by means of contradiction that, even stronger, there is $y \in \cY$ such that $S(y) <  S'(y)$.
We know that for some $P_0 \in \cH(\snull)$,  ${\bf E}_{P_0}[S]=1$. But then ${\bf E}_{P_0}[S'] > 1$, so $S'$ is not an e-variable and we have arrived at a contradiction. 
Since we already established $S \leq  S'$  it follows that $S = S'$. But then using the inequality in   \eqref{eq:ercsimple} shows that for all $b \in \cB$, we have $L_b(\snull,\delta_b^{\circ}) \leq  S$ and the lemma is proved. 
\end{proof}
Armed with this result, we can now state and prove a restricted version of Theorem~\ref{thm:mainsimple}. 
\begin{theorem}\label{thm:mainsupersimple}
Consider any simple GNP testing problem as above, with countable $\cY$ and all $P \in \cH(\snull)$ having full support on $\cY$.  
\begin{enumerate} 
\item 
Suppose that decision rule $\delta$ is admissible. Then there exists an  e-variable $S$ such that $\delta$ is maximally compatible with $S$.
\item Suppose that $S$ is a sharp e-variable $S$  and $\delta^*$ is maximally compatible relative to $S$ (such a $\delta^*$ exists because we assume the GNP testing is simple); and assume further that the GNP testing problem is rich relative to $S$. Then $\delta^*$ is admissible. 
\end{enumerate}
\end{theorem}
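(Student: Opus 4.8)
For Part~1, I would simply package Proposition~\ref{prop:typeIsafety}. Since $\delta$ is admissible it is in particular Type-I risk safe, so the second part of Proposition~\ref{prop:typeIsafety} supplies an e-variable $S$ --- concretely $S(y)=\sup_{b\in\cB}L_b(\snull,\delta_b(y))$ --- with which $\delta$ is compatible. It then remains only to upgrade ``compatible'' to ``maximally compatible'', which I would do by contradiction: if some $\delta^\circ$ compatible with $S$ were Type-II strictly better than $\delta$, then by the first part of Proposition~\ref{prop:typeIsafety} this $\delta^\circ$ is itself Type-I risk safe, and would be a Type-I risk safe rule strictly dominating $\delta$ in the Type-II sense, contradicting admissibility of $\delta$. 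Hence $\delta$ is maximally compatible with $S$.

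For Part~2, note first that $\delta^*$ is compatible with $S$ by definition and hence Type-I risk safe by Proposition~\ref{prop:typeIsafety}, so I only need to rule out a Type-I risk safe $\delta^\circ$ that is Type-II strictly better than $\delta^*$; assume one exists, aiming for a contradiction. Because every $P\in\Hnull$ has full support on the countable $\cY$, the defining inequalities \eqref{eq:TypeIIbettera} and \eqref{eq:TypeIIbetterb} become pointwise: $L_b(\snull,\delta^\circ_b(y))\ge L_b(\snull,\delta^*_b(y))$ for all $b,y$, with strict inequality at some $(b_0,y_0)$. The engine of the proof is richness, which I would use to build an index map $B:\cY\to\cB$ that reads off $S$ exactly: for each $y$ the value $S(y)$ lies in the co-domain of $S$, so richness provides $b\in\cB$ and $a\in\cA_b$ with $L_b(\snull,a)=S(y)$, and I set $B(y):=b$. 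Since the maximally compatible $\delta^*$ of \eqref{eq:maxcompatible} selects the largest action whose Type-I loss is $\le S(y)$, and an action attaining $S(y)$ exists while $L_{B(y)}(\snull,\cdot)$ is increasing, that largest action again has loss exactly $S(y)$. Thus $L_{B(y)}(\snull,\delta^*_{B(y)}(y))=S(y)$ for every $y$, i.e.\ $\delta^*$ is equalizing-maximally compatible relative to $S$ when restricted to $B$.

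The decisive step is then precisely the mechanism of Lemma~\ref{lem:equalizersimple}. Evaluating the pointwise Type-II inequality at $b=B(y)$ gives $L_{B(y)}(\snull,\delta^\circ_{B(y)}(y))\ge S(y)$, while $\delta^\circ$, being Type-I risk safe, is compatible with some e-variable $S'$ and so $L_{B(y)}(\snull,\delta^\circ_{B(y)}(y))\le S'(y)$. Sandwiching yields $S\le S'$ pointwise, and now sharpness bites: for a $P_0$ with ${\bf E}_{P_0}[S]=1$, the e-variable bound ${\bf E}_{P_0}[S']\le 1={\bf E}_{P_0}[S]$ together with $S'\ge S$ forces ${\bf E}_{P_0}[S'-S]=0$, whence full support of $P_0$ gives $S'=S$ at every $y$. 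Consequently $\delta^\circ$ is fully compatible with $S$. But $\delta^*$ chooses the largest action with Type-I loss $\le S(y)$, so full compatibility of $\delta^\circ$ forces $L_b(\snull,\delta^\circ_b(y))\le L_b(\snull,\delta^*_b(y))$ for all $b,y$; combined with the reverse pointwise inequality from Type-II betterness this gives equality everywhere, contradicting the strict inequality at $(b_0,y_0)$. Hence no improving $\delta^\circ$ exists and $\delta^*$ is admissible.

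The hard part is this sandwich-plus-sharpness argument of Lemma~\ref{lem:equalizersimple}: it is what turns the local fact ``$\delta^\circ$ is at least as aggressive as the equalizer $\delta^*$ along the single index map $B$'' into the global fact ``$\delta^\circ$ is compatible with the same sharp $S$ at every $(b,y)$''. Everything hinges on the two regularity hypotheses doing distinct jobs: richness lets $B$ realize $S(y)$ exactly for each $y$ (without it, Example~\ref{ex:add} exhibits a maximally compatible rule for a sharp $S$ that fails to be admissible), and full support converts the expectation identity ${\bf E}_{P_0}[S'-S]=0$ into the pointwise identity $S'=S$. The remaining verifications --- that $\delta^*$ equalizes along $B$, and that Part~1 reduces to Proposition~\ref{prop:typeIsafety} --- are routine in this countable, full-support setting.
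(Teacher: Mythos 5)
Your proposal is correct and follows essentially the same route as the paper: Part~1 is the same packaging of Proposition~\ref{prop:typeIsafety}, and Part~2 uses richness to build the equalizing index map $B$ and then the sandwich-plus-sharpness mechanism of Lemma~\ref{lem:equalizersimple}, which you correctly identify as the decisive step. The only (harmless) difference is that you inline the lemma and sandwich directly from $L_B(\snull,\delta^{\circ}_B)\geq S$, whereas the paper first upgrades this to equality via a separate expectation argument before invoking the lemma.
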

\  \\
\begin{proof}{\bf \ [of Theorem~\ref{thm:mainsupersimple}]}

{\em Part 1}. 
Suppose that $\delta$ is admissible. Then $\delta$ is by definition Type-I risk safe. By Proposition~\ref{prop:typeIsafety} there must be an e-variable $S$ such that $\delta$ is compatible with $S$. Since $\delta$ is admissible, every  strictly better $\delta'$ is not Type-I risk safe, hence cannot be compatible with any e-variable; in particular, $\delta'$ is not compatible with $S$. Hence there exists no $\delta'$ that is strictly better than $\delta$ and compatible with $S$. It follows that $\delta$ is maximally compatible with $S$. 

{\em Part 2}. 
Let $\delta^*$ be maximally compatible  and let $\delta^{\circ}$ be another  Type-I risk safe decision rule. 
We will show that  $\delta^{\circ}$ cannot be Type-II strictly better than $\delta^*$; this implies the result. 

By our $S$-relative richness assumption and the construction of $\delta^*$, we know that there exists a function $B: \cY \rightarrow \cB$ with:
\begin{align}\label{eq:strasimple}
   %\text{for all $y \in \cY$}:\ 
   L_{B}(\snull,\delta^*_{B}) = S.
\end{align}
We may assume that $\delta^\circ$ satisfies $\delta_{B}^{*} \leq \delta_{B}^{\circ}$, otherwise we already know that $\delta^{\circ}$ is not Type-II strictly-better. 
Now suppose by means of contradiction that for some $y \in \cY$, we have $\delta_{B(y)}^*(y)< \delta_{B(y)}^{\circ}(y)$. We then have for the $P_0 \in \cH(\snull)$ with ${\bf E}_{P_0}[S]=1$ (which exists by sharpness) that 
$$ 
1=  {\bf E}_{P_0}[S] = {\bf E}_{P_0}[L_B(\snull,\delta^*_{B})]
< {\bf E}_{P_0}[L_B(\snull,\delta^{\circ}_{B})],
$$
contradicting our assumption that $\delta^{\circ}$ is Type-I risk safe. 
We may thus assume $\delta_{B}^{\circ} =\delta_{B}^*$.

The corollary of Lemma~\ref{lem:equalizersimple} above now implies that $\delta_b^{\circ} \leq \delta^*_b$ for {\em all\/} $b \in\cB$ (i.e. not just for $B$), hence $\delta^{\circ}$ is not Type-II strictly better than $\delta^*$; the theorem is proved. 
\end{proof}

\subsection{Preparing the General Proof of Theorem~\ref{thm:mainsimple} and~\ref{thm:maingeneral}}
\label{app:definitions}
%\label{sec:typeIproofs}
\newcommand{\w}{\ensuremath{\textsc{L}}}
\paragraph{Almost sure inequality}
Fix any GNP decision problem with parameter set $\Theta$, as in the general Definition~\ref{def:gnpgeneral}. In particular in some applications we may have  $\Theta= \{\snull\}$, then we really deal with a GNP testing problem and the notation $\leq_{\theta}$ that we will now define can in such cases be replaced by $\leq_{\snull}$. 

For all $\theta \in \Theta$, for functions $U,V: \cY \rightarrow \reals ^+_0$ we define
\begin{equation}\label{eq:weaka}
U(Y) \leq_{\theta} V(Y)
\end{equation}
to mean that for all $P \in \cH(\theta)$, for all $\epsilon > 0$ and every  measurable set $\cE \subset \cY$
such that for all $y \in \cE, U(Y) > V(Y) + \epsilon$, we have $P(\cE) = 0$.
Similarly, 
\begin{equation}\label{eq:weakb}
U(Y) <_{\theta} V(Y)
\end{equation}
is defined to mean that $U(Y) \leq_{\theta} V(Y)$ and there exist $P \in \cH(\theta)$,  $\epsilon > 0$ and measurable set $\cE \subset \cY$
such that for all $y \in \cE, U(Y) \leq  V(Y) - \epsilon$, and we have $P(\cE) > 0$.
Note that statements \eqref{eq:weaka} and \eqref{eq:weakb} are well-defined even if $U$ or $V$ are not measurable so that $U(Y)$ or $V(Y)$ are not random variables. Nevertheless, we shall abuse notation by abbreviating $U(Y)$ to $U$ and $V(Y)$, just like we do for random variables.  
Note that, if the events inside the probabilities below are measurable after all, then we have
\begin{equation}\label{eq:almost}
U\leq_{\theta} V \Leftrightarrow \forall P \in \cH(\theta): 
P(U \leq V) =1. 
\end{equation}
We also write $U >_{\theta} V$ if it is not the case that $U \leq_{\theta} V$; 
we write $U \geq_{\theta} V$ if it is not the case that $U <_{\theta} V$; 
and $U =_{\theta} V$ if $U \leq_{\theta} V$ and $U \geq_{\theta} V$; if $V$ and $U$ are measurable then clearly the corresponding analogues to \eqref{eq:almost} hold as well, e.g.
\begin{equation}\label{eq:almostb}
U<_{\theta} V \Leftrightarrow 
\forall P \in \cH(\theta): 
P(U \leq  V) =1 \text{\ and \ } \exists P \in \cH(\theta): P(U < V) > 0. 
\end{equation}
It is easily checked that $=_{\theta}$ establishes an equivalence relation on functions of $Y$, and relative to this relation, $
\leq_{\theta}$ is a partial order and $<_{\theta}$ is the corresponding strict order, i.e $U <_{\theta} V$ iff 
$U \leq_{\theta} V$ and not $U =_{\theta} V$. 
We shall freely use standard properties of this partial order (such as transitivity) below.

Moreover, we introduce the additional notation, for each function $B: \cY \rightarrow \cB$, where, in line with the  above, we abbreviate $B(Y)$ to $B$ and $\delta_{B(Y)}(Y)$ to $\delta_B$:
\begin{align*}
    \delta^{\circ}_B \leq_{\w} \delta_B  &   \Leftrightarrow \   
    \forall \theta \in \Theta: L_{B}(\theta,\delta^{\circ}_{B}) 
    \leq_{\theta} L_{B}\theta,\delta_{B}), \\
     \delta^{\circ} \leq_{\w} \delta  &  \Leftrightarrow \  
    \forall \theta \in \Theta, b \in \cB: L_b(\theta,\delta^{\circ}_b) \leq_{\theta} L_b(\theta,\delta_b),
\end{align*}
as such avoiding the cumbersome expression on the right whenever we can. 
Analogously, 
\begin{align*}
    \delta^{\circ}_B <_{\w} \delta_B &  \Leftrightarrow \ 
 \delta^{\circ}_B \leq_{\w} \delta_B \text{\ and\ }    \exists \theta \in \Theta: L_{B}(\theta,\delta^{\circ}_{B}) 
    <_{\theta} L_{B(Y)}(\theta,\delta_{B}), \\
     \delta^{\circ} <_{\w} \delta &  \Leftrightarrow \ 
\delta^{\circ} \leq_{\w} \delta \text{\ and\ }
    \exists \theta \in \Theta, b \in \cB: L_b(\theta,\delta^{\circ}_b) <_{\theta} L_b(\theta,\delta_b),
\end{align*}
and  correspondingly with $\geq_{\w}$ and $>_{\w}$. 
Finally, 
$\delta_B^{\circ} =_{\w} \delta_B$ 
is defined to be equivalent to `$\delta_B^{\circ} \geq_{\w} \delta_B$  and not $\delta_B^{\circ} >_{\w} \delta_B$'; similarly for `$\delta^{\circ} =_{\w} \delta$'.
\paragraph{Generalized Admissibility, Maximal Compatibility}
We can now generalize the definitions of   {\em Type-II strictly-better-than\/} and {\em admissibility\/} for general GNP decision problems in the main text: 
formally, we say decision rule $\delta$ is {\em Type-II strictly better\/} than $\delta^{\circ}$, simply if we have 
$$
\delta^{\circ} <_{\w} \delta. 
$$
As before, a decision rule $\delta^{\circ}$ is {\em admissible\/} if it is Type-I risk safe (according to the  definition underneath \eqref{eq:compatibleB} in the main text), and there is no other Type-I risk safe decision rule that is Type-II strictly better than $\delta$. 
%In the special case that all events are involved are measurable, as is easily checked, this definition reduces to the one given in the main text via  \eqref{eq:TypeIIbettera} and \eqref{eq:TypeIIbetterb} and their extension below \eqref{eq:TypeIrisksafeB}. 

We also  extend the definition of maximally compatible decision rule in the same way: formally, a {\em maximally compatible decision rule\/} relative to a given GNP decision problem and e-variable $S$ is any compatible decision rule $\delta^{\circ}$ which further satisfies that there is no other decision rule $\delta$ that is also compatible with $S$ and that is Type-II strictly better than $\delta^{\circ}$, with the extended definition of Type-II strictly better given above. 
%Again, it is easily seen that this generalizes the definition given in the main text, in Section~\ref{sec:math}, for the special case of GNP testing problems. 

We see that in the case of a GNP testing problem, whenever the events $L_b(\snull,\delta^{\circ}_b) >  L_b(\snull,\delta_b)$ are  measurable for all $b \in \cB$ (in particular whenever $\cY$ is countable), the probabilities in \eqref{eq:TypeIIbettera} and \eqref{eq:TypeIIbetterb} are well-defined and then the definition of strictly-better-than  coincides with the one given in the main text. But it continues to be valid in case we pick pathological $\cB$, $\{L_b: b \in \cB \}$ for which the events above are nonmeasurable --- which cannot be ruled out since we made no restrictions on the functions $L_b$, $\delta^{\circ}$ and $\delta$. Similarly, by replacing the definition \eqref{eq:TypeIRiskSafeSimplified} in the main text by \eqref{eq:TypeIRiskSafeSimplifiedB} we also make sure that Type-I risk safety is well-defined irrespective of whether $\sup_{b \in \cB} L_b(\snull,\delta_b(Y))$ is measurable or not (we could also have avoided such measurability issues using inner- and outer-measure \cite[Section 1.3]{Billingsley95} but this does not simplify the treatment so we decided against it).

In the same way, for a general GNP decision problem, the definition of strictly-better-than given here generalizes the one given in the main text  below \eqref{eq:TypeIrisksafeB} to the case where the events involved may be nonmeasurable; as a consequence, the definitions of admissibility for GNP testing and decision problems, and the definition of maximum compatibility relative to $S$ for GNP testing problems given in the main text, are all generalized by the definitions given here based on the generalized notion of strictly-better-than, 
and are valid irrespective of the measurability of the functions and events involved. 

Crucially for the proof of Lemma~\ref{lem:equalizer} below, the ordering relation $\leq_{\snull}$ is strong enough to imply inequality in expectation:
\begin{proposition}\label{prop:erc} Consider a GNP testing problem with null hypothesis $\cH(\snull)$ such that all $P \in \cH(\snull)$ are absolutely mutually continuous. 
Let $S= S(Y)$ and $S'= S'(Y)$ be nonnegative random variables such that for all  $P \in \cH(\snull)$, ${\bf E}_P[S]$ is finite. Suppose
$S \leq_{\snull} S'$. Then (a) for all $P \in \cH(\snull)$ we have  ${\bf E}_{P}[S] \leq  {\bf E}_{P}[S']$. Further, suppose 
$S <_{\snull} S'$. Then (b) for every $P \in \cH(\snull)$ we have  ${\bf E}_{P}[S] < {\bf E}_{P}[S']$.
\end{proposition}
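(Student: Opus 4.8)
The plan is to first reduce the order relations $\leq_{\snull}$ and $<_{\snull}$ to their almost-sure characterizations. Since $S$ and $S'$ are assumed to be (measurable) nonnegative random variables, all the events that occur below are measurable, so by \eqref{eq:almost} the hypothesis $S \leq_{\snull} S'$ is equivalent to the statement that $P(S \leq S') = 1$ for every $P \in \cH(\snull)$, and by \eqref{eq:almostb} the hypothesis $S <_{\snull} S'$ adds to this that there is \emph{some} $P_1 \in \cH(\snull)$ with $P_1(S < S') > 0$. With this translation in place, both parts become assertions about ordinary expectations under each fixed $P$.

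For part (a), I would simply invoke monotonicity of the expectation. Fix $P \in \cH(\snull)$. Because $S \leq S'$ holds $P$-almost surely and both variables are nonnegative, ${\bf E}_P[S] \leq {\bf E}_P[S']$, the right-hand side being a well-defined element of $[0,\infty]$: if it equals $+\infty$ the inequality is immediate since ${\bf E}_P[S]$ is finite, and otherwise it is monotonicity of the Lebesgue integral. This step uses nothing beyond the characterization of $\leq_{\snull}$ and does not require mutual absolute continuity.

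The substance of the proposition is part (b), and this is where I expect the only genuine work. Fix an arbitrary $P \in \cH(\snull)$; the target is strict inequality of expectations, and the difficulty is that the event certifying strictness is a priori only known to carry positive mass under the single distribution $P_1$, not under the chosen $P$. This is exactly where mutual absolute continuity enters: the set $\cE := \{S < S'\}$ is measurable and satisfies $P_1(\cE) > 0$, so $P_1(\cE) \neq 0$ forces $P(\cE) \neq 0$, i.e. $P(\cE) > 0$, for every $P \in \cH(\snull)$. I would then put $D := S' - S$, which is nonnegative $P$-a.s. (since $S \leq S'$ $P$-a.s.) and strictly positive on $\cE$. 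If ${\bf E}_P[S'] = \infty$ we are done, because ${\bf E}_P[S]$ is finite; otherwise $D$ is $P$-integrable, and writing $\cE = \bigcup_n \{D \geq 1/n\}$ yields some $n$ with $P(D \geq 1/n) > 0$, whence ${\bf E}_P[D] \geq (1/n)\,P(D \geq 1/n) > 0$. Integrability of $D$ then gives ${\bf E}_P[S'] - {\bf E}_P[S] = {\bf E}_P[D] > 0$, which is the claim.

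The main obstacle is thus conceptual rather than computational: recognizing that strictness of the ordering $<_{\snull}$ is witnessed by a single distribution $P_1$, and that mutual absolute continuity of $\cH(\snull)$ is precisely the hypothesis needed to propagate a positive-probability event from $P_1$ to every member of $\cH(\snull)$. Once that transfer is made, the strict lower bound $\epsilon\,P(\cE) > 0$ (equivalently the $(1/n)$-bound above) on the expected gap is a one-line consequence.
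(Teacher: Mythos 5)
Your proof is correct and follows essentially the same route as the paper's: both parts hinge on exactly the same ideas, namely monotonicity of expectation for (a) and, for (b), using mutual absolute continuity to transfer the positive-probability set witnessing strictness from the single distribution that provides it to an arbitrary $P \in \cH(\snull)$, then lower-bounding the expectation gap by a constant times the probability of that set. The only cosmetic difference is that you first pass to the almost-sure characterizations \eqref{eq:almost} and \eqref{eq:almostb} (legitimate here since $S,S'$ are measurable) and recover a quantitative gap via $\{S<S'\}=\bigcup_n\{S'-S\geq 1/n\}$, whereas the paper works directly with the $\epsilon$/$\delta$ form of the definitions \eqref{eq:weaka} and \eqref{eq:weakb}.
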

\begin{proof}
(a) According to definition \eqref{eq:weaka}, for all $P \in \cH(\snull)$, $\epsilon > 0$ and every measurable set $\cE\subset \cY$ with for all $y \in \cE$, $S(y) > S'(y) +\epsilon$, we have 
$$
{\bf E}_{P}[S] =  {\bf E}_{P}[{\bf 1}_{Y \in \cE} \cdot S
+{\bf 1}_{Y \not \in \cE} \cdot S] 
=  {\bf E}_{P}[ {\bf 1}_{Y \not \in \cE} \cdot S] 
\leq {\bf E}_{P}[ {\bf 1}_{Y \not \in \cE} \cdot (S' + \epsilon)] 
= {\bf E}_{P}[S' + \epsilon]
$$
and the result follows. 

(b)
Fix $P \in \cH(\snull)$. According to  definition \eqref{eq:weaka}, for each $\epsilon > 0$ and measurable set $\cE$ with for all $y \in \cE$, $S(y) > S'(y) +\epsilon$, we have $P(\bar{\cE}) = 1$ (with $\bar{\cdot}$ denoting complement), whereas there exist $\delta > 0$ and $Q \in \cH(\snull)$ and measurable $\cF$ such that $S(y) < S'(y) - \delta$ on $\cF$ and $Q(\cF) > 0$. By mutual absolute continuity, we have $P(\cF) > 0$ as well, and therefore: 
\begin{align*}
{\bf E}_{P}[S] & = {\bf E}_{P}[{\bf 1}_{Y  \in \bar{\cE}} \cdot S]     
= {\bf E}_{P}[{\bf 1}_{Y \in \bar{\cE} \cap \cF} \cdot S
+ {\bf 1}_{Y \in \bar{\cE} \cap \bar{\cF}} \cdot S
]     \\
& \leq  {\bf E}_{P}[{\bf 1}_{Y \in \bar{\cE} \cap \cF} \cdot (S'- \delta)
+ {\bf 1}_{Y \in \bar{\cE} \cap \bar{\cF}} \cdot (S'+ \epsilon) 
\leq {\bf E}_{P}[S'] -  P(\cF) \delta + \epsilon.
\end{align*}
Since this holds for fixed $\delta > 0$ and for every $\epsilon > 0$, the result follows. 
\end{proof}
\subsection{General Form of Equalizing Maximal Compatibility Lemma}
\label{app:lemma}
Consider any GNP testing problem.
Fix any function $B: \cY \rightarrow \cB$. 
%we say that  decision rule $\delta^{\circ}$ is `maximally compatible relative to e-variable $S$ {\em when restricted to $B$ and $\theta \in \Theta$}'  if the maximal compatibility definition holds exactly on that $B$ and for that $\theta$, i.e. for all $y \in \cY$,
%\begin{equation}\label{eq:maxcompatibleC} L_{B}(\theta,\delta^{\circ}_B) =  \max_{a \in \cA_B: L_{B}(\theta,a) \leq S} L_{B}(\theta,a).  \end{equation}
Generalizing \eqref{eq:equalizersimple}, we say that decision rule $\delta^{\circ}$ is {\em a.s. equalizing-maximally compatible\/} relative to e-variable $S$  {\em when restricted to $B$\/} if
\begin{equation}\label{eq:shoulder}
L_{B}(\snull,\delta^{\circ}_{B}) =_{\snull}  S, 
\end{equation}
where here and below, `a.s.' stands for `almost surely'. 
The following   lemma, generalizing Lemma~\ref{lem:equalizersimple}, is the key insight needed to prove Theorem~\ref{thm:maingeneral} below and hence its simplification Theorem~\ref{thm:mainsimple} in the main text. \begin{lemma}\label{lem:equalizer}
Fix any GNP testing problem. Suppose that $S$ is a sharp e-variable and  $\delta^{\circ}$ is a Type-I risk safe decision rule such that
there exists a function  $B: \cY \rightarrow \cB$ so that
$\delta^{\circ}$ is a.s. equalizing-maximally compatible  relative to $S$ when restricted to $B$, as in \eqref{eq:shoulder}.  Then $\delta^{\circ}$ is a.s. fully compatible with $S$, i.e. for  all $b \in \cB$, $L_b(\snull,\delta^{\circ}_b) \leq_{\snull}  S$. 
\end{lemma}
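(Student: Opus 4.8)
The plan is to lift the proof of Lemma~\ref{lem:equalizersimple} to the general measure-theoretic setting by replacing every pointwise comparison with the almost-sure order $\leq_{\snull}$ of \eqref{eq:weaka}, and by replacing the instantaneous pointwise contradiction used there with an expectation argument supplied by Proposition~\ref{prop:erc}. Throughout I would work under the standing hypothesis that $\cH(\snull)$ is mutually absolutely continuous (condition (a) of Theorem~\ref{thm:mainsimple}, under which this lemma is invoked), so that Proposition~\ref{prop:erc} is applicable.

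First I would extract a dominating e-variable. Since $\delta^{\circ}$ is Type-I risk safe, Proposition~\ref{prop:typeIsafety}(2) yields an e-variable $S'$ (the function $U$ of \eqref{eq:TypeIRiskSafeSimplifiedB}) with which $\delta^{\circ}$ is compatible, giving $L_b(\snull,\delta^{\circ}_b(y)) \leq S'(y)$ for all $y \in \cY$ and all $b \in \cB$. Specializing $b=B(y)$ at each $y$ gives the pointwise inequality $L_B(\snull,\delta^{\circ}_B) \leq S'$, hence a fortiori $L_B(\snull,\delta^{\circ}_B) \leq_{\snull} S'$. Combining this with the equalizing hypothesis \eqref{eq:shoulder}, $S =_{\snull} L_B(\snull,\delta^{\circ}_B)$, and using transitivity of the partial order $\leq_{\snull}$, I obtain $S \leq_{\snull} S'$.

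The core of the argument is then to upgrade this to $S =_{\snull} S'$ by contradiction. Suppose instead that $S <_{\snull} S'$. By sharpness of $S$ there is $P_0 \in \cH(\snull)$ with ${\bf E}_{P_0}[S]=1$, which is finite since $S$ is an e-variable. Proposition~\ref{prop:erc}(b) then converts the strict almost-sure inequality into a strict inequality in expectation, ${\bf E}_{P_0}[S] < {\bf E}_{P_0}[S']$, whence ${\bf E}_{P_0}[S'] > 1$, contradicting that $S'$ is an e-variable. Therefore $S =_{\snull} S'$. Feeding $S' =_{\snull} S$ back into the compatibility inequality $L_b(\snull,\delta^{\circ}_b) \leq_{\snull} S'$ and applying transitivity once more gives $L_b(\snull,\delta^{\circ}_b) \leq_{\snull} S$ for every $b \in \cB$, which is exactly a.s. full compatibility with $S$, proving the lemma.

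The main obstacle, and the only genuine departure from the countable, full-support Lemma~\ref{lem:equalizersimple}, is this contradiction step. In the simple version a single outcome $y$ with $S(y) < S'(y)$ immediately destroys the e-variable inequality, because full support forces ${\bf E}_{P_0}[S'] > 1$. Here, by contrast, $S <_{\snull} S'$ only guarantees a strict gap on a set of positive probability under \emph{some} $P \in \cH(\snull)$, which need not be the sharp witness $P_0$. Turning that gap into the strict expectation inequality under $P_0$ is precisely the content of Proposition~\ref{prop:erc}(b), and it is there that mutual absolute continuity does the essential work, transferring the positive-probability gap from the witnessing distribution to $P_0$. I would therefore make the mutual-absolute-continuity assumption explicit at the start so that Proposition~\ref{prop:erc} may be invoked without further comment.
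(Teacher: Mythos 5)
Your proof is correct and follows essentially the same route as the paper's: extract a dominating e-variable $S'$ via Proposition~\ref{prop:typeIsafety}, use the equalizing hypothesis and transitivity to get $S \leq_{\snull} S'$, rule out $S <_{\snull} S'$ via sharpness and Proposition~\ref{prop:erc}(b), and feed $S =_{\snull} S'$ back into compatibility. Your explicit flagging of the mutual-absolute-continuity hypothesis needed for Proposition~\ref{prop:erc}(b) is a careful touch---the paper's lemma statement leaves it implicit, importing it only from the context in which the lemma is invoked.
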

Just like in the simplified case with countable $\cY$, this immediately leads to a relevant corollary: 
\begin{corollary} 
Fix any GNP testing problem. If a decision rule $\delta^*$ is maximally compatible relative to a sharp e-variable $S$ and a.s. equalizing-maximally compatible when restricted to some function
$B: \cY \rightarrow \cB$ 
%collection $\Lambda$ as above, 
then any  $\delta^{\circ}$ that is a.s. equalizing-maximally compatible relative to $S$ when restricted to $B$ and Type-I risk safe must also be a.s. fully compatible with $S$, i.e. for  all $b \in \cB$, $L_b(\snull,\delta^{\circ}_b) \leq_{\snull}  L_b(\snull,\delta_b^*)$.
\end{corollary}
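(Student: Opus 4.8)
The plan is to mirror the proof of Lemma~\ref{lem:equalizersimple}, replacing every pointwise (in)equality by its almost-sure analogue $\leq_{\snull}$, $<_{\snull}$, $=_{\snull}$, and replacing the elementary expectation argument by an appeal to Proposition~\ref{prop:erc}. First I would invoke Proposition~\ref{prop:typeIsafety}: since $\delta^{\circ}$ is Type-I risk safe, there is a function $U$ as in \eqref{eq:TypeIRiskSafeSimplifiedB} which is an e-variable and with which $\delta^{\circ}$ is compatible. Setting $S' := U$, we then have $L_b(\snull,\delta^{\circ}_b(y)) \leq S'(y)$ for every $y \in \cY$ and every $b \in \cB$; this is a genuine pointwise inequality, hence in particular it gives $L_b(\snull,\delta^{\circ}_b) \leq_{\snull} S'$ for all $b$.

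Next I would combine this with the hypothesis. A.s.\ equalizing-maximal compatibility when restricted to $B$ gives $S =_{\snull} L_B(\snull,\delta^{\circ}_B)$ by \eqref{eq:shoulder}, while the compatibility just obtained gives $L_B(\snull,\delta^{\circ}_B) \leq_{\snull} S'$. Since $\leq_{\snull}$ is a partial order and $=_{\snull}$ is its induced equivalence relation (both established in Section~\ref{app:definitions}), transitivity across this mixed chain yields $S \leq_{\snull} S'$.

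The crux is to upgrade this to $S =_{\snull} S'$, and I would argue by contradiction: if equality fails then $S <_{\snull} S'$. Here I invoke Proposition~\ref{prop:erc}(b) --- which is where the standing assumption that all $P \in \cH(\snull)$ are mutually absolutely continuous is used, and whose finiteness requirement is met because $S$, being an e-variable, satisfies ${\bf E}_P[S] \leq 1 < \infty$ --- to conclude ${\bf E}_P[S] < {\bf E}_P[S']$ for every $P \in \cH(\snull)$. Sharpness of $S$ supplies a $P_0 \in \cH(\snull)$ with ${\bf E}_{P_0}[S] = 1$, whence ${\bf E}_{P_0}[S'] > 1$, contradicting that $S'$ is an e-variable. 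Therefore $S =_{\snull} S'$.

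Finally, combining $L_b(\snull,\delta^{\circ}_b) \leq_{\snull} S'$ (for every $b \in \cB$) with $S' =_{\snull} S$ and transitivity gives $L_b(\snull,\delta^{\circ}_b) \leq_{\snull} S$ for all $b \in \cB$, which is exactly a.s.\ full compatibility, as required. I expect the only real obstacle relative to the countable case to be the bookkeeping with $\leq_{\snull}$: one must check that pointwise compatibility implies the almost-sure order, that transitivity of $\leq_{\snull}$ and $=_{\snull}$ applies across the mixed chain, and that the passage from the strict almost-sure order to a strict inequality of expectations is legitimate. All three are precisely what the partial-order properties of Section~\ref{app:definitions} and Proposition~\ref{prop:erc} are designed to guarantee, so the conceptual content remains the single sharpness-plus-contradiction step, exactly as in Lemma~\ref{lem:equalizersimple}.
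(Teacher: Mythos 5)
Your argument is correct as far as it goes, and it is essentially a faithful transcription of the paper's own proof of Lemma~\ref{lem:equalizer}: Proposition~\ref{prop:typeIsafety} supplies a pointwise-compatible e-variable $S'$, the equalizing hypothesis and transitivity of $\leq_{\snull}$ give $S \leq_{\snull} S'$, and sharpness plus Proposition~\ref{prop:erc}(b) (which is indeed where mutual absolute continuity enters) rules out $S <_{\snull} S'$, yielding $S =_{\snull} S'$ and hence $L_b(\snull,\delta^{\circ}_b) \leq_{\snull} S$ for all $b \in \cB$. Every step of that chain checks out.

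The gap is that this is the conclusion of the \emph{Lemma}, not of the \emph{Corollary}. The Corollary's stated conclusion is $L_b(\snull,\delta^{\circ}_b) \leq_{\snull} L_b(\snull,\delta^*_b)$ for all $b$ --- a domination by the losses of the maximally compatible rule $\delta^*$, which is exactly the form in which the Corollary is invoked in the proof of Theorem~\ref{thm:maingeneral}. Your proof never mentions $\delta^*$, and the hypotheses that $\delta^*$ is maximally compatible with $S$ and a.s.\ equalizing when restricted to $B$ go entirely unused; that is a reliable sign that something is missing. The passage from ``$\delta^{\circ}$ is a.s.\ fully compatible with $S$'' to ``$\delta^{\circ}$'s Type-I losses are a.s.\ dominated by $\delta^*$'s'' is precisely the step the countable-case corollary flags with ``and hence, since $\delta^*$ is maximally compatible, $\delta^{\circ}_b \leq \delta^*_b$'', and there it is immediate only because for simple problems $\delta^*_b(y)$ is explicitly the largest action with $L_b(\snull,a)\leq S(y)$. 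In the general setting maximal compatibility is the abstract property that no $S$-compatible rule is Type-II strictly better than $\delta^*$, so you must argue by contradiction: if $L_{b_0}(\snull,\delta^{\circ}_{b_0}) > L_{b_0}(\snull,\delta^*_{b_0}) + \epsilon$ on a $P$-positive set $\cE$ for some $b_0$, construct the hybrid rule that follows $\delta^*$ everywhere except that it plays $\delta^{\circ}_{b_0}(y)$ at those $y$ where $\delta^{\circ}$'s Type-I loss strictly exceeds $\delta^*$'s and still lies below $S(y)$ (which, by the full compatibility you established, excludes only a null set of $\cE$); this hybrid is pointwise compatible with $S$ and Type-II strictly better than $\delta^*$, contradicting maximal compatibility. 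Without this final step --- or at least an explicit appeal to it --- the statement you have proved is Lemma~\ref{lem:equalizer} rather than its Corollary.
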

\begin{proof}{\bf \ [of Lemma~\ref{lem:equalizer}]}
By Proposition~\ref{prop:typeIsafety}, there must be some e-variable $S'$ such that $\delta^{\circ}$ is compatible with $S'$, i.e. 
\begin{equation}\label{eq:erc}
\text{for all $b\in \cB$, $y \in \cY$:}\     L_b(\snull,\delta^{\circ}_b(y)) \leq S'(y).
\end{equation}
By a.s. equalizing-maximal compatibility relative to $S$ when restricted to $B$, transitivity, and weakening \eqref{eq:erc}, we must therefore also have
\begin{equation}\label{eq:ercb1}
S =_{\snull} L_B(\snull,\delta^{\circ}_B) \leq_{\snull} S',
\end{equation}
so that $S \leq_{\snull} S'$.
Suppose by means of contradiction that, even stronger, $S <_{\snull}  S'$.
We know that for some $P_0 \in \cH(\snull)$,  ${\bf E}_{P_0}[S]=1$. But then Proposition~\ref{prop:erc}  gives that ${\bf E}_{P_0}[S'] > 1$, so $S'$ is not an e-variable and we have arrived at a contradiction. 
Since we already established $S \leq_{\snull} S'$  it follows that $S =_{\snull} S'$. But then using the inequality in   \eqref{eq:erc} 
gives for  all $b \in \cB$, $L_b(\snull,\delta^{\circ}_b) \leq_{\snull}  S$,  and the lemma is proved. 
\end{proof}

\subsection{Extension of Theorem~\ref{thm:mainsimple}  to general GNP decision problems}\label{app:finalresult}
\label{sec:typeIIproofs}
First, we extend the definitions of richness and sharpness  from GNP testing to decision problems in the obvious manner, by inserting `for all' quantifiers: 
we say that a GNP decision problem is {\em rich\/} relative to  e-collection $ \{S_{\theta}: \theta \in \Theta \}$ if for all $\theta \in \Theta$, the corresponding $\theta$-testing problem (as defined in the main text underneath Definition~\ref{def:gnpgeneral}) is rich relative to $S_{\theta}$. 
Relative to a given GNP decision problem, we say that e-collection 
$\{S_{\theta}: \theta \in \Theta \}$ is sharp if for all $\theta \in \Theta$, $S_{\theta}$ is sharp relative to the corresponding $\theta$-testing problem. 
%there is a $P \in \cP$ with $\phi(P) = \theta$ such that ${\bf E}_P[S_{\theta}] = 1$;
\begin{theorem}\label{thm:maingeneral}
Consider any GNP decision problem. 
\begin{enumerate} 
\item 
Suppose that decision rule $\delta$ is admissible. Then there exists an e-collection $\ecol =  \{S_{\theta}: \theta \in \Theta \}$ 
%be an e-collection 
such that $\delta$ is maximally compatible with $\ecol$.
%as in %(\eqref{eq:compatibleB}) (such an e-variable exists by Proposition~\ref{prop:typeIsafetyB}), and such that a maximally compatible decision rule $\delta^*$ relative to this $S$ (as defined in Section~\ref{app:proofs}.\ref{app:definitions}) exists. 
\item Suppose that $\delta^*$ is a 
maximally compatible decision rule  relative to some 
e-collection  $\ecol= \{S_{\theta}: \theta \in \Theta \}$. 
If 
(a) all $P \in \cP$ are mutually absolutely continuous  and 
(b) $\ecol$ is sharp relative to the given GNP decision problem,  and 
(c) the GNP  decision problem is rich relative to $\ecol$, then $\delta^*$ is admissible. 
\end{enumerate}
\end{theorem}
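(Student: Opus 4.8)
The plan is to treat the GNP decision problem as the family of $\theta$-testing problems indexed by $\theta\in\Theta$ (Definition~\ref{def:gnpgeneral}), exploiting that compatibility, Type-I risk safety, admissibility, maximal compatibility, sharpness and richness are all the conjunctions over $\theta$ of their testing-problem versions, while the almost-sure orderings $\leq_{\theta},<_{\theta},=_{\theta}$ of Section~\ref{app:proofs}.\ref{app:definitions} let us ignore measurability. Part~1 is then routine and copies Part~1 of Theorem~\ref{thm:mainsupersimple}: if $\delta$ is admissible it is Type-I risk safe, so by the e-collection analogue of Proposition~\ref{prop:typeIsafety} it is compatible with some $\ecol$; were it not maximally compatible with $\ecol$, some $\delta'$ compatible with $\ecol$ would be Type-II strictly better, but compatibility with $\ecol$ already entails Type-I risk safety, contradicting admissibility. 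Hence $\delta$ is maximally compatible with $\ecol$.

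For Part~2 I would argue by contradiction. Suppose $\delta^*$ is maximally compatible with the sharp, rich $\ecol=\{S_{\theta}\}$, yet some Type-I risk safe $\delta^{\circ}$ is Type-II strictly better, i.e. $\delta^* <_{\w} \delta^{\circ}$. Type-I risk safety of $\delta^{\circ}$ gives, via Proposition~\ref{prop:typeIsafety}, an e-collection $\ecol'=\{S'_{\theta}\}$ with which it is compatible. The target is to show $\delta^{\circ}$ is in fact compatible with the original $\ecol$: that contradicts maximal compatibility of $\delta^*$ and finishes the proof. Fixing $\theta$, the engine is the general equalizing lemma, Lemma~\ref{lem:equalizer}. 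Richness relative to $S_{\theta}$ together with maximal compatibility is meant to furnish an index function $B_{\theta}:\cY\to\cB$ along which $\delta^*$ a.s. equalizes $S_{\theta}$, i.e. $L_{B_{\theta}}(\snull,\delta^*_{B_{\theta}})=_{\theta}S_{\theta}$ (the delicate point, discussed below). Combining this with $\delta^*<_{\w}\delta^{\circ}$ and compatibility of $\delta^{\circ}$ with $S'_{\theta}$ yields $S_{\theta}=_{\theta}L_{B_{\theta}}(\theta,\delta^*_{B_{\theta}})\leq_{\theta}L_{B_{\theta}}(\theta,\delta^{\circ}_{B_{\theta}})\leq_{\theta}S'_{\theta}$, so $S_{\theta}\leq_{\theta}S'_{\theta}$. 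Now sharpness of $S_{\theta}$ enters: if this were strict, Proposition~\ref{prop:erc}(b) --- which is precisely where mutual absolute continuity, hypothesis~(a), upgrades an a.s.-strict gap under one $P$ to a strict gap in expectation under the sharp witness $P_0$ --- would give ${\bf E}_{P_0}[S'_{\theta}]>1$, contradicting that $S'_{\theta}$ is an e-variable. Hence $S_{\theta}=_{\theta}S'_{\theta}$, whence $L_b(\theta,\delta^{\circ}_b)\leq_{\theta}S'_{\theta}=_{\theta}S_{\theta}$ for every $b$: $\delta^{\circ}$ is compatible with $S_{\theta}$. Ranging over all $\theta$ shows $\delta^{\circ}$ is compatible with $\ecol$, the desired contradiction.

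I expect the main obstacle to be exactly the first move inside Part~2: producing the equalizing $B_{\theta}$ for $\delta^*$. In the pure testing case this is immediate because maximal compatibility is the explicit rule \eqref{eq:maxcompatible}; but in a genuine decision problem the $\theta$-testing problems share one action/index structure, and raising $\delta^*$'s Type-I loss at one $\theta$ toward $S_{\theta}$ can be blocked by an e-constraint already binding at a different $\theta'$, so decision-level maximal compatibility does not by itself force per-$\theta$ equalization. The resolution I would pursue is to invoke richness in the strong form guaranteed by the enlargement discussed after Theorem~\ref{thm:mainsimple} and illustrated in Example~\ref{ex:add}: adjoin a single loss whose actions are the full loss-profiles $\theta\mapsto s_{\theta}$ with $L(\theta,(s_{\theta'})_{\theta'})=s_{\theta}$, for which the profile $(S_{\theta'}(y))_{\theta'}$ is simultaneously compatible at every $\theta'$ and equalizing at every $\theta$; maximal compatibility then forces $\delta^*$ to play (at least) this profile, producing a single index $B$ that equalizes all $S_{\theta}$ at once and feeds Lemma~\ref{lem:equalizer}. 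The remaining work is bookkeeping --- carrying the $\leq_{\theta}$/$=_{\theta}$ orderings through the chain, using the $U$-function form \eqref{eq:TypeIRiskSafeSimplifiedB} of Type-I risk safety to keep $\sup_b L_b(\theta,\delta^{\circ}_b)$ integrable where needed, and verifying that the strict-improvement witness of $\delta^*<_{\w}\delta^{\circ}$ persists as a positive-probability event, so that the conclusion $S_{\theta}=_{\theta}S'_{\theta}$ for all $\theta$ genuinely contradicts $\delta^*<_{\w}\delta^{\circ}$.
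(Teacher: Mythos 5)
Your proposal follows essentially the same route as the paper's own proof. Part~1 is verbatim the paper's argument. Part~2 runs on the same engine: the per-$\theta$ decomposition into testing problems, the equalizing lemma (Lemma~\ref{lem:equalizer}), and sharpness combined with mutual absolute continuity through Proposition~\ref{prop:erc} to rule out $S_{\theta} <_{\theta} S'_{\theta}$ and force $S_{\theta} =_{\theta} S'_{\theta}$; your displayed chain of $\leq_{\theta}$ inequalities is exactly the content of that lemma unrolled. The one point of divergence is the step you yourself flag as delicate: producing the equalizing index function $B$ for $\delta^*$. The paper obtains \eqref{eq:stra} directly from richness of each $\theta$-testing problem together with maximal compatibility of $\delta^*$, with no profile-valued loss adjoined; your observation that raising $\delta^*$'s Type-I loss toward $S_{\theta}$ at one $\theta$ could be blocked by the e-constraint at another $\theta'$ is a genuine subtlety that the paper's one-line justification of \eqref{eq:stra} passes over, and your profile-loss enlargement is one way to discharge it, at the price of assuming a stronger form of richness than the theorem literally states. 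A second, smaller difference: the paper closes by concluding $L_b(\theta,\delta^{\circ}_b) \leq_{\theta} L_b(\theta,\delta^*_b)$ for all $b,\theta$ and contradicting the strictness in $\delta^* <_{\w} \delta^{\circ}$ directly, whereas your primary route (``$\delta^{\circ}$ is compatible with $\ecol$, contradicting maximal compatibility'') has an almost-sure-versus-pointwise wrinkle, since compatibility \eqref{eq:compatibleB} is a for-all-$y$ condition while your chain only delivers $\leq_{\theta}$ statements; the fallback you mention in your last sentence (contradicting the positive-probability strict-improvement witness) is the route the paper actually takes and is the one that closes cleanly.
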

\  \\
\begin{proof}{\bf \ [of Theorem~\ref{thm:maingeneral}]}

{\em Part 1}. 
Suppose that $\delta$ is admissible. Then $\delta$ is by definition Type-I risk safe. But then by the definition in the main text underneath \eqref{eq:compatibleB} there must be an e-collection $\ecol = \{S_{\theta}: \theta \in \Theta \}$ such that $\delta$ is  compatible with $\ecol$. Since $\delta$ is admissible, for every $\delta'$ with $\delta' >_{\w} \delta$ (i.e. $\delta'$ is strictly better than $\delta$) we have that $\delta'$ is not Type-I risk safe. But then $\delta'$ cannot be compatible with any e-collection, in particular it cannot be compatible with $\ecol$. Hence, there exist no $\delta'$ that is strictly better than $\delta$ and also compatible with $\ecol$; hence $\delta$ is maximally compatible. 

{\em Part 2}. 

Let $\delta^*$ be maximally compatible relative to $\ecol$  and let $\delta^{\circ}$ be another  Type-I risk safe decision rule. 
We will show that  $\delta^{\circ}$ cannot be Type-II strictly better than $\delta^*$; this implies the result. 

By our relative  richness assumption and the construction of $\delta^*$, we know that for all $\theta \in \Theta$,  there exists a function $B: \cY \rightarrow \cB$ with:
\begin{align}\label{eq:stra}
   \text{for all $y \in \cY$:}\  L_{B(y)}(\theta,\delta^*_{B(y)}(y)) = S_{\theta}(y).
\end{align}
We may assume that $\delta^\circ$ satisfies $\delta_{B}^{\circ} \geq_{\w} \delta_{B}^*$, otherwise we already know that it's not Type-II strictly-better. So, in particular, $L_B(\theta,\delta_{B}^{\circ}) \geq_{\theta} L_B(\theta,\delta_{B}^*)$ 
Now suppose by means of contradiction that $L_B(\theta,\delta_{B}^{\circ}) >_{\theta} L_B(\theta,\delta_{B}^*)$ for some $\theta \in \Theta$. 
Since  $\delta^{\circ}$ is Type-I risk safe, it must by definition also be compatible with an e-collection $\ecol' = \{S'_{\theta}: \theta \in \Theta \}$ so then we also have $S'_{\theta} >_{\theta} L_B(\theta,\delta_{B}^*)$. 
By Proposition~\ref{prop:erc}, using the assumption of mutual absolute continuity, we then have for the $P \in \cH(\theta)$ with ${\bf E}_{P}[S_{\theta}]=1$ (which must exist by sharpness) that 
$$ 
1=  {\bf E}_{P}[S_{\theta}] = {\bf E}_{P}[L_B(\theta,\delta^*_{B})]
< {\bf E}_{P}[S'_{\theta}]
$$
so $S'_{\theta}$ is not an e-variable and hence $\ecol'$ is not an e-collection, 
contradicting our assumptions (we note that all quantities inside the equation must be measurable, because $S_{\theta}$ an $S'_{\theta}$ are both e-variables, and hence measurable by definition). 
We may thus assume $L(\theta,\delta_{B}^{\circ}) =_{\theta} L(\theta,\delta_{B}^*)$.
%, and, since we can make this argument for all $\theta \in \Theta$, that $\delta_{B}^{\circ} =_{\w} \delta_{B}^*$.

The corollary of Lemma~\ref{lem:equalizer} above, applied with the corresponding $\theta$-GNP testing problem, 
now implies that for all $b \in \cB$ (hence not just for $B$!) we have $L_b(\theta, \delta^{\circ}_b) \leq_{\theta} L_b(\theta, \delta^*_b)$. Since we can  make this argument for all $\theta \in \Theta$, it follows that for all $b \in \cB$,  $\delta_{b}^{\circ} \leq_{\w} \delta_{b}^*$. Therefore $\delta^{\circ}$ is not Type-II strictly better than $\delta^*$; the theorem is proved. 
\end{proof}

\section{Supporting Information for Section~\ref{sec:eposterior}.\ref{sec:obayes}}
\paragraph{Proof for Claim underneath \eqref{eq:therealmccoy}}
Fix some $\epsilon > 0$.
For simplicity we fix $\theta^*=0$ and $n=1$ (so that $Y=X_1 = \hat\theta(X_1)$;) extension of the following argument to general sampling distributions $\theta^*$ and $n > 1$ is straightforward (for $\theta^* \neq 0$, use $Y' = Y- \theta^*$; for $n >1$, simply adjust the variance). 

We will construct $B(y)$ such that if $y =  \epsilon$, the CI  corresponding to $B(y)$ will be a single point at $\epsilon$; if $y$ gets larger, the CI widens but no matter how large $y$, it will never cover the true $\theta^*=0$. 
To this end, fix any strictly positive, strictly decreasing function $g_0$ with $g_0(\epsilon) = \epsilon$. 
%and $\lim_{y \rightarrow \infty} g_0(y) = 0$
We will take $B(y)$  such that $\delta_{B(Y)}(Y)$  has as its left-end $\Ltheta = g_{0}(y) = y - h(y)$  where $h(y) := y- g_0(y)$. The CI being by definition symmetric around $y$, we must then have  $\Rtheta  = y + h(y) = 2y - g_0(y)$.
Since for any $0 \leq \alpha \leq 1$, the $(1-\alpha)$-CI coincides with the $(1-\alpha)$ credible interval taken symmetrically around the MLE $\hat\theta = y$,  
both its left- and right-tail must have posterior weight $\alpha/2$. Since the posterior has a Gaussian density with mean $y$ and variance $1$, we must thus have  $\alpha/2 = \int_{- \infty}^{g_{0}(y)} f_{y}(u) d u$, where we by denote $f_{\mu}$ the density of a normal with variance $1$ and mean $\mu$, so $\alpha = 2 F_y(g_0(y)) = 2 F_0(- y+ g_0(y))$, where $F_{\mu}$ is the CDF of a normal with mean $\mu$ and variance $1$. It follows that $B(y)$ must be equal to $1/\alpha = 1 / (2 F_0(-y + g_0(y)))$

If data are actually sampled from $\theta^*=0$, 
then the expected loss we {\em actually\/} make can be calculated in steps as follows: 
\begin{align*}
& {\bf E}_{Y \sim P_{\theta^*}}[L_{B(Y)}(\theta^*,\delta_{B(Y)}(Y))] = 
{\bf E}_{Y \sim P_{0}}[B(Y) \cdot {\bf 1}_{0 \not \in \delta_{(B)}(Y)}]
\geq {\bf E}_{Y \sim P_{0}}[B(Y) \cdot {\bf 1}_{Y \geq \epsilon}]  \\
& =  \int_{\epsilon}^{\infty} f_{0}(y) \cdot \frac{1}{2 \cdot F_0(g_0(y) -y)) } d y \\
& \geq \frac{1}{2} \cdot 
\int_{\epsilon}^{\infty} \exp\left(- \frac{y^2}{2} \right) \cdot
\exp\left(\frac{(y- g_0(y))^2}{2}\right) (y - g_0(y)) d y
\\ & \geq  \sqrt{\frac{\pi}{2}} \cdot 
\int_{\epsilon}^{\infty} \exp\left(-  y g_0(y) \right) \cdot
(y - g_0(y)) d y,
%\underset{\epsilon \downarrow 0}{\rightarrow} \infty,
\end{align*}
where we used the standard result that, with $P_0$ denoting a standard normal distribution,  $P_0(Y \geq c) \leq \exp(-c^2/2)/(c \cdot \sqrt{2 \pi})$. 
Clearly the integral diverges for many choices of $g_0$ satisfying our requirements; for example, we can take $g_0(y) = \epsilon^2/y$ (which works for all $\epsilon > 0$) or (if we want to make the probability of large $B$ smaller) we can set $g_0(y) = \epsilon \cdot (\log (y+ \exp(\epsilon) - \epsilon))/y $ if $\epsilon$ is set to $2$; then $\exp(-y g_0(y)) = (y + \exp(2) - 2)^{-2}$.  
In the table in the main text we took the former choice to see how a typical sample of the $B$'s, and corresponding $\alpha$'s and CI's might look like. 

\paragraph{Proof that the average in \eqref{eq:inductivebehaviourci} may converge to infinity with probability 1} 
Assume a sequence of independent studies $Y_{(1)}, Y_{(2)}, \ldots$, all of which are of the form $Y_{(j)}= (X_{(j),1})$ and thus have sample size 1 (we can treat larger sample sizes, and simple sizes varying from study to study, by  thinking of the $Y_{(j)}$ as $z$-scores summarizing studies of varying sample size). Instantiate $\epsilon = 0.01$ and set $B_{(j)} := 1/(2 F_0(-Y_{(j)} + \epsilon^2/Y_{(j)}))$ as above if $Y_{(j)} \geq \epsilon$ and $B_{(j)}= 1$ otherwise. Suppose that the $Y_{(j)}$ are all independently sampled from the same $\theta^*= 0$. Here is a sample (generated i.i.d. by {\tt R}) of 20 corresponding $B_{(j)}$ 
(recall that for each $j$, the corresponding  produced interval $\delta_{B_{(j)}}(Y_{(j)})$ is equal to the standard $(1- \alpha_{(j)})$-confidence interval, with $\alpha_{(j)} = 1/B_{(j)}$): 
\begin{align}\label{eq:sample}
& 1.15 ,1 , 3.44 , 1.09,  1.91,  4.17, 10.40,  1.11  ,1 ,1 ,1  \nonumber \\
& 1.47,  1.31, 1 , 1 , 1,  2.28,  1.76  ,1,1,1
\end{align}
While the sequence looks rather innocuous, using \eqref{eq:therealmccoy}, with $A$ in $\hat\theta \pm A$ chosen by \eqref{eq:below}, we see the limit in \eqref{eq:inductivebehaviourci} will go a.s. to $\infty$ rather than to $1$. The example was deliberately designed to give an extreme discrepancy --- in more realistic examples, the difference will presumably not be infinite but without knowing the dependency between $Y$ and $B$ there is no way to assess it. 

\subsection*{Proof for Example~\ref{ex:prebexample}}\label{app:eposterior}
We first treat e-variables corresponding to one-sided tests, for which we can give exact results. 
To this end, let  $\meanmin < \mean < \meanplus$  be 
\begin{equation} \label{eq:nstarfunb}
  \frac{1}{2} 
n^* (\mean - \meanplus)^2 = \frac{1}{2}
n^* (\mean - \meanmin)^2 =
\log \frac{1}{\alpha^*}.
\end{equation}
(note that $\log (2/\alpha^*)$ in \eqref{eq:nstarfunb} in the main text has been replaced by $\log(1/\alpha^*)$ here). 

The  {\em uniformly most powerful Bayes factor\/} \cite{johnson2013uniformlyB} for a 1-sided test at sample size $n^*$ and level $\alpha^*$ of $\Hnull = \{P_{\mean}\}$ vs. $\Halt = \{P_{\mean'}: \mean' > \mean\}$ 
(or, $\Halt = \{P_{\mean'}: \mean' < \mean\}$  respectively) is given by $S^+_{\theta} := \frac{p_{\meanplus}(y)}{p_{\mean}(y)}$ (respectively, $S^ -_{\theta} := \frac{p_{\meanmin}(y)}{p_{\mean}(y)}$).
 Straightforward rewriting now gives:  
\begin{align}
S^+_{\theta}(y) & = 
 \frac{p_{\meanplus}(y)}{p_{\mean}(y)} = 
 \frac{e^{-n 
\log \frac{p_{\hat\theta}(y)}{p_\meanplus(y)}}}{
e^{-n \log \frac{p_{\hat\theta}(y)}{p_\theta(y)}}}
  %\nonumber \\ & 
  =   \frac{e^{-(n/2) (\hat\mean - \mean - U)^2}}
{e^{- (n/2) (\hat\theta - \theta)^2}}
 = 
\frac{e^{n \cdot  (\hat\mean - \mean) U }}{ e^{n U^2/2}
} %\nonumber \\ & 
\label{eq:onesided}
=  e^{-n U^2/2 +n (\hat\theta-\theta) U}
\end{align}
where 
$$U=\sqrt{ \frac{2  (\log (1/\alpha^*) }{ n^*}}= \sqrt{ \frac{2 \cdot c \cdot  \log (1/\alpha) }{ n}} \text{\ 
with\ } c= 
\frac{n^*}{n} \cdot \frac{\log (1/\alpha)}{\log(1/\alpha^*)}
.$$
We see that $S^+_{\theta}$ is strictly increasing in $\hat\theta$, so it is  $\geq 1/\alpha$ iff $\hat\theta \geq \theta_R$, where $\theta_R$ is the solution to
$$
 e^{-n U^2/2 +n (\theta_R-\theta) U} = \frac{1}{\alpha}. 
$$
Straightforward calculation shows that this is the case iff $\theta_R- \theta$ is equal to
\begin{align}\label{eq:border}
 \sqrt{\frac{2}{n} \cdot  \log(1/\alpha)} \cdot g(c)
\text{\ with\ } g(c) = \frac{c+1}{2\sqrt{c}} = \frac{1}{2} \left( c^{1/2} + c^{-1/2} \right).
\end{align}
An analogous calculation gives that $S^{-}_{\theta}$ is decreasing in $\hat\theta$ and $\geq 1/\alpha$ iff $\hat\theta \leq \theta_L$, with $\theta- \theta_L$ equal to \eqref{eq:border}. 

For the two-sided e-variable $S_{\theta} = (1/2) S^+_{\theta} + (1/2) S^-_{\theta}$, a sufficient condition for $S_{\theta} \geq 1/\alpha$ is then that
$$
S^+_{\theta} \geq \frac{2}{\alpha} \text{\  or\  } S^-_{\theta} \geq \frac{2}{\alpha}.
$$
Therefore, if we apply the above with $\alpha^{*'}$ set to $\alpha^*/2$ and $\alpha'$ set to $\alpha/2$, we get that $\theta^+, \theta^-$, $S^+_{\theta}, S^-_{\theta}$, $S_{\theta}$ and $c$ are now defined as in the main text, and a sufficient condition for $S_{\theta} \geq 1/\alpha$ is that 
\begin{equation}
    \label{eq:TheBound}
    |\hat\theta - \theta | \geq \sqrt{\frac{2}{n} \cdot  \log \frac{2}{\alpha}} \cdot g(c),
\end{equation}
which was our claim in the main text. 
Since for fixed $\alpha$,  for all but the smallest $n$, whenever $S^+_{\theta} \geq \frac{2}{\alpha}$, we have that $S^-_{\theta}$ must be very close to $0$, since it decreases exponentially in $n$ (and the same with $S^+_{\theta}$ and $S^-_{\theta}$ interchanged), we find that \eqref{eq:TheBound} is quite tight in practice.

\commentout{
Fix some GNP testing problem. When assessing whether a decision rule $\delta$ is compatible or maximally compatible with some e-collection $\ecol= \{S_{\theta}: \theta \in \Theta \}$, a pre-condition for this to have any chance of being the case is that for all $\theta \in \Theta$, all $b \in \cB$, we have $\inf_{a \in \cA_b} L_b(\theta,a) \leq \inf_{y \in \cY} S(y)$. If this is the case we say that the GNP testing problem allows for {\em potential compatibility} with $S$. For simple GNP testing problems, we have $L_b(\snull,0) = 0$ for all $b \in \cB$. Clearly, under that assumption, potential compatibility automatically holds.}

\section{Supporting Information for Section~\ref{sec:discussion}}
\paragraph{On the Type-I Risk Upper Bound $\ell$}
\commentout{
\paragraph{Where does the Type-I risk bound $\ell$ come from?}
Whereas $B$ may arbitrarily depend on data, the upper bound $\ell$ has to be set independently of the data after all. 
It may still vary from decision problem to decision problem though (in the SI Appendix we explain what this means in terms of Neyman's inductive behavior paradigm). In many practical testing problems, we might expect that for all $b \in \cB$, we have that $\cA_b$, contains a special action $0$, which stands for `do nothing' (keep the status quo),  which would then have the same Type-II loss under all $b \in \cB$, i.e. there is an $\ell'$ such that for all $b \in \cA_b$, $L_b(\alt,0)= \ell'$. We might then simply set $\ell = \ell'$, making sure  we expect our result (will all costs and benefits incorporated), whatever action we take, to be not worse than the result of doing nothing. We note that the developments in this paper work for arbitrary such $\ell$; setting $\ell=1$ in Section~\ref{sec:math} was only done for convenience (see the SI Appendix). 
}
Here we discuss why normalizing $\ell$ to 1 in \eqref{eq:TypeIRiskSafeSimplified} and \eqref{eq:TypeIRiskSafeSimplifiedB} is not harmful, and what we mean when we say (as we did in the discussion Section~\ref{sec:discussion}) that `$\ell$ can be chosen differently from problem to problem, but it needs to be chosen independently of the data observed in that problem'.

Suppose you are a statistician, performing hypotheses tests within a variety of domains. Let $Y_{(1)}, Y_{(2)}, \ldots$ be the sequence of samples, taking values in potentially different sets $\cY_{(1)}, \cY_{(2)}, \ldots,$ and associated with different null hypotheses $\cH_{(0)}(\snull), \cH_{(1)}(\snull),\ldots$ and  associated GNP testing problems, that you are confronted with in your professional career. 
We will assume the $Y_{(j)}$ are all independent. Each time $j$ that you perform a hypothesis test, policy makers provide you with an upper bound $\ell_{(j)}$ (e.g. set equal to $L(\alt,0)$, the cost of maintaining the status quo and doing nothing, see Section~\ref{sec:discussion}), that may depend on previous outcomes but, given $Y_{(1)}, \ldots, Y_{(j-1)},$ must be independent of $Y_{(j)}$. You also are given the loss function $L_{B_{(j)}}$ with associated action space $\cA_{B_{(j)}}$. You change this into loss function $L'_{B_{(j)}} := L_{B_{(j)}}/\ell_{(j)}$, and you advise an action by applying a maximally compatible decision rule $\delta_{(j)}$ relative to $L'_{B_{(j)}}$. 
By multiplying both sides in the definition of Type-I risk safety (\eqref{eq:TypeIRiskSafeSimplified} or \eqref{eq:TypeIRiskSafeSimplifiedB}) with $\ell_{(j)}$ again, we see that Theorem~\ref{thm:mainsimple} implies that at each time $j$, your Type-I risk is bounded by $\ell_{(j)}$, i.e. $\sup_{P \in \cH_{(j)}(\snull)} {\bf E}_{P}[L_{B_{(j)}}(\snull,\delta_{B_{(j)}}(Y_{(j)})] \leq \ell_{(j)}$. 

Now, suppose that an outside evaluating agency is interested in your performance whenever the imposed bound is close to some specific $\ell^*$. Thus, after you have engaged in $m$ hypothesis testing problems, they look at the subset $\cI_{m,\delta} := \{ j \in [m]: |\ell_{(j)} - \ell^* | \leq \delta \}$ for some small $\delta > 0$. 
Now, let us assume that the process determining the $\ell_{(j)}$s is such that $\lim_{m \rightarrow \infty} | \cI_{m,\delta}| = \infty$, almost surely, i.e.  a risk bound close to $\ell^*$  will eventually be chosen infinitely often. Then, by the strong law of large numbers, we also have that 
$$
 \lim \sup_{m \rightarrow \infty} \frac{1}{| \cI_{m,\delta} |} \sum_{j \in \cI_{m}} L_{B_{(j)}}(\snull,\delta_{B_{(j)}}(Y_{(j)})] \leq \ell^*+ \delta. 
$$
Thus, your e-value based statistical hypothesis tests have a Neymanian inductive behavior interpretation: as long as the bounds $\ell_{(j)}$ themselves do not depend on data $Y_{(j)}$, then in the long run, among all tests in which the imposed bound was within $\delta$ of  $\ell^*$, you will achieve average loss that is also within $\delta$ of $\ell^*$. In particular the normalization to $\ell_{(j)}=1$ in the definitions in Section~\ref{sec:math} does not affect this guarantee. 

\subsection{Evidential Interpretation of E-Values}\label{sec:evidence}
Most practitioners still interpret p-values in a Fisherian way, as a notion of evidence against the null. Although this interpretation has always been highly controversial, it is to some extent, and with caveats (such as `single isolated small p-value does not give substantial evidence' \cite{mayo2018statistical} or `only work with special, {\em evidential\/} p-values \cite{Greenland22}'), adopted by highly accomplished statisticians, including the late Sir David Cox \cite{Cox18,mayo2006frequentist}. Even Neyman \cite{Neyman76} has written `my own preferred substitute for `do not reject $H$' is `no evidence against $H$ is found'. In light of the results of this paper, one may ask if, perhaps, {\em e-values are more suitable than p-values\/} as such a measure. Although a proper analysis of such a claim warrants (at the very least) a separate paper, we briefly make the case here. At first sight this question may seem orthogonal to the Neymanian `inductive behavior'  stance adopted in this paper--- as has often been noted \cite{BergerS87,HubbardB03,Berger03,Hubbard04}, Fisher's and Neyman's interpretations of testing seem incompatible.
Nevertheless (echoing a point made by error statisticians \cite{mayo2018statistical} and likelihoodists \cite{royall1997statistical} alike), for any notion of `evidence the data provide about a hypothesis $\Hnull$' to be meaningful at all, there have to be circumstances, perhaps idealized, in which additional knowledge $\textsc{k}$ is available, and together with {\sc k}, the evidence can be operationalized into reliable decisions (for, if there were no such circumstances, obtaining `high evidence' for or against a claim could never have any empirical meaning whatsoever...). 
For the likelihoodists's notion of evidence  \cite{royall1997statistical,edwards1984likelihood}, i.e. a likelihood ratio between simple $\Hnull$ and $\Halt$, this additional knowledge {\sc k} would be a trustworthy prior probability on $\{\Hnull,\Halt\}$ --- once this is supplied, a DM can use Bayes'  theorem to come up with a posterior which can then lead to optimal decisions against arbitrary loss functions. 
For the notion of evidence against $\Hnull$ as a p-value, this {\sc k}  would comprise a guarantee that a specific, a priori fixed and known sampling plan,  would have been followed (otherwise the p-value would be undefined), and an  a priori specified $\alpha$, and knowledge that the decision would be of the simple form `accept'/`reject'. This {\sc k}, however, is additional knowledge of a {\em very\/} specific kind (essentially, what we called the BIND assumption). In other situations, it is not clear at all how to operationalize evidence-by-p-value into decisions. Now, if we accept e-values as evidence against the null, the set of circumstances under which we can operationalize the evidence is much wider, as shown in this paper. Having thus  direct empirical content in a wider variety of situations,  $e$ would seem preferable over $\pval$ (a). 

Note that I am not saying that evidence should {\em invariably\/} be a `stepping-stone' towards a decision\footnote{Thanks to a referee for prompting this important clarification.}; evidence seems a more general notion than that. I am only saying that if there are broad sets of circumstances in which it {\em is\/} a stepping stone, this may be a good rather than a bad thing.  

Add to this: (b) if $\Hnull$ and $\Halt$ are simple, the e-value {\em coincides} with the likelihood ratio, i.e. the main competing notion of evidence; (c) if $\Hnull$ is simple yet $\Halt$ is not, a special type of e-value coincides with a recently proposed Bayesian notion of evidence (the {\em support interval\/} \cite{PawelLW22,wagenmakers2020support}); (d) unlike Bayesian methods, e-values can be constructed even if no clear alternative can be formulated and if the setting is highly nonparametric; and (e) in contrast to p-values, e-values remain meaningful if some details of the sampling plan are unknown or unknowable and if information from several interdependent studies is combined \cite{GrunwaldHK19B,ramdas2023savi}. 
In fact, this may be the most important observation: if a scientific study is performed,  and, {\em because\/} the scientific study seemed promising, a second study was performed, then we would lie to report the evidence against the null provided by both studies taken together. Yet, while for e-values this is no problem (we can multiply the e-values of the individual studies), it is next to impossible to calculate a valid p-value for the two studies taken together  --- this is the main point of \cite{GrunwaldHK19B}. The fact that they cannot be calculated in such a standard scenario would seem to make them unsuitable as a notion of evidence. If we tae (a)---(d) together though, the case for e-values as evidence seems strong. 

A similar comment pertains to Mayo's {\em  error statistics\/} philosophy with its concept of {\em severe testing\/} \cite{mayo2006frequentist,mayo2018statistical,Yanofsky19}: currently, Mayo's notion of severity is, at least in simple cases, indirectly based on p-values \cite[page 144]{mayo2018statistical}. In light of the above, it might be preferable to use e-values instead.

\end{document}